\documentclass{vldb}
\usepackage{amsmath}
\usepackage{amsfonts}
\usepackage{amssymb}
\usepackage{multirow}
\usepackage{hyperref}
\usepackage{algorithm}
\usepackage{algorithmic}
\usepackage{graphicx}
\usepackage{url}
\usepackage[usenames]{color}
\usepackage{colortbl}

\graphicspath{{figures/}}
\title{Bayesian Inference under Differential Privacy}

\numberofauthors{2}
\author{
\alignauthor
Yonghui Xiao\\
       \affaddr{Emory University}\\
       \affaddr{Atlanta, GA, USA}\\
       \email{yonghui.xiao@emory.edu}\\
\alignauthor
Li Xiong\\
       \affaddr{Emory University}\\
       \affaddr{Atlanta, GA, USA}\\
        \email{lxiong@emory.edu}
}

\begin{document}
\maketitle
\begin{abstract}
Bayesian inference is an important technique throughout statistics.
The essence of Beyesian inference is to derive the posterior belief updated from prior belief by the learned information, which is a
set of differentially private answers under differential privacy. Although Bayesian inference can be used in a variety of applications,
it 
becomes theoretically hard to solve when the number of differentially private answers is large.
To facilitate Bayesian inference under differential privacy,
this paper proposes a systematic mechanism. The key step of the mechanism is the implementation of Bayesian updating
with the best linear unbiased estimator derived by Gauss-Markov theorem. In addition, we also apply the proposed inference mechanism into
an online query-answering system, the novelty of which is that the utility for users is guaranteed by Bayesian inference in the form of
credible interval and confidence level. Theoretical and experimental analysis are shown to demonstrate the efficiency and effectiveness of
both inference mechanism and online query-answering system.

%
%
\end{abstract}

\newtheorem{theorem}{Theorem}[section]
\newtheorem{lemma}{Lemma}[section]
\newtheorem{definition}{Definition}[section]
\newtheorem{corollary}{Corollary}[section]
\newtheorem{observation}{Observation}[section]
\section{Introduction}
\label{sec-intro}
Data privacy issues frequently and increasingly arise for data sharing and data analysis tasks. Among all the
privacy-preserving mechanisms, differential privacy has been widely accepted for its strong privacy guarantee
~\cite{DBLP:conf/tamc/Dwork08,cacm}. It requires that the outcome of any computations or queries is formally
indistinguishable when run with and without any particular record in the dataset.
To achieve differential privacy the answer of a query is perturbed by a random noise whose magnitude is determined by a parameter, privacy budget.

Existing mechanisms\cite{Dwork-calibrating,Dwork-dp,Blum-SuLQ,McSherry-PINQ} of differential privacy only has
 a certain bound of privacy budget to spend on all queries.
We define the privacy bound as {\bf overall privacy budget} and such system {\bf bounded differentially private system}.
To answer a query, the mechanism allocates some privacy budget(also called privacy cost of the query) for it.
 Once the overall privacy budget is exhausted,
either the database has to be shut down or any further query would be rejected.
To prevent budget depletion and extend the lifetime of such systems, users have the burden to allocate privacy budget for the system.
However, there is no theoretical clue on how to allocate budget so far.

To save privacy budget, existing work uses the correlated answers to make inference about  new-coming queries.
Current literature falls into one of the following:
1. {\bf Query oriented strategy.} Given a set of queries(or answers) and the bound of privacy budget, optimize answers according to the correlation of queries\cite{Optimizing-PODS,Adaptive-query-strategy,Improving-utility-PCA};
2. {\bf Data oriented strategy.} Given the privacy bound, maximize the global utility of released data\cite{boost-accuracy,DBLP:conf/icde/XiaoWG10,SDMpaper};
3. {\bf Inference oriented strategy.} Use traditional inference method, like MLE, to achieve an inference result or bound\cite{Adam-MLE-estimator,McSherry-Probabilistic-inference,how-much-is-enough}.
All existing work can only make point estimation, which provides limited usefulness due to lack of probability properties.
Some work gives an error tolerance of the inference, like $[\epsilon,\delta]$-usefulness\cite{Blum-learningapproach}.
However, Bayesian inference has not been achieved yet.

Bayesian inference is an important technique throughout statistics.
The essence of Beyesian inference is to derive the posterior belief updated from prior belief by the learned information, which is a
set of differentially private answers in this paper. With Bayesian inference, a variety of applications can be tackled. Some examples are shown as follows.
\begin{itemize}
\begin{item}
{\bf hypothesis testing.} If a user(or an attacker) makes a hypothesis that Alice's income is higher than $50$ thousand dollars, given a set of Alice's noisy income
\footnote{It complies with the assumption of differential privacy that an adversary knows the income of all other people but Alice.}
, what is the probability that the hypothesis holds?
\end{item}
\begin{item}
{\bf credible interval and confidence level.} If a user requires an interval in which the true answer lies with confidence level $95\%$,
how can we derive such an interval satisfying the requirement?
\end{item}
\end{itemize}
Other applications, like regression analysis, statistical inference control and statistical decision making, can also be facilitated by Bayesian inference.

%
%

Although having so many applications,
Bayesian inference becomes theoretically hard to solve when the number of differentially private answers, denoted as ``history'' queries, is large.
This phenomenon is often referred as the curse of high dimensionality if we treat each answer in ``history'' as an independent dimension.
Because the derivation of posterior belief involves a series of integrals of probability function
\footnote{For convenience and clearance, probability function only has two meanings in this paper: probability density function for continuous variables and probability mass function for discrete variables.}
, we show later in this paper the complexity of the probability function
although it can be processed in a closed form\footnote
{A closed form expression can be defined by a finite number
of elementary functions(exponential, logarithm, constant, and nth root functions) under operators $+,-,\times,\div$.}.

{\bf Contributions.} This paper proposes a systematic mechanism to achieve Bayesian inference under differential privacy.
Current query-answering mechanisms, like Laplace mechanism\cite{Dwork-calibrating} and exponential mechanism\cite{McSherry-mechanism}, have also been incorporated in our approach.
According to these mechanisms, a set of ``history'' queries and answers with arbitrary noises can be given in advance.
The key step of our mechanism is the implementation of Bayesian updating about a new-coming query using the set of ``history'' queries and answers.
In our setting, uninformative prior belief is used, meaning we do not assume any evidential prior belief about the new query.
At first,
a BLUE(Best Linear Unbiased Estimator) can be derived
using Gauss-Markov theorem or Generalized Least Square method.
Then we propose two methods, Monte Carlo(MC) method and Probability Calculation(PC) method,
to approximate the probability function.
At last, the posterior belief can be derived by updating the prior belief using the probability function.
 Theoretical and experimental analysis have been given to show the efficiency and accuracy of two methods.

The proposed inference mechanism are also applied in an online utility driven query-answering system.
First,
it can help users specify the privacy budget by letting users demand the utility requirement in the form of credible interval and confidence level.
The key idea is to derive the credible interval and confidence level from the history queries using Bayesian inference.
If the derived answer satisfies user's requirement, then no budget needs to be allocated because the estimation can be returned.
Only when the estimation can not meet the utility requirement, the query mechanism is invoked for a differentially private answer.
In this way, not only the utility is guaranteed for users, but also the privacy budget can be saved so that the lifetime of the system can be extended.
Second, We further save privacy budget by allocating the only \textit{necessary} budget calculated by the utility requirement to a query.
Third, the overall privacy cost of a system is also measured to determine
whether the system can answer future queries or not. Experimental evaluation has been shown to demonstrate the utility and efficiency.

All the algorithms are implemented in MATLAB, and all the functions' names are consistent with MATLAB.

\section{Preliminaries and definitions}
\label{sec-preliminaries}

We use bold characters to denote vector or matrix, normal character to denote one row of the vector or matrix;
subscript $i,j$ to denote the $ith$ row, $jth$ column of the vector or matrix;
operator $[\cdot]$ to denote an element of a vector;
$\theta,\hat{\theta}$ to denote the true answer and estimated answer respectively;
$\mathcal{A}_Q$ to denote the differentially private answer of Laplace mechanism. 

%

\subsection{Differential privacy and Laplace mechanism}
\label{cha2:dp}
\begin{definition}[$\alpha$-Differential privacy \cite{Dwork-dp}]
\label{def-DP}
A data access mechanism $\mathcal{A}$ satisfies $\alpha$-differential privacy\footnote{Our definition is consistent with the unbounded model in\cite{Kifer-no-free-lunch}.}
if for any neighboring databases
 $D_1$ and $D_2$, for any query
function $Q$, $r\subseteq Range(Q)$\footnote{Range(Q) is the domain of the differentially private answer of Q.}, $\mathcal{A}_Q(D)$ is the mechanism to return an answer to query $Q(D)$,
\begin{equation}
\operatorname{sup}_{r\subseteq Range(Q)}\frac{Pr(\mathcal{A}_Q(D)=r | D=D_1)}{Pr(\mathcal{A}_Q(D)=r | D=D_2)}\leq e^{\alpha}
\end{equation}
$\alpha$ is also called privacy budget.
\end{definition}
Note that it's implicit in the definition that the privacy parameter $\alpha$ can be made public.
%

\begin{definition}[Sensitivity]
For arbitrary neighboring databases $D_1$ and $D_2$, the sensitivity of
a query $Q$ is the maximum difference between the query results of
$D_1$ and $D_2$,
\begin{displaymath}
S_Q=max|Q(D_1)-Q(D_2)|
\end{displaymath}
\end{definition}

For example, if one is interested in the population size of $0\sim30$ old, then we can pose this query:
\begin{itemize}
\item
$\textbf{Q}_1$: select count(*) from data where $0\leq age\leq 30$
\end{itemize}
$\textbf{Q}_1$ has sensitivity 1 because any change of 1 record can only affect the result by 1 at most.

\begin{definition}[Laplace mechanism]
{Laplace mechanism}\cite{Dwork-calibrating} is such a data access mechanism that given a user query $Q$ whose true answer is $\theta$, the returned answer $\tilde{\theta}$ is
\begin{align}
\label{fml-lap-mec}
\mathcal{A}_Q(D)=\theta+\tilde{N}(\alpha/S)
\end{align}
\end{definition}
$\tilde{N}(\alpha/S)$ is a random noise of Laplace distribution.
If $S=1$, the noise distribution is like equation (\ref{fml-Laplace-distribution}).
\begin{align}
\label{fml-Laplace-distribution}
f(x,\alpha)=\frac{\alpha}{2}exp(-{\alpha |x|});
\end{align}%
For $S\neq 1$, replace $\alpha$ with $\alpha/S$ in equation (\ref{fml-Laplace-distribution}).
\begin{theorem}[\cite{Dwork-calibrating,Dwork-dp}]
\label{theo-Lap-DP}
Laplace mechanism achieves $\alpha$-differential privacy, meaning that adding $\tilde{N}(\alpha/S_Q)$ to $Q(D)$ guarantees $\alpha$-differential privacy.
\end{theorem}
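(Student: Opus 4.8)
The plan is to bound the output-density ratio directly, exploiting the fact that the Laplace mechanism returns a value whose density at any point $r$ depends only on how far $r$ sits from the true answer. First I would observe that when the true answer is $\theta = Q(D)$, the returned value $\mathcal{A}_Q(D) = \theta + \tilde{N}(\alpha/S_Q)$ has density $f(r-\theta)$, where $f$ is the Laplace density obtained from equation (\ref{fml-Laplace-distribution}) by replacing $\alpha$ with $\alpha/S_Q$, namely
\[
f(r-\theta) = \frac{\alpha}{2 S_Q}\exp\!\left(-\frac{\alpha}{S_Q}\,|r-\theta|\right).
\]

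Next, for neighboring databases $D_1$ and $D_2$ with true answers $\theta_1 = Q(D_1)$ and $\theta_2 = Q(D_2)$, I would form the ratio of the two output densities evaluated at a common point $r$. The normalizing constant $\alpha/(2S_Q)$ cancels, leaving
\[
\frac{Pr(\mathcal{A}_Q(D)=r \mid D=D_1)}{Pr(\mathcal{A}_Q(D)=r \mid D=D_2)}
= \exp\!\left(\frac{\alpha}{S_Q}\bigl(|r-\theta_2| - |r-\theta_1|\bigr)\right).
\]
The crux is to bound the exponent. By the reverse triangle inequality, $|r-\theta_2| - |r-\theta_1| \le |\theta_1 - \theta_2|$, and by the definition of sensitivity this is at most $S_Q$. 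Substituting yields an exponent of at most $(\alpha/S_Q)\cdot S_Q = \alpha$, so the ratio is bounded by $e^{\alpha}$, which is exactly the condition in Definition \ref{def-DP}.

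I do not anticipate a deep obstacle here; the argument is essentially one inequality. The only points requiring care are, first, that the reverse triangle inequality is applied with $a = r-\theta_2$ and $b = r-\theta_1$ so that $a-b = \theta_1 - \theta_2$, and second, that the resulting bound $e^{\alpha}$ is \emph{uniform} in $r$. Because the bound does not depend on the particular output, it survives the supremum over $r \subseteq Range(Q)$ taken in Definition \ref{def-DP}, and the same computation specializes verbatim to the $S_Q = 1$ case of equation (\ref{fml-Laplace-distribution}).
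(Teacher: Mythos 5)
Your proof is correct. Note that the paper itself does not prove Theorem \ref{theo-Lap-DP} at all — it is stated as an imported result with citations — and your density-ratio argument (cancel the normalizing constant, bound the exponent by the reverse triangle inequality and the sensitivity $S_Q$, observe the bound $e^{\alpha}$ is uniform in $r$ so it survives the supremum in Definition \ref{def-DP}) is exactly the standard proof from the cited literature, so it matches the canonical argument the paper is relying on.
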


%
%
%
%

\subsection{Utility}
\label{cha2:utility}

For a given query $Q$, denote $\theta$ the true answer of $Q$.
By some inference methods, a point estimation $\hat{\theta}$ can be obtained by some metrics, like Mean Square Error(MSE) $\operatorname{E}[(\hat{\theta}-\theta)^2]$.
Different from point estimation, interval estimation specifies instead a range within which the answer lies.
We introduce $(\epsilon,\delta$)-usefulness\cite{Blum-learningapproach} first, then show that it's actually a special case of credible interval, which will be used in this paper.

\begin{definition}[$(\epsilon,\delta$)-usefulness \cite{Blum-learningapproach}]
A query answering mechanism is $(\epsilon,\delta)$-useful for
query $Q$ if
$Pr(|\hat{\theta}-\theta|\leq\epsilon)\geq 1-\delta$.
\end{definition}
\begin{definition}[credible interval\cite{book-statistical-inference}]
\label{definition-usefulness}
Credible interval with confidence level $1-\delta$ is a range $[L,U]$
with the property:
\begin{align}
Pr(L\leq \theta \leq U)\geq 1-\delta
\end{align}
\end{definition}

In this paper, we let $2\epsilon=U-L$ to denote the length of credible interval.
Note when $\hat{\theta}$ is the midpoint of $L(\tilde{\Theta})$ and $U(\tilde{\Theta})$, these two definitions are the same and have equal $\epsilon$.
Thus $(\epsilon,\delta)$-usefulness is a special case of credible interval.
An advantage of credible interval is that users can specify certain intervals to calculate confidence level. For example, a user may be interested in the probability that $\theta$ is larger than $10$.

In our online system, the utility requirement is the
$1-\delta$ credible interval whose parameter $\delta$ is specified by a user to demand the confidence level of returned answer.
Intuitively, the narrower the interval, the more useful the answer. Therefore, we also let users specify the parameter $\epsilon$ so that
the length of interval can not be larger than $2\epsilon$, $U-L\leq 2\epsilon$.

\subsection{Bayesian Updating}

The essence of Beyesian inference is to update the prior belief by the learned information(observations),
which is actually a set of differentially private answers in this paper.
Assume a set of observations is given as $\tilde{\theta}_1,\tilde{\theta}_2,\cdots,\tilde{\theta}_n$. In step $i$,
denoted $f_{i-1}(\theta)$ the prior belief of $\theta$, which can be updated using the observation $\tilde{\theta_i}$ by Bayes' law as follows:
\begin{align}
\label{fml-updating}
f_i(\theta)=f_{i-1}(\theta)\frac{Pr(\tilde{\theta}_i=\theta|\theta)}{Pr(\tilde{\theta}_i)}
\end{align}
where $f_i(\theta)$ is the posterior belief at step $i$. It is also the prior belief at step $i+1$.

Above equation justifies why we use uninformative prior belief(also called a priori probability\cite{book-statistical-inference} in some literature) as the original prior belief
because any informative(evidential) prior belief can be updated from uninformative belief. It is the same we take either the informative prior belief
or the posterior belief updated from uninformative prior belief as the prior belief.
%

\subsection{Data model and example}
\label{cha3:sec-model-intro}
%

\subsubsection{Data}%

Consider a dataset with $N$ nominal or discretized attributes, we use an $N$-dimensional data cube,
also called a base cuboid in the data warehousing literature \cite{JWH-dataming,Ding:2011:DPD:1989323.1989347},
to represent the aggregate information of the data set.  The records are the points in the $N$-dimensional data space.
Each cell of a data cube represents an aggregated measure, in our case, the count of the data points corresponding to
the multidimensional coordinates of the cell. We denote the number of cells by $n$ and
$n = |dom(A_1)| * \dots * |dom(A_N)|$ where $|dom(A_i)|$ is the domain size of attribute $A_i$.

\begin{figure}[h!]
\centering
\hspace{-0.8cm}
\includegraphics[width=10cm]{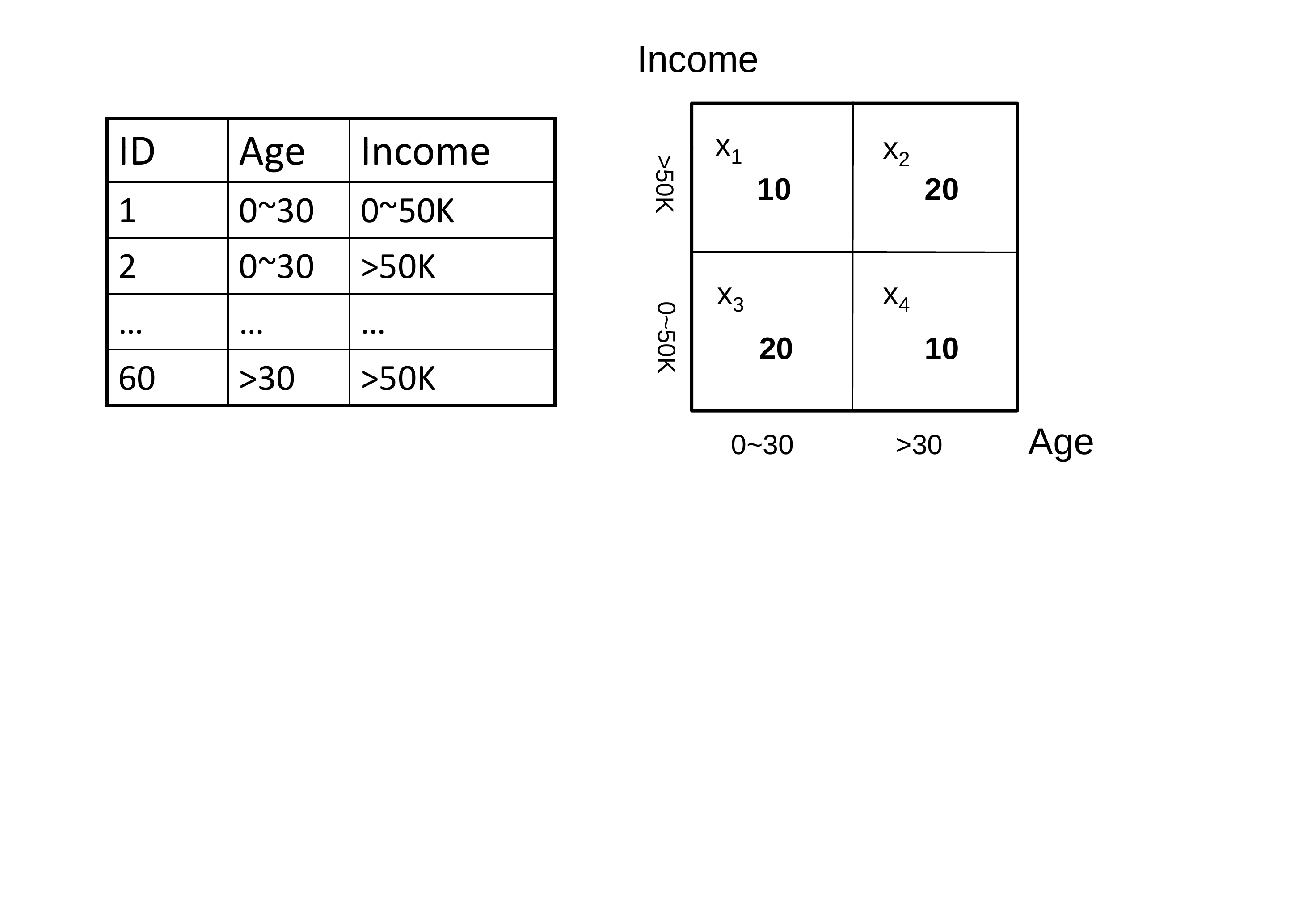}
\vspace{-4cm}
\caption{\small Example original data represented in a relational table (left) and a 2-dimensional count cube (right)}
\label{Fig_exampledata}
\end{figure}

\begin{itemize}
\item
{\bf Example.} Figure \ref{Fig_exampledata} shows an example relational dataset with attribute age and income (left) and a two-dimensional
count data cube or histogram (right). The domain values of age are $0\scriptsize{\sim}30$ and $>30$;
the domain values of income are $0\scriptsize{\sim}50K$ and $>50K$.
Each cell in the data cube represents the population count corresponding to the age and income values.
We can represent the original data cube, e.g. the counts of all cells, by an $n$-dimensional column vector $\textbf{x}$ shown below.

\begin{small}
\begin{equation}
\label{eqn_x}
\textbf{x}=
\left[
\begin{array}{ccccccccc}
10&20&20&10
\end{array}
\right]^T
\end{equation}
\end{small}
\end{itemize}
\subsubsection{Query}
We consider linear queries that compute a linear combination of the count values in the data cube based on a query predicate.

\begin{definition}[Linear query \cite{Optimizing-PODS}] A set of linear queries $\textbf{Q}$ can be represented as an $q\times n$-dimensional matrix
$\textbf{Q} = [\textbf{Q}_1; \dots ;Q_q]$ with each $Q_i$ is a coefficient vector of $\textbf{x}$.
The answer to $\textbf{Q}_i$ on data vector $\textbf{x}$ is the product
$\textbf{Q}_i\textbf{x}$ = $\textbf{Q}_{i1}\textbf{x}_1 +\textbf{Q}_{i2}\textbf{x}_2+ \dots + \textbf{Q}_{in}\textbf{x}_n$.
\end{definition}
\begin{itemize}
\item
{\bf Example.} If we know people with $income\leq 50K$ don't pay tax, people with $income>50K$ pay 2K dollars,
and from the total tax revenue 10K dollars are invested for education each year,  then
to calculate the total tax, following query $\textbf{Q}_2$ can be issued.
\begin{itemize}
\item
$\textbf{Q}_2$:    select 2$*$count() where $income>50$
    $\ \ -$ 10
\end{itemize}
\begin{align}
\label{fml-query-Q2}
\begin{small}
\textbf{Q}_2=\left[
\begin{array}{cccc}
2 &2&0&0
\end{array}
\right]
\end{small}
\end{align}
\vspace{-0.5cm}
\begin{align}
\label{fml-AQ2}
\mathcal{A}_{Q_2}=\textbf{Q}_2 \cdot \textbf{x}+\tilde{N}(\alpha/S_{Q_2})
\end{align}
$\textbf{Q}_2$ has sensitivity $2$ because any change of $1$ record can affect the result by $2$ at most.
Note that the constant $10$ does not affect the sensitivity of $\textbf{Q}_2$ because it won't vary for any record change
\footnote{So we ignore any constants in $\textbf{Q}$ in this paper.}.
Equation (\ref{fml-query-Q2}) shows the query vector of $\textbf{Q}_2$.
The returned answer $\mathcal{A}_{Q_2}$
 from Laplace mechanism are shown in equation (\ref{fml-AQ2}) where $\alpha$ is the privacy cost of $\textbf{Q}_2$.
\end{itemize}

%
Given a $m \times n$ query ``history'' \textbf{H}, the query answer for \textbf{H} is a length-$m$ column vector of query results, which can be computed as the matrix product \textbf{Hx}.



%

\subsubsection{Query history and Laplace mechanism}
%
For different queries, sensitivity of each query vector may not be necessarily the same.
Therefore, we need to compute the sensitivity of
each query  by Lemma \ref{lemma-GS}.

\begin{lemma}
\label{lemma-GS}
For a linear query  $\textbf{Q}$, the sensitivity $S_{\textbf{Q}}$ is $max(abs(\textbf{Q}))$.
\end{lemma}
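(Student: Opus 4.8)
The plan is to argue directly from the data cube model and the definition of sensitivity, exploiting the linearity of the query. First I would fix the representation: since the cells of the $N$-dimensional data cube partition the attribute domain, each record falls into exactly one cell, and the data vector $\textbf{x}$ simply counts the records per cell. Under the unbounded neighboring model adopted in this paper, two neighboring databases $D_1$ and $D_2$ differ by the addition or removal of a single record. Because that record belongs to exactly one cell, the associated count vectors $\textbf{x}_1$ and $\textbf{x}_2$ agree in every coordinate except one, say coordinate $j$, where they differ by $\pm 1$.

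Next I would invoke the linearity of the query answer $\textbf{Q}\textbf{x}$. For the neighboring pair above,
\begin{align}
|\textbf{Q}\textbf{x}_1 - \textbf{Q}\textbf{x}_2| = |\textbf{Q}(\textbf{x}_1 - \textbf{x}_2)| = |\textbf{Q}_j|,
\end{align}
since $\textbf{x}_1 - \textbf{x}_2$ is, up to sign, the $j$th standard basis vector. This reduces the difference for a fixed neighboring pair to a single entry of the coefficient vector. Taking the maximum demanded by the definition of sensitivity, and noting that ranging over all neighboring pairs is the same as ranging over which cell $j$ is modified, I would conclude
\begin{align}
S_{\textbf{Q}} = \max_{D_1,D_2} |\textbf{Q}\textbf{x}_1 - \textbf{Q}\textbf{x}_2| = \max_j |\textbf{Q}_j| = max(abs(\textbf{Q})).
\end{align}
Here the $\leq$ direction is immediate from the per-pair computation, while the matching $\geq$ direction follows by exhibiting a neighboring pair that modifies the cell $j^*$ attaining $\max_j|\textbf{Q}_j|$; such a pair always exists because one may add a record to any chosen cell.

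The main obstacle is not computational but definitional: one must argue carefully that a single-record change perturbs exactly one coordinate of $\textbf{x}$ by exactly one. This hinges on two facts established earlier, namely that the data cube cells form a partition of the attribute space (so each record maps to a unique cell) and that the unbounded neighboring model is used (so neighbors differ by an added or removed record rather than by a substitution, which would perturb two coordinates). Once this is pinned down, the linearity step and the maximization are routine, and the formula $S_{\textbf{Q}} = max(abs(\textbf{Q}))$ follows.
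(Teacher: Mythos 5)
Your proof is correct, and it formalizes exactly the reasoning the paper relies on: the paper states Lemma \ref{lemma-GS} without a formal proof, justifying it only through its examples (``any change of 1 record can only affect the result by 1 at most'' for $\textbf{Q}_1$, and by 2 for $\textbf{Q}_2$), which is precisely your observation that under the unbounded neighboring model a single added or removed record changes one coordinate of $\textbf{x}$ by $\pm 1$, so the answer changes by $|\textbf{Q}_j|$ and the maximum over cells is attained. Your explicit treatment of the two directions of the equality and of the unbounded-versus-substitution distinction is a welcome tightening of what the paper leaves implicit.
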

where function ``$abs()$'' to denote the absolute value of a vector or matrix, meaning that for all $j$, $[abs(\textbf{Q})]_j=|\textbf{Q}_j|$.
$max(abs(\textbf{Q}))$ means the maximal value of $abs(\textbf{Q})$.

We can write the above equations in matrix form.
Let $\textbf{y}$, $\textbf{H}$, $\boldsymbol\alpha$ and $\textbf{S}$ be the matrix form of
$\mathcal{A}_Q$, $\textbf{Q}$, $\alpha$ and $S$ respectively. According to Lemma \ref{lemma-GS}, $\textbf{S}=max_{row}(abs(\textbf{H}))$
where $max_{row}(abs(\textbf{H}))$ is a column vector whose items are the maximal values of each row in $abs(\textbf{H})$.

Equations(\ref{fml-y}, \ref{fml-SH}) summarize the relationship of $\textbf{y}$, $\textbf{H}$, $\textbf{x}$, $\tilde{\textbf{N}}$ and $\boldsymbol\alpha$.
\begin{align}
\label{fml-y}
\textbf{y}=\textbf{Hx}+\tilde{\textbf{N}}(\boldsymbol\alpha./\textbf{S})
\\
\label{fml-SH}
\textbf{S}=max_{row}(\textbf{abs(\textbf{H})})
\end{align}
where
operator ./ means $\forall i$,
$[\boldsymbol\alpha./\textbf{S}]_i=\boldsymbol\alpha_i/S_i$.
\begin{itemize}
\item
{\bf Example.} Suppose we have a query history matrix $\textbf{H}$ as equation (\ref{fml-H}), $\textbf{x}$ as equation (\ref{eqn_x})
and the privacy parameter $\textbf{A}$ used in all query vector shown in equation (\ref{fml-A}).

\begin{align}
\small
\label{fml-H}
\textbf{H}=
\left[
\begin{tabular}{c
ccc
}
1 & 1& 0 &0  \\
0& 0 &1 &1\\
0& 0 &0 &1 \\
0 &0 &1 &0 \\
0 &1 &0 &1\\
2 &1& 0& 0 \\
0& 0 &2 &-1\\
0 &-1 &0 &1 \\
\end{tabular}
\right]
\end{align}
\begin{align}
\label{fml-A}
\begin{small}
\boldsymbol\alpha=
\left[
\begin{tabular}{cccccccc}
.05&.1&.05&.1&.1&.05&.05& .1
\end{tabular}
\right]^T
\end{small}
\end{align}

Then according to Lemma \ref{lemma-GS}, $\textbf{S}$ can be derived in equation (\ref{fml-S}).
\begin{align}
\label{fml-S}
\textbf{S}=\left[
\begin{array}{ccccccccc}
1& 1& 1& 1& 1& 2& 2&1
\end{array}
\right]^T
\end{align}

Therefore, the relationship of $\textbf{y}$, $\textbf{H}$, $\textbf{A}$ and $\textbf{S}$ in equation (\ref{fml-y}) can be illustrated in equation (\ref{fml-HI-color})
\begin{align}
\tiny
\label{fml-HI-color}
\left[
\begin{array}{c}
\textbf{y}_1\\
\textbf{y}_2\\
\textbf{y}_3\\
\textbf{y}_4\\
\textbf{y}_5\\
\textbf{y}_6\\
\textbf{y}_7\\
\textbf{y}_8\\
\end{array}
\right]
=
\left[
\begin{tabular}{c
ccc
}
1 & 1& 0 &0  \\
0& 0 &1 &1\\
0& 0 &0 &1 \\
0 &0 &1 &0 \\
0 &1 &0 &1\\
2 &1& 0& 0 \\
0& 0 &2 &-1\\
0 &-1 &0 &1 \\
\end{tabular}
\right]
\textbf{x}
+
\tilde{\tiny{N}}
\left(
\begin{tabular}{l}
0.05/1\\
    0.1/1\\
    0.05/1\\
    0.1/1\\
    0.1/1\\
    0.05/2\\
    0.05/2\\
    0.1/1\\
\end{tabular}
\right)
\end{align}
\end{itemize}

\section{the Best Linear Unbiased Estimator}
\label{sec-online}
Given a set of target queries as $\textbf{Q}$ and equations (\ref{fml-y},\ref{fml-SH}), the BLUE can be derived in the following equation by Gauss-Markov theorem or Generalized Least Square method so that
 $\operatorname{MSE}(\hat{\textbf{x}})$ can be minimized.
%
\begin{align}
\label{eqn-x-hat}
\hat{\textbf{x}}=(\textbf{H}^Tdiag^2(\boldsymbol\alpha./\textbf{S})\textbf{H})^{-1}\textbf{H}^Tdiag^2(\boldsymbol\alpha./\textbf{S})\textbf{y}
\end{align}
where function $diag()$ transforms a vector to a matrix with each of
element in diagonal; $diag^2()=diag()*diag()$.
%
For estimatability, we assume $rank(\textbf{H})=n$. Otherwise $\textbf{x}$ may not be estimable
\footnote{However, $\Theta=\textbf{Qx}$ may also be estimable
iff $\textbf{Q}=\textbf{Q}(\textbf{H}^T\textbf{H})^-\textbf{H}^T\textbf{H}$ where $(\textbf{H}^T\textbf{H})^-$
is the generalized inverse of $\textbf{H}^T\textbf{H}$.}.

\begin{theorem}
\label{theo-hat-theta-BLUE}
To estimate $\Theta=\textbf{Qx}$, which is a $q\times 1$ query vector, $\textbf{Q}\in \mathbb{R}^{q\times n}$, the linear
estimator $\hat{\Theta}=\textbf{Ay}$ achieves minimal MSE when
\begin{align}
\textbf{A}=\textbf{Q}(\textbf{H}^Tdiag^2(\boldsymbol\alpha./\textbf{S})\textbf{H})^{-1}\textbf{H}^Tdiag^2(\boldsymbol\alpha./\textbf{S})
\end{align}
\end{theorem}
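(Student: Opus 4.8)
The plan is to recognize the statement as the Gauss-Markov / Generalized Least Squares theorem specialized to the noise model of equation (\ref{fml-y}); it generalizes the estimator $\hat{\textbf{x}}$ of equation (\ref{eqn-x-hat}) (the case $\textbf{Q}=\textbf{I}$) from estimating $\textbf{x}$ to estimating the linear target $\Theta=\textbf{Qx}$. The argument splits into a covariance computation followed by the standard optimality argument. First I would pin down the second-order structure of the noise vector $\tilde{\textbf{N}}(\boldsymbol\alpha./\textbf{S})$. Since each component is an independent Laplace variable with density parameter $\alpha_i/S_i$, it has mean zero and variance $2S_i^2/\alpha_i^2$, and independence across queries makes the covariance matrix diagonal. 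Writing this compactly gives $\boldsymbol\Sigma := \operatorname{Cov}(\tilde{\textbf{N}}) = 2\,[diag^2(\boldsymbol\alpha./\textbf{S})]^{-1}$, so that $\boldsymbol\Sigma^{-1}=\frac{1}{2}\,diag^2(\boldsymbol\alpha./\textbf{S})$ is proportional to $diag^2(\boldsymbol\alpha./\textbf{S})$. The scalar factor cancels in the estimator formula, which is why the final expression for $\textbf{A}$ carries $diag^2(\boldsymbol\alpha./\textbf{S})$ rather than $\boldsymbol\Sigma^{-1}$ itself.

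Next I would reduce ``minimal MSE'' to a constrained optimization over the estimator matrix. Because $\operatorname{E}[\textbf{y}]=\textbf{Hx}$, requiring $\hat{\Theta}=\textbf{Ay}$ to be unbiased for $\Theta=\textbf{Qx}$ for every $\textbf{x}$ is equivalent to the linear constraint $\textbf{AH}=\textbf{Q}$. Under this constraint the estimator has no bias, so its MSE equals $\operatorname{tr}(\textbf{A}\boldsymbol\Sigma\textbf{A}^T)$, and the task becomes minimizing this quantity subject to $\textbf{AH}=\textbf{Q}$. Here I would invoke the standing assumption $rank(\textbf{H})=n$ to guarantee that $\textbf{H}^T diag^2(\boldsymbol\alpha./\textbf{S})\textbf{H}$ is invertible, so the candidate matrix in the statement is well defined.

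The core of the proof is the usual Gauss-Markov perturbation argument. I would first verify that the $\textbf{A}$ of the statement satisfies $\textbf{AH}=\textbf{Q}$ by direct substitution. Then I would take any competing unbiased estimator $\textbf{B}$ and write $\textbf{B}=\textbf{A}+\textbf{D}$, where the constraints $\textbf{AH}=\textbf{Q}$ and $\textbf{BH}=\textbf{Q}$ force $\textbf{DH}=0$. Expanding $\textbf{B}\boldsymbol\Sigma\textbf{B}^T=(\textbf{A}+\textbf{D})\boldsymbol\Sigma(\textbf{A}+\textbf{D})^T$, the decisive step — and the one I expect to be the main obstacle to state cleanly — is showing the cross term $\textbf{A}\boldsymbol\Sigma\textbf{D}^T$ vanishes. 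This is exactly where the weighting collapses: writing $\textbf{A}=\textbf{Q}(\textbf{H}^T\boldsymbol\Sigma^{-1}\textbf{H})^{-1}\textbf{H}^T\boldsymbol\Sigma^{-1}$, the product $\boldsymbol\Sigma^{-1}\boldsymbol\Sigma$ reduces to the identity and leaves $\textbf{Q}(\textbf{H}^T\boldsymbol\Sigma^{-1}\textbf{H})^{-1}(\textbf{DH})^T=0$. With the cross term gone, the covariance becomes $\textbf{A}\boldsymbol\Sigma\textbf{A}^T+\textbf{D}\boldsymbol\Sigma\textbf{D}^T$, and since $\boldsymbol\Sigma\succ 0$ the extra term $\textbf{D}\boldsymbol\Sigma\textbf{D}^T$ is positive semidefinite; taking traces shows the MSE is minimized exactly when $\textbf{D}=0$, i.e. at the stated $\textbf{A}$. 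Finally I would substitute $\boldsymbol\Sigma^{-1}=\frac{1}{2}\,diag^2(\boldsymbol\alpha./\textbf{S})$ and cancel the scalar to recover $\textbf{A}=\textbf{Q}(\textbf{H}^Tdiag^2(\boldsymbol\alpha./\textbf{S})\textbf{H})^{-1}\textbf{H}^Tdiag^2(\boldsymbol\alpha./\textbf{S})$.
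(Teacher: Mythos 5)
Your proposal is correct and follows exactly the route the paper intends: the paper gives no explicit proof of Theorem \ref{theo-hat-theta-BLUE}, simply deriving the estimator ``by Gauss-Markov theorem or Generalized Least Square method,'' and your argument---computing $\operatorname{Cov}(\tilde{\textbf{N}})=2\,[diag^2(\boldsymbol\alpha./\textbf{S})]^{-1}$ so that the weight matrix is proportional to $diag^2(\boldsymbol\alpha./\textbf{S})$, reducing unbiasedness to the constraint $\textbf{AH}=\textbf{Q}$, and running the standard perturbation argument with the vanishing cross term---is precisely the classical proof of that cited theorem specialized to this Laplace noise model, including the correct use of $rank(\textbf{H})=n$ for invertibility. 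The one reading you made implicitly, that ``minimal MSE'' means minimal among linear \emph{unbiased} estimators, is the right one given the paper's BLUE framing (an unconstrained minimum-MSE linear estimator would generally be biased), so there is no gap.
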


In the rest of the paper, we will focus on a query. Thus we assume $\textbf{Q}\in \mathbb{R}^{1\times n}$ and $\theta=\textbf{Qx}$.
However, this work can be easily extended to
multivariate case using multivariate Laplace distribution\cite{ML-distribution}.
\begin{corollary}
\label{corollary-theta-unbiased}
$\hat{\theta}$ is unbiased, meaning $\operatorname{E}(\hat{\theta})=\theta$.
\end{corollary}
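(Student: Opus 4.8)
The plan is to exploit the linearity of both the estimator and the expectation operator, so that the claim collapses to a purely algebraic identity about the matrix $\textbf{A}$ from Theorem~\ref{theo-hat-theta-BLUE}. The only probabilistic fact I need is that the Laplace distribution in equation~(\ref{fml-Laplace-distribution}) is symmetric about the origin, hence each component of $\tilde{\textbf{N}}(\boldsymbol\alpha./\textbf{S})$ has mean zero. Taking expectations in the data model~(\ref{fml-y}) then gives $\operatorname{E}(\textbf{y}) = \textbf{Hx}$, since $\textbf{Hx}$ is deterministic.

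First I would apply linearity of expectation to the linear estimator $\hat{\theta} = \textbf{Ay}$, obtaining $\operatorname{E}(\hat{\theta}) = \textbf{A}\,\operatorname{E}(\textbf{y}) = \textbf{AHx}$. The whole corollary then reduces to verifying the identity $\textbf{AH} = \textbf{Q}$. Writing $\textbf{W} = diag^2(\boldsymbol\alpha./\textbf{S})$ for the diagonal weight matrix, the estimator matrix of Theorem~\ref{theo-hat-theta-BLUE} is $\textbf{A} = \textbf{Q}(\textbf{H}^T\textbf{W}\textbf{H})^{-1}\textbf{H}^T\textbf{W}$, so that
\begin{align*}
\textbf{AH} = \textbf{Q}(\textbf{H}^T\textbf{W}\textbf{H})^{-1}(\textbf{H}^T\textbf{W}\textbf{H}) = \textbf{Q}.
\end{align*}
Substituting back yields $\operatorname{E}(\hat{\theta}) = \textbf{Qx} = \theta$, which is the assertion.

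The one point requiring care — though it is not a genuine difficulty — is the legitimacy of the inverse $(\textbf{H}^T\textbf{W}\textbf{H})^{-1}$ that makes the cancellation above valid. This is exactly where the standing assumption $rank(\textbf{H}) = n$ enters: since every diagonal entry $\alpha_i/S_i$ is strictly positive, $\textbf{W}$ is positive definite, and for any nonzero $\textbf{v}$ we have $\textbf{v}^T\textbf{H}^T\textbf{W}\textbf{H}\textbf{v} = \|\textbf{W}^{1/2}\textbf{H}\textbf{v}\|^2 > 0$ because $\textbf{Hv} \neq 0$ when $\textbf{H}$ has full column rank. Hence $\textbf{H}^T\textbf{W}\textbf{H}$ is invertible and the matrix identity holds. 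I expect no real obstacle here; the statement is essentially the standard unbiasedness of the generalized least squares (BLUE) estimator, specialized to the zero-mean Laplace noise of our mechanism.
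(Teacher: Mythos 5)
Your proof is correct and is essentially the argument the paper relies on: the paper states this corollary without a separate proof as an immediate consequence of Theorem~\ref{theo-hat-theta-BLUE} (Gauss--Markov/GLS), and the identity $\textbf{AH}=\textbf{Q}$ you verify is exactly the step the paper uses implicitly in the proof of Lemma~\ref{lemma-pdf-theta-i}, where $\hat{\theta}=\textbf{A}(\textbf{Hx}+\tilde{\textbf{N}})=\textbf{Qx}+\textbf{A}\tilde{\textbf{N}}=\theta+\textbf{A}\tilde{\textbf{N}}$, combined with the zero mean of the Laplace noise. Your explicit justification that $\textbf{H}^T diag^2(\boldsymbol\alpha./\textbf{S})\textbf{H}$ is invertible under the standing assumption $rank(\textbf{H})=n$ fills in a detail the paper leaves tacit, but it does not change the route.
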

\begin{corollary}
\label{corollary-var-theta-i}
The mean square error of $\hat{\theta}$ equals to variance of $\hat{\theta}$,
\begin{align}
\operatorname{MSE}(\hat{\theta})=\operatorname{Var}(\hat{\theta})=2\textbf{Q}(\textbf{H}^Tdiag^2(\boldsymbol\alpha./\textbf{S})\textbf{H})^{-1}\textbf{Q}^T
\end{align}
\end{corollary}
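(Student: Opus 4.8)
The plan is to prove the two equalities in turn. The first, $\operatorname{MSE}(\hat{\theta})=\operatorname{Var}(\hat{\theta})$, follows immediately from Corollary~\ref{corollary-theta-unbiased}: since $\hat\theta$ is unbiased we have $\operatorname{E}(\hat\theta)=\theta$, so $\operatorname{MSE}(\hat\theta)=\operatorname{E}[(\hat\theta-\theta)^2]=\operatorname{E}[(\hat\theta-\operatorname{E}\hat\theta)^2]=\operatorname{Var}(\hat\theta)$. It therefore suffices to evaluate the variance.

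For the variance I would first express $\hat\theta-\theta$ as a linear image of the noise alone. Abbreviating $\textbf{W}=diag^2(\boldsymbol\alpha./\textbf{S})$, Theorem~\ref{theo-hat-theta-BLUE} specialized to $q=1$ gives $\hat\theta=\textbf{Ay}$ with $\textbf{A}=\textbf{Q}(\textbf{H}^T\textbf{W}\textbf{H})^{-1}\textbf{H}^T\textbf{W}$. Substituting $\textbf{y}=\textbf{Hx}+\tilde{\textbf{N}}$ from equation~(\ref{fml-y}) and using the identity $\textbf{AH}=\textbf{Q}(\textbf{H}^T\textbf{W}\textbf{H})^{-1}(\textbf{H}^T\textbf{W}\textbf{H})=\textbf{Q}$, the deterministic signal cancels and I obtain $\hat\theta-\theta=\textbf{A}\tilde{\textbf{N}}$.

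Next I would compute the covariance of the Laplace noise vector. Each entry of $\tilde{\textbf{N}}(\boldsymbol\alpha./\textbf{S})$ is an independent zero-mean Laplace variate; matching the density in equation~(\ref{fml-Laplace-distribution}) with parameter $\alpha_i/S_i$ to the standard form $\tfrac{1}{2b}e^{-|x|/b}$ gives scale $b=S_i/\alpha_i$ and hence variance $2b^2=2S_i^2/\alpha_i^2$. By mutual independence the covariance matrix is diagonal, $\operatorname{Cov}(\tilde{\textbf{N}})=2\,diag^2(\textbf{S}./\boldsymbol\alpha)=2\textbf{W}^{-1}$. Then $\operatorname{Var}(\hat\theta)=\textbf{A}\operatorname{Cov}(\tilde{\textbf{N}})\textbf{A}^T=2\textbf{A}\textbf{W}^{-1}\textbf{A}^T$, and substituting $\textbf{A}$ while using the symmetry of $\textbf{W}$ and of $(\textbf{H}^T\textbf{W}\textbf{H})^{-1}$, the inner factor telescopes via $\textbf{W}\textbf{W}^{-1}\textbf{W}=\textbf{W}$, leaving exactly $2\textbf{Q}(\textbf{H}^T\textbf{W}\textbf{H})^{-1}\textbf{Q}^T$, the claimed expression.

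I expect the main obstacle to be bookkeeping rather than conceptual, specifically pinning down the Laplace scale-to-variance conversion so that $\operatorname{Cov}(\tilde{\textbf{N}})$ is exactly $2\textbf{W}^{-1}$ (the factor of $2$ together with the $S_i^2/\alpha_i^2$ dependence). This match is the crux: it is precisely because the GLS weighting matrix $\textbf{W}$ equals the inverse noise covariance (up to the constant $2$) that the sandwich $\textbf{A}\textbf{W}^{-1}\textbf{A}^T$ collapses to the single inverse $(\textbf{H}^T\textbf{W}\textbf{H})^{-1}$; any other choice of weights would leave a genuinely more complicated quadratic form.
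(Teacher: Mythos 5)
Your proof is correct and complete: unbiasedness gives $\operatorname{MSE}(\hat{\theta})=\operatorname{Var}(\hat{\theta})$, and the sandwich computation $\operatorname{Var}(\hat{\theta})=\textbf{A}\operatorname{Cov}(\tilde{\textbf{N}})\textbf{A}^T$ with $\operatorname{Cov}(\tilde{\textbf{N}})=2\,diag^2(\textbf{S}./\boldsymbol\alpha)$ collapses, exactly as you show, to $2\textbf{Q}(\textbf{H}^Tdiag^2(\boldsymbol\alpha./\textbf{S})\textbf{H})^{-1}\textbf{Q}^T$. The paper states this corollary with no proof at all, treating it as an immediate consequence of Theorem \ref{theo-hat-theta-BLUE} (Gauss--Markov/GLS), so your derivation is precisely the standard argument the paper implicitly invokes, including the part it is easiest to get wrong: the factor of $2$ from the Laplace variance $2S_i^2/\boldsymbol\alpha_i^2$, which survives because the GLS weight matrix is the inverse noise covariance only up to that constant.
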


A quick conclusion about $(\epsilon,\delta)$-usefulness can be drawn using Chebysheve's inequality.
\begin{theorem}
The $(\epsilon,\delta)$-usefulness for $\theta$ is satisfied when
$\delta=\frac{Var(\hat{\theta})}{\epsilon^2}$, which means
\begin{align}
Pr(|\theta-\hat{\theta}|\leq \epsilon)\geq 1-\frac{Var(\hat{\theta})}{\epsilon^2}
\end{align}
\end{theorem}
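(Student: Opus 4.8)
The plan is to obtain this bound as a direct, distribution-free consequence of Chebyshev's inequality applied to the estimator $\hat{\theta}$, leaning on the two corollaries already established. First I would recall from Corollary \ref{corollary-theta-unbiased} that $\hat{\theta}$ is unbiased, so $\operatorname{E}(\hat{\theta}) = \theta$; this identifies the true answer $\theta$ as the mean of the random variable $\hat{\theta}$, which is what lets us center the deviation at $\theta$. Next, from Corollary \ref{corollary-var-theta-i}, the dispersion of $\hat{\theta}$ about its mean is exactly $\operatorname{Var}(\hat{\theta})$ (and coincides with its MSE). Together these facts put us squarely in the setting where Chebyshev's inequality governs $\hat{\theta}$ with mean $\theta$ and variance $\operatorname{Var}(\hat{\theta})$.

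The key step is then to instantiate Chebyshev's inequality $Pr(|X - \operatorname{E}(X)| \geq k) \leq \operatorname{Var}(X)/k^2$ with $X = \hat{\theta}$ and $k = \epsilon$, which yields
\begin{align}
Pr(|\hat{\theta} - \theta| \geq \epsilon) \leq \frac{\operatorname{Var}(\hat{\theta})}{\epsilon^2}.
\end{align}
Passing to the complementary event and using $|\hat{\theta} - \theta| = |\theta - \hat{\theta}|$ gives
\begin{align}
Pr(|\theta - \hat{\theta}| \leq \epsilon) \geq 1 - \frac{\operatorname{Var}(\hat{\theta})}{\epsilon^2},
\end{align}
which is precisely the claimed inequality once we read off $\delta = \operatorname{Var}(\hat{\theta})/\epsilon^2$.

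There is essentially no difficult step here; the statement is a one-line corollary of Chebyshev's inequality, and the main work has already been done in establishing unbiasedness and the variance formula. The only points needing a little care are bookkeeping ones: making sure the centering is at the \emph{true} $\theta$ (which requires the unbiasedness of Corollary \ref{corollary-theta-unbiased}, not merely a point estimate), and the boundary between strict and non-strict inequalities, since the complement of Chebyshev's bound formally controls $Pr(|\theta-\hat{\theta}| < \epsilon)$, which is dominated by the $Pr(|\theta-\hat{\theta}| \leq \epsilon)$ appearing in the statement, so the $\leq$ form still holds. I would also emphasize that this bound is distribution-free, so it needs no knowledge of the exact probability function of $\hat{\theta}$ — which is exactly why Chebyshev is the appropriate tool for this ``quick conclusion,'' deferring the sharper, distribution-aware estimates to the later Bayesian inference machinery.
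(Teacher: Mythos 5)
Your proof is correct and follows exactly the route the paper intends: it introduces this theorem as a ``quick conclusion\ldots using Chebysheve's inequality,'' relying on the unbiasedness of $\hat{\theta}$ (Corollary~\ref{corollary-theta-unbiased}) and the variance formula (Corollary~\ref{corollary-var-theta-i}), just as you do. Your added care about the strict-versus-non-strict inequality and the centering at the true $\theta$ is sound and, if anything, slightly more explicit than the paper itself.
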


However, the above theorem only gives the bound of $\delta$ instead of the probability $Pr(|\Theta_i-\hat{\Theta}_i|\leq \epsilon)$ which can
be derived  by   Bayesian inference.
\begin{itemize}
\item
{\bf Example.}
Let $
\textbf{Q}=\left[
\begin{array}{ccccccccc}
1 &0 &1 &0
\end{array}
\right]
$,
$
\textbf{y}=[
   30.8,\
   30.3,\
   46.9,\\\
   20.2,\
   30.4,\
   68.9,\
   38.9,\
    9.5
  ]^T
$.
First, we can derive $diag(\boldsymbol\alpha./\textbf{S})$ as
\begin{align*}
\small
diag(\boldsymbol\alpha./\textbf{S})=
\left[
\begin{array}{cccccccc}
.05      &   0      &   0    &     0     &    0     &    0       &  0      &   0    \\
         0   & .1    &     0    &     0    &     0   &      0   &      0   &      0  \\
         0     &    0  &  .05     &    0&         0        & 0&         0        & 0\\
         0     &    0 &        0   & .1  &       0        & 0  &       0        & 0  \\
         0     &    0  &       0     &    0 &   .1       &  0   &      0       &  0   \\
         0      &   0   &      0    &     0  &       0    &.025    &     0      &   0    \\
         0     &    0    &     0   &      0   &      0   &      0   & .025    &     0    \\
         0     &    0     &    0  &       0    &     0  &       0    &     0  &  .1      \\
\end{array}\right]
\end{align*}

Then by Theorem \ref{theo-hat-theta-BLUE}, we have
$\hat{\textbf{x}}=$
$
\left[
24.9,
   10.1,
   17.0,
   19.5
\right]^T
$,
$\hat{\theta}=42.0$.
\end{itemize}

\section{Bayesian Inference}
\label{sec-Bayesian-inference}
In equation (\ref{fml-updating}), Bayesian inference involves three major components: prior belief, observation(s) and the conditional probability
$Pr(\tilde{\theta}_i|\theta)$.
Then from last section, an estimator $\hat{\theta}=\textbf{Ay}$ can be obtained.
We take uninformative prior belief, meaning we do not assume any evidential prior belief about $\theta$.
Because $Pr(\tilde{\theta}_i)$ can be calculated as $\int f(\tilde{\theta}_i|\theta)f_{i-1}(\theta)d\theta$,
the remaining question is to calculate the conditional probability $Pr(\tilde{\theta}_i=\theta|\theta)$.

We denote $f_Z(z)$ a probability function of $Z$, $f_Z(x)$ a probability function of $Z$ with input variable $x$.
For example, if $f_{\tilde{N}}(x)$ is the probability function of $\tilde{N}$ whose probability function is shown in equation (\ref{fml-Laplace-distribution}),
following equation shows $f_{\tilde{N}}(z)$.
\begin{align}
\label{fml-Laplace-distribution-z}
f_{\tilde{N}}(z)=\frac{\alpha}{2}exp(-{\alpha |z|});
\end{align}%
Then $Pr(\tilde{\theta}_i=\theta|\theta)=f_{\theta}(z)|_{\theta=z}$.

\subsection{Theoretical Probability Function}
\begin{lemma}
\label{lemma-pdf-theta-i}
The probability function of $f_{\theta}(\theta)$ is the probability function of $f_{\textbf{A}\tilde{\textbf{N}}}(\textbf{Ay}-\theta)$.
\end{lemma}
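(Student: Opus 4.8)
The plan is to reduce the probability function of the estimator to that of a single linear combination of the Laplace noises, using the defining algebraic property of the BLUE matrix $\textbf{A}$. First I would substitute the data model (\ref{fml-y}) into the estimator $\hat{\theta}=\textbf{Ay}$, obtaining $\hat{\theta}=\textbf{A}(\textbf{Hx}+\tilde{\textbf{N}})=\textbf{AHx}+\textbf{A}\tilde{\textbf{N}}$, so that all of the randomness is isolated in the term $\textbf{A}\tilde{\textbf{N}}$.

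The central identity is $\textbf{AH}=\textbf{Q}$. Writing $D=diag^2(\boldsymbol\alpha./\textbf{S})$ and taking $\textbf{A}$ from Theorem \ref{theo-hat-theta-BLUE}, I compute $\textbf{AH}=\textbf{Q}(\textbf{H}^T D\textbf{H})^{-1}\textbf{H}^T D\textbf{H}=\textbf{Q}(\textbf{H}^T D\textbf{H})^{-1}(\textbf{H}^T D\textbf{H})=\textbf{Q}$, where the inverse exists because the standing assumption $rank(\textbf{H})=n$ makes $\textbf{H}^T D\textbf{H}$ invertible. This is precisely the cancellation that underlies the unbiasedness of Corollary \ref{corollary-theta-unbiased}. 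Substituting back gives $\hat{\theta}=\textbf{Qx}+\textbf{A}\tilde{\textbf{N}}=\theta+\textbf{A}\tilde{\textbf{N}}$; since $\textbf{Q}\in\mathbb{R}^{1\times n}$ forces $\textbf{A}$ to be a row vector, the term $\textbf{A}\tilde{\textbf{N}}$ is a single scalar random variable, namely a fixed linear combination of the independent Laplace noises.

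It then remains to translate this location-shift relation into a statement about probability functions. Because $\theta$ is an unknown but fixed constant, the estimator $\hat{\theta}$ is merely $\textbf{A}\tilde{\textbf{N}}$ translated by $\theta$; a one-dimensional change of variables, with unit Jacobian for a pure shift, shows that the probability function of $\hat{\theta}$ evaluated at a point $t$ equals $f_{\textbf{A}\tilde{\textbf{N}}}(t-\theta)$. Here $f_{\theta}$ denotes the probability function of the estimator, and evaluating at the observed value $t=\textbf{Ay}$ produces the likelihood $f_{\textbf{A}\tilde{\textbf{N}}}(\textbf{Ay}-\theta)$, which is exactly the asserted identity.

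I do not expect a genuine obstacle: the argument is a matrix cancellation followed by a shift of a scalar density. The only points demanding care are (i) confirming the invertibility of $\textbf{H}^T D\textbf{H}$, guaranteed by the rank assumption, and (ii) being precise with the notation, since the claimed equality is an identity of \emph{functions} rather than of numbers, so the constant $\theta$ must be carried through the \emph{argument} of the density, not applied to its value. I would also stress that the lemma deliberately stops at $f_{\textbf{A}\tilde{\textbf{N}}}$ and does not evaluate it; computing the density of a weighted sum of independent Laplace variables is the genuinely difficult task, which is what the subsequent Monte Carlo and Probability Calculation methods are designed to approximate.
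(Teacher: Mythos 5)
Your proof takes essentially the same route as the paper's: substitute $\textbf{y}=\textbf{Hx}+\tilde{\textbf{N}}$ into $\hat{\theta}=\textbf{Ay}$, cancel $\textbf{AH}=\textbf{Q}$ to isolate the noise as $\hat{\theta}=\theta+\textbf{A}\tilde{\textbf{N}}$, and read off the claimed identity as a pure location shift of the density of $\textbf{A}\tilde{\textbf{N}}$. You are in fact more explicit than the paper, which uses the cancellation $\textbf{AH}=\textbf{Q}$ without verifying it from the formula for $\textbf{A}$ and compresses the final step into the loose statement $Pr(\theta=\theta)=Pr[\textbf{A}\tilde{\textbf{N}}=\textbf{Ay}-\theta]$, whereas you justify both the invertibility of $\textbf{H}^T diag^2(\boldsymbol\alpha./\textbf{S})\textbf{H}$ and the unit-Jacobian change of variables.
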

\begin{proof}
\begin{align*}
\hat{\theta}=\textbf{Ay}=\textbf{A}(\textbf{Hx}+\tilde{\textbf{N}})=\textbf{Qx}+\textbf{A}\tilde{\textbf{N}}=\theta+\textbf{A}\tilde{\textbf{N}}\\
\Longrightarrow \theta=\textbf{Ay}-\textbf{A}\tilde{\textbf{N}}\\
\Longrightarrow Pr(\theta=\theta)=Pr[\textbf{A}\tilde{\textbf{N}}=\textbf{Ay}-\theta]
\end{align*}
Because of the above equation, we have
\begin{align*}
f_{\theta}(\theta)=f_{\textbf{A}\tilde{\textbf{N}}}(\textbf{Ay}-\theta)
\end{align*}
\end{proof}

\begin{theorem}
The probability function of $\theta$ is
\begin{align}
\label{eqn-pdf-theta-i}
f_{\theta}(\theta)=\frac{1}{2\pi}\int_{-\infty}^{\infty}exp(-it(\textbf{Ay}-\theta))\prod_{k=1}^m \frac{{\boldsymbol\alpha_k^2}}{{\boldsymbol\alpha_k^2}+(\textbf{A}_{k}^2\textbf{S}_k^2)t^2}dt
\end{align}
\end{theorem}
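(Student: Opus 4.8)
The plan is to start from Lemma~\ref{lemma-pdf-theta-i}, which reduces the problem to computing the probability function of the scalar random variable $\textbf{A}\tilde{\textbf{N}}$ and evaluating it at $\textbf{Ay}-\theta$. Since we have restricted to $\textbf{Q}\in\mathbb{R}^{1\times n}$, the matrix $\textbf{A}$ is a length-$m$ row vector, so $\textbf{A}\tilde{\textbf{N}}=\sum_{k=1}^m \textbf{A}_k\tilde{N}_k$ is a weighted sum of the $m$ mutually independent Laplace noises injected by the Laplace mechanism in equation~(\ref{fml-y}). Directly convolving $m$ scaled Laplace densities does not give a clean closed form, so instead I would pass to the characteristic (Fourier) domain, where independence turns the sum into a product, and then recover the density by Fourier inversion.

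Concretely, the first step is to record the characteristic function of a single noise term. From equation~(\ref{fml-Laplace-distribution}) the $k$-th component $\tilde{N}_k$ is Laplace with rate $\boldsymbol\alpha_k/\textbf{S}_k$, whose characteristic function is $\frac{(\boldsymbol\alpha_k/\textbf{S}_k)^2}{(\boldsymbol\alpha_k/\textbf{S}_k)^2+t^2}$; this follows from a short direct integration that splits the integral at the origin and sums the two resulting geometric-type integrals. The second step applies the scaling rule $\phi_{aX}(t)=\phi_X(at)$ to the term $\textbf{A}_k\tilde{N}_k$, giving $\frac{(\boldsymbol\alpha_k/\textbf{S}_k)^2}{(\boldsymbol\alpha_k/\textbf{S}_k)^2+\textbf{A}_k^2 t^2}$, which after clearing $\textbf{S}_k^2$ from numerator and denominator becomes exactly the factor $\frac{\boldsymbol\alpha_k^2}{\boldsymbol\alpha_k^2+\textbf{A}_k^2\textbf{S}_k^2 t^2}$ appearing in the statement. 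The third step invokes the mutual independence of the $\tilde{N}_k$ to write the characteristic function of $\textbf{A}\tilde{\textbf{N}}$ as the product $\prod_{k=1}^m \frac{\boldsymbol\alpha_k^2}{\boldsymbol\alpha_k^2+\textbf{A}_k^2\textbf{S}_k^2 t^2}$.

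The final step is Fourier inversion: the inversion theorem gives the density of $\textbf{A}\tilde{\textbf{N}}$ as $\frac{1}{2\pi}\int_{-\infty}^{\infty} e^{-itz}\prod_{k=1}^m \frac{\boldsymbol\alpha_k^2}{\boldsymbol\alpha_k^2+\textbf{A}_k^2\textbf{S}_k^2 t^2}\,dt$, and substituting $z=\textbf{Ay}-\theta$ as dictated by Lemma~\ref{lemma-pdf-theta-i} yields equation~(\ref{eqn-pdf-theta-i}). The main obstacle I anticipate is not the algebra but the analytic bookkeeping around the inversion: one must confirm that the product is absolutely integrable in $t$ (each nontrivial factor decays like $t^{-2}$, so a single nonzero $\textbf{A}_k$ already forces convergence at infinity, while the integrand is bounded near the origin), which both legitimizes applying the inversion theorem and justifies treating the finite product factor-by-factor. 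A secondary point worth handling cleanly is the degenerate case where some $\textbf{A}_k$ vanish: the corresponding factor is identically $1$ and simply drops out of both the product and the integrability count, so the formula remains valid as long as $\textbf{A}\neq\textbf{0}$.
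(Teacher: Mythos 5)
Your proposal follows exactly the paper's own route: compute the characteristic function of each scaled Laplace noise $\textbf{A}_k\tilde{\textbf{N}}_k$, use independence to turn the sum $\textbf{A}\tilde{\textbf{N}}=\sum_{k=1}^m\textbf{A}_k\tilde{\textbf{N}}_k$ into a product of characteristic functions, recover the density by Fourier inversion, and substitute $\textbf{Ay}-\theta$ via Lemma~\ref{lemma-pdf-theta-i}. Your added attention to absolute integrability of the product and to the degenerate case $\textbf{A}_k=0$ is extra rigor the paper omits, but the argument is the same.
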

\begin{proof}
The noise for a query with sensitivity $\textbf{S}_i$ is Lap$(\boldsymbol\alpha_i/\textbf{S}_i)$.
It's easy to compute the characteristic function of Laplace distribution is
\begin{align}
\Phi_{\tilde{\textbf{N}}_i}(t)=\frac{\boldsymbol\alpha_i^2}{\textbf{S}_i^2t^2+\boldsymbol\alpha_i^2}
\end{align}
$\textbf{A}\tilde{\textbf{N}}=\sum_{k=1}^m\textbf{A}_{k}\tilde{\textbf{N}}_k$, so the characteristic function of $\textbf{A}\tilde{\textbf{N}}$ is
\begin{align}
\Phi_{\textbf{A}\tilde{\textbf{N}}}=\prod_{k=1}^{m}\frac{\boldsymbol\alpha_k^2}{\textbf{S}_k^2\textbf{A}_{k}^2t^2+\boldsymbol\alpha_i^2}
\end{align}
Then the probability function of $\textbf{A}\tilde{\textbf{N}}$ is
\begin{align*}
f_{\textbf{A}\tilde{\textbf{N}}}(\theta)=\frac{1}{2\pi}\int_{-\infty}^{\infty}exp(-it\theta)\prod_{k=1}^m \frac{{\boldsymbol\alpha_k^2}}{{\boldsymbol\alpha_k^2}+(\textbf{A}_{k}^2\textbf{S}_k^2)t^2}dt
\end{align*}
Therefore, by Lemma \ref{lemma-pdf-theta-i}, we have equation (\ref{eqn-pdf-theta-i}).
\end{proof}

The probability function can also be represented in a closed form using multivariate Laplace distribution in \cite{ML-distribution}.
It has also been proven in \cite{ML-distribution} that linear combination of Laplace distribution is also multivariate Laplace distributed.
To reduce the complexity, the probability function of sum of $n$ identical independent distributed Laplace noises
can be derived as follows\cite{Bilateral-gamma}.
\begin{small}
\begin{align}
\label{fml-baliteral-gamma}
f_n(z)=\frac{\alpha^{n}}{2^n  \Gamma^2(n)}exp(-{\alpha |z|})\int_0^{\infty}v^{n-1}(|z|+\frac{v}{2\alpha})^{n-1}e^{-v}dv
\end{align}
\end{small}

We can see that even the above simplified equation becomes theoretically hard to use when $n$ is large.
More probability functions can be found in \cite{book-Laplace}.

To circumvent the theoretical difficulty, two algorithms, Monte Carlo(MC) and discrete probability calculation(PC) method,
are proposed to approximate the probability function. Error of the approximation is defined to measure the accuracy of the two algorithms.
By comparing the error and computational complexity, we discuss that both of them have advantages and disadvantages.

\subsection{Monte Carlo Probability Function}
\label{sec-Monte-Carlo-Probability}
We show the algorithm to derive Monte Carlo probability as follows. It uses sampling technique to draw random variables from a
certain probability function. We skip detailed sampling technique, which can be found in literature of statistics,
like \cite{book-statistical-inference,book-probability-models,book-Laplace}.

\begin{algorithm}[H]
\caption{Derive the probability function of $\theta$ by Monte Carlo method}
\begin{algorithmic}
\label{alg-MC}
\REQUIRE{$m_s$: sample size for a random variable; $\boldsymbol\alpha$: privacy budget; $\textbf{S}$: sensitivity; $\textbf{A}$: $\theta=\textbf{Ay}$; $m$: $\textbf{H}\in \mathbb{R}^{m\times n}$}\\
1. $\textbf{Y}=\textbf{0}$, which is a ${m_s\times 1}$ vector;
\FOR{$k=1;k<m;k++$}
\STATE Draw $m_s$ random variables from distribution $Lap(\boldsymbol\alpha_k/\textbf{S}_k)$; denote the variables as $\textbf{X}_k$;\\
multiply $\textbf{A}_{k}$ to $\textbf{X}_k$, denote as $\textbf{A}_{k}\textbf{X}_k$;\\
$\textbf{Y}= \textbf{Y}+\textbf{A}_{k}\textbf{X}_k$;\\
\ENDFOR\\
2. $\textbf{Y}=round(\textbf{Y})$.\\
3. Plot the histogram of $\textbf{Y}$:\\
\hspace{4mm}$3.1\ $$|\textbf{u}|=2*max(abs(\textbf{Y}))+1$.\\
\hspace{4mm}$3.2\ $$\textbf{u}=hist(\textbf{Y},-\frac{|\textbf{u}|-1}{2}:1:\frac{|\textbf{u}|-1}{2})/m_s$.\\
\RETURN{The probability mass vector $\textbf{u}$}
\end{algorithmic}
\end{algorithm}
The function $round(\textbf{Y})$ rounds the elements of $\textbf{Y}$ to the nearest integers.
The purpose of step 3 is to to guarantee $|\textbf{u}|$ is an odd number. In step 4, function $hist(\textbf{Y},-\frac{|\textbf{u}|-1}{2}:1:\frac{|\textbf{u}|-1}{2})$ returns a
vector of number representing the frequency of the vector $-\frac{|\textbf{u}|-1}{2}:1:\frac{|\textbf{u}|-1}{2}$.
For example, $\textbf{u}[1]=\sum_{i=1}^{m_s} b(\textbf{Y}_i=-\frac{|\textbf{u}|-1}{2})/m_s$; $\textbf{u}[i]=\sum_{i=1}^{m_s} b(\textbf{Y}_i=-\frac{|\textbf{u}|-1}{2}+i)/m_s$
where $i\in \mathbb{Z}$ and $b()$ is a bool function that returns $1$ as true, $0$ as false.

The returned vector  $\textbf{u}$ is a probability mass function in a discretized domain.
Following definition explains a probability mass vector $\textbf{v}$.
\begin{definition}
\label{def-probability-mass-vector}
a probability mass vector $\textbf{v}_z$ is the probability mass function of $z$ so that
\begin{align*}
\textbf{v}_i=\int_{i-\frac{|\textbf{v}|}{2}-1}^{i-\frac{|\textbf{v}|}{2}}f_z(z)dz=F_z(i-\frac{|\textbf{v}|}{2})-F_z(i-\frac{|\textbf{v}|}{2}-1)
\end{align*}
where $|\textbf{v}|$ is the length of $\textbf{v}$, $f_z(z)$ is the probability density function of $z$, $F_z()$ is the cumulative distribution function.
\end{definition}
Note that $|\textbf{v}|$ should be an odd number implicitly to guarantee the symmetry of probability mass vector.

For example, if $\textbf{v}=[0.3, 0.4, 0.3]$, it means $\textbf{v}_1=\int_{-1.5}^{-0.5}f_z(z)dz=0.3$, $\textbf{v}_2=\int_{-0.5}^{0.5}f_z(z)dz=0.4$,
$\textbf{v}_3=\int_{0.5}^{1.5}f_z(z)dz=0.3$.

Following theorem can be proven with Central Limit Theorem.
\begin{theorem}
\label{theo-PDF-n}
$\textbf{u}$ in Algorithm \ref{alg-MC} converges in probability to $\textbf{A}\tilde{\textbf{N}}$, which means $\forall \delta>0$
\begin{align}
lim_{m_s\rightarrow \infty}Pr(|\textbf{u}_i-Pr(i-\frac{|\textbf{u}|}{2}-1\leq \textbf{A}\tilde{\textbf{N}}<i+\frac{|\textbf{u}|}{2})|<\delta)=1
\end{align}
\end{theorem}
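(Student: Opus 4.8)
The plan is to recognize this statement as an instance of the Law of Large Numbers applied to i.i.d.\ empirical frequencies, with the Central Limit Theorem (as the paper suggests) supplying the convergence rate. First I would establish that after the accumulation loop of Algorithm \ref{alg-MC}, each entry $\textbf{Y}_j$ is an independent copy of $\textbf{A}\tilde{\textbf{N}}$: since the algorithm draws the $m_s$ variables $\textbf{X}_k$ independently from $Lap(\boldsymbol\alpha_k/\textbf{S}_k)$ for each $k$ and forms $\textbf{Y}=\sum_k \textbf{A}_k\textbf{X}_k$, the $j$-th coordinate is distributed exactly as $\sum_k \textbf{A}_k\tilde{\textbf{N}}_k=\textbf{A}\tilde{\textbf{N}}$, and the $m_s$ coordinates are mutually independent.

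Next I would reduce the histogram entry to a sample mean of i.i.d.\ Bernoulli variables. I fix a target bin, i.e.\ a value $v$ in the discretized domain. After the rounding in step 2, the indicator $B_j=b(\text{round}(\textbf{Y}_j)=v)$ equals $1$ exactly when $v-\tfrac12\le \textbf{Y}_j< v+\tfrac12$, so $B_1,\dots,B_{m_s}$ are i.i.d.\ Bernoulli with success probability $p_v=\Pr(v-\tfrac12\le \textbf{A}\tilde{\textbf{N}}<v+\tfrac12)$, which is precisely the bin probability on the right-hand side (cf.\ Definition \ref{def-probability-mass-vector}). By construction $\textbf{u}_i=\frac{1}{m_s}\sum_{j=1}^{m_s}B_j$ for the index $i$ corresponding to $v$. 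Since the $B_j$ are bounded, the CLT gives that $\sqrt{m_s}(\textbf{u}_i-p_v)$ converges in distribution to $\mathcal{N}(0,p_v(1-p_v))$, whence $\textbf{u}_i-p_v=O_p(m_s^{-1/2})$ and in particular $\textbf{u}_i\to p_v$ in probability, which is the claimed limit. I would remark that the Weak Law of Large Numbers alone already yields convergence in probability; the CLT is only required if one additionally wants the $m_s^{-1/2}$ rate.

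The subtle point I expect to need care is that the histogram length $|\textbf{u}|$ is itself random, since step $3.1$ sets $|\textbf{u}|=2\,max(abs(\textbf{Y}))+1$, so a fixed index $i$ does not correspond to a fixed value $v$ as $m_s$ varies. I would circumvent this by phrasing the entire argument in terms of a fixed value $v$ and letting the index track $v$ through the relation $i=v+\tfrac{|\textbf{u}|}{2}$; for any $v$ in the support of $\textbf{A}\tilde{\textbf{N}}$, the event that the random range covers $v$ has probability tending to $1$ as the range grows without bound, so the conditioning on $|\textbf{u}|$ does not affect the limit. The only genuine work is this bookkeeping around the random support, the convergence itself being a direct LLN/CLT consequence once the i.i.d.\ Bernoulli structure is exposed.
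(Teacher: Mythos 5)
Your proof is correct and follows exactly the route the paper intends: the paper states this theorem with only the remark that it ``can be proven with Central Limit Theorem,'' and your reduction of each histogram entry $\textbf{u}_i$ to an i.i.d.\ Bernoulli sample mean with success probability equal to the bin probability of $\textbf{A}\tilde{\textbf{N}}$, followed by the CLT (or, as you note, already the weak law of large numbers), is the standard instantiation of that sketch. Your extra bookkeeping for the random histogram length $|\textbf{u}|$ handles a point the paper silently ignores, so nothing is missing.
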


%
%
%

At last, by Lemma \ref{lemma-pdf-theta-i}, we can derive the probability of $\theta$ by the following equation.
\begin{align}
\label{eqn-probability-mass-vector}
\textbf{u}_i=Pr(\textbf{Ay}+\frac{|\textbf{u}|}{2}-i < \theta \leq \textbf{Ay}+\frac{|\textbf{u}|}{2}-i+1 )
\end{align}

\begin{itemize}
\item
{\bf Example.}
First we can calculate $\textbf{A}$.
\begin{align}
\label{eqn-example-A}
\small
\textbf{A}=\left[
\begin{array}{cccccccc}
0.48   & 0.36  & -0.03 &   0.50 &  -0.50  &  0.26   & 0.07    &0.24
\end{array}
\right]
\end{align}


Let Sample size $m_s=10^6$. First, generate $10^6$ random variables from Laplace distribution $Lap(0.05)$.
Denote the random variables as the vector $\textbf{X}_1$.
Then generate $10^6$ random variables from $Lap(0.1)$, denoted as $\textbf{X}_2$. Let $\textbf{Y}=0.48\textbf{X}_1+0.36\textbf{X}_2$. Similarly, generate all the
random variables from Laplace distributions $Lap(\boldsymbol\alpha./\textbf{S})$. Finally we have a vector $\textbf{Y}=\sum(\textbf{A}_k.*\textbf{X}_k)$.
Next we round $\textbf{Y}$ to the nearest integers. $max(\textbf{Y})=466$, $\min(\textbf{Y})=-465$. It means the range of $\textbf{Y}$ is in
$[-465,\ 466]$. So let $|\textbf{u}|=2*466+1=933$. At last we make a histogram of $\textbf{Y}$ in the range $[-466,\ 466]$.
As in Equation ($\ref{eqn-probability-mass-vector}$), It represents the probability mass vector of $\theta$. For example,
$\textbf{u}[465]$ is $Pr(42-1.5\leq\theta<42-0.5)$. Figure \ref{Fig-probability-mass-vectors-MC} shows the the probability mass function of $\theta$.
\end{itemize}

\begin{figure}[h!]
\begin{minipage}{0.45\textwidth}
\centering
\includegraphics[width=8cm]{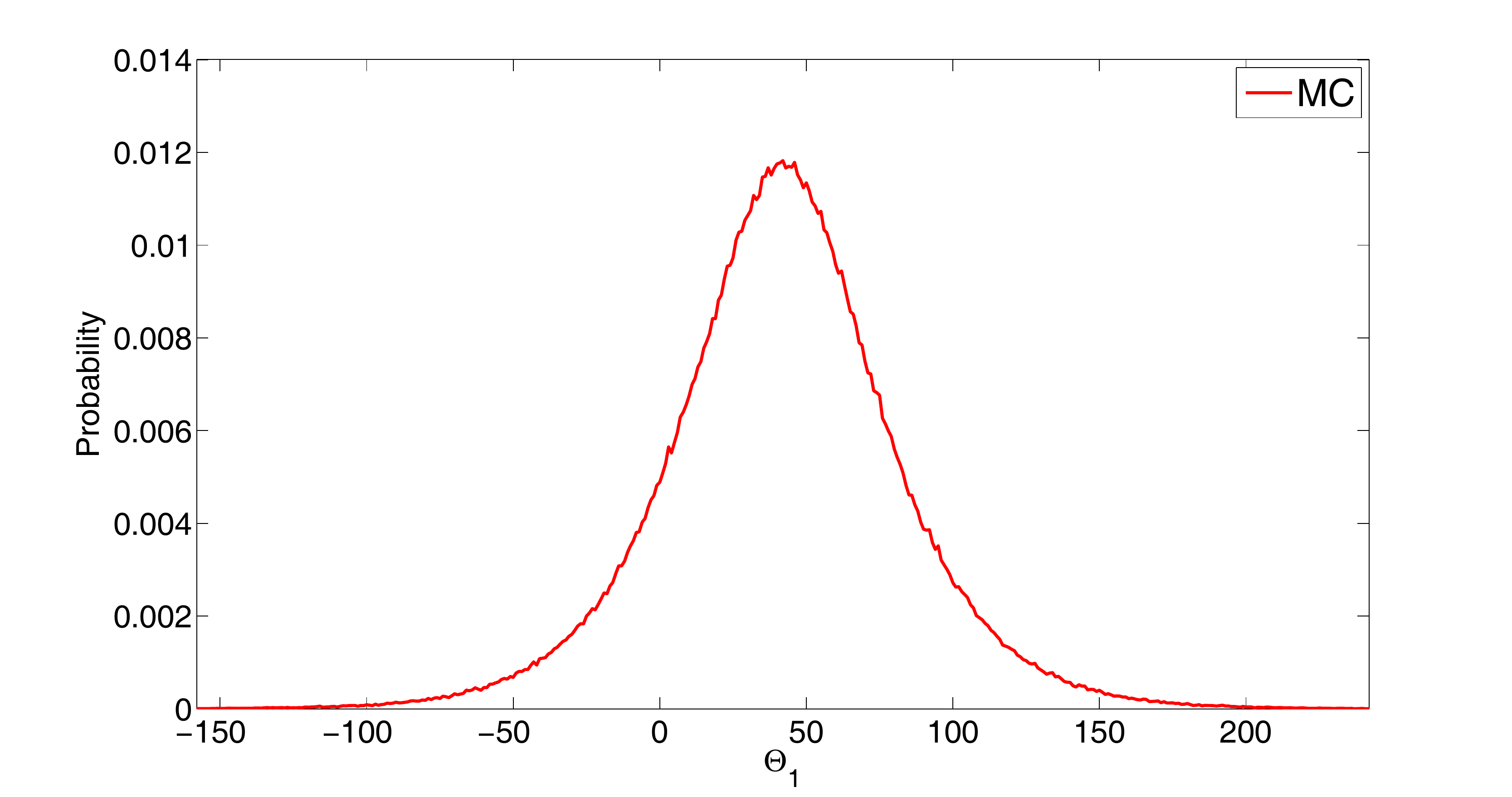}
\end{minipage}
\caption{Probability mass function of $\theta$.}
\label{Fig-probability-mass-vectors-MC}
\end{figure}

\subsubsection{Error Analysis}
We use the sum of variance to  measure the quality of the probability mass vector.
\begin{definition}
\label{def-error-u}
For a probability mass vector $\textbf{u}$,  the error of $\textbf{u}$ is defined as:
\begin{align}
error(\textbf{u})=\sum_{\textbf{u}_i} (\textbf{u}_i-\operatorname{E}({\textbf{u}}_i))^2
\end{align}
\end{definition}

Because each $\textbf{u}_i$ is a representation of
$Y_i=b(\textbf{Ay}+\frac{|\textbf{u}|}{2}-i < \theta \leq \textbf{Ay}+\frac{|\textbf{u}|}{2}-i+1)/m_s$, $error(\textbf{u})$ is actually $(m_s-1)\sum \sigma_{Y_i}^2$ where
$\sigma_{Y_i}^2$ is the variance of $Y_i$.

\begin{theorem}
\label{theo-error-monte-carlo}
The error of $\textbf{u}$ in Algorithm \ref{alg-MC} satisfies
\begin{align*}
error(\textbf{u})<\frac{\sigma^2}{m_s}z
\end{align*}
where $ z\sim \chi^2(|\textbf{u}|)$ and $\sigma^2$ is the maximum variance among $\textbf{u}$.
\end{theorem}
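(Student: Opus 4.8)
The plan is to decompose $error(\textbf{u})$ into a weighted sum of squared, standardized sampling fluctuations — one per bin — and then recognize that sum as a scaled $\chi^2$ variate after replacing each per-bin variance by its maximum.

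First I would record the distributional structure of each component. By construction in Algorithm~\ref{alg-MC}, $\textbf{u}_i$ is the relative frequency with which the $m_s$ (rounded) samples fall in bin $i$, so $\textbf{u}_i=\frac{1}{m_s}\sum_{j=1}^{m_s}b_{ij}$ for independent Bernoulli indicators $b_{ij}$. Hence $\operatorname{E}(\textbf{u}_i)=p_i$, the true bin probability, and, writing $\sigma_i^2=p_i(1-p_i)$ for the per-sample variance, $\operatorname{Var}(\textbf{u}_i)=\sigma_i^2/m_s$; this is the reading that underlies the paper's preceding identification of $error(\textbf{u})$ with $(m_s-1)\sum_i\sigma_{Y_i}^2$.

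Second, I would invoke the Central Limit Theorem, exactly as in Theorem~\ref{theo-PDF-n}: for large $m_s$ the fluctuation $\textbf{u}_i-\operatorname{E}(\textbf{u}_i)$ is approximately $\mathcal{N}(0,\sigma_i^2/m_s)$, so the standardized quantity $g_i=(\textbf{u}_i-\operatorname{E}(\textbf{u}_i))\sqrt{m_s}/\sigma_i$ is approximately a standard normal. Then, starting from Definition~\ref{def-error-u}, I would write
\begin{align*}
error(\textbf{u})=\sum_i(\textbf{u}_i-\operatorname{E}(\textbf{u}_i))^2=\sum_i\frac{\sigma_i^2}{m_s}\,g_i^2<\frac{\sigma^2}{m_s}\sum_i g_i^2,
\end{align*}
where $\sigma^2=\max_i\sigma_i^2$ is the maximum variance among the bins; treating the $g_i$ as independent standard normals identifies $\sum_i g_i^2$ with a $\chi^2(|\textbf{u}|)$ variate $z$, which yields the stated bound.

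The hard part — really the only nontrivial point — is the independence used in the last step. The components $\textbf{u}_i$ are \emph{not} independent: they satisfy the multinomial constraint $\sum_i\textbf{u}_i=1$, so the joint law of the $g_i$ is a degenerate multivariate normal and the exact sum of squares is Pearson's goodness-of-fit statistic, carrying $|\textbf{u}|-1$ rather than $|\textbf{u}|$ degrees of freedom. The argument therefore rests on the approximation that the per-bin fluctuations are asymptotically independent, which is what licenses both the $\chi^2(|\textbf{u}|)$ identification and the strict inequality, the latter being safe because $\sigma_i^2=\sigma^2$ cannot hold simultaneously for every bin.
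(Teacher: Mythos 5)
Your proposal is correct and follows essentially the same route as the paper's own proof: the same decomposition of $error(\textbf{u})$ into standardized per-bin fluctuations, the same replacement of each $\sigma_i$ by the maximum $\sigma$ to obtain the inequality, and the same Central Limit Theorem step identifying the resulting sum of squares with a $\chi^2(|\textbf{u}|)$ variate. The independence caveat you flag --- that the multinomial constraint $\sum_i \textbf{u}_i = 1$ makes the exact statistic Pearson's, with $|\textbf{u}|-1$ rather than $|\textbf{u}|$ degrees of freedom --- is a real weakness that the paper's proof passes over in silence, so your write-up is, if anything, the more careful of the two.
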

\begin{proof}
\begin{align*}
error(\textbf{u})=\sum_{\textbf{u}_i\in \textbf{u}} (\textbf{u}_i-\operatorname{E}(\textbf{u}_i))^2\\
=\frac{\sigma^2}{m_s}\sum_{\textbf{u}_i\in \textbf{u}}[\frac{\sqrt{m_s}}{\sigma}(\textbf{u}_i-\operatorname{E}(\textbf{u}_i))]^2 \ where \ (\sigma=max(\sigma_i))\\
<\frac{\sigma^2}{m_s}\sum_{\textbf{u}_i\in \textbf{u}}[\frac{\sqrt{m_s}}{\sigma_i}(\textbf{u}_i-\operatorname{E}(\textbf{u}_i))]^2
\end{align*}
In $\textbf{u}$, each $\textbf{u}_i$ represents a value derived by Monte Carlo method. By Central Limit Theorem\cite{book-statistical-inference},
$\frac{\sqrt{m_s}}{\sigma_i}(\operatorname{E}(\textbf{u}_i)-\textbf{u}_i)$ converges in probability to Gaussian distribution $G(0,1)$ with mean 0 and variance 1.
Then $\sum_{\textbf{u}_i\in \textbf{u}}[\frac{\sqrt{m_s}}{\sigma_i}(\textbf{u}_i-\operatorname{E}(\textbf{u}_i))]^2$ converges in probability to $\chi^2(|\textbf{u}|)$ where $\chi^2(|\textbf{u}|)$ is the
Chi-Square distribution with freedom $|\textbf{u}|$.
\end{proof}

\begin{corollary}
\label{corollary-error-MC}
The expected error of $\textbf{u}$ in Algorithm \ref{alg-MC} is
\begin{align*}
\operatorname{E}(error(\textbf{u}))<\frac{|\textbf{u}|}{m_s-1}max(\textbf{u})(1-max(\textbf{u}))
\end{align*}
\end{corollary}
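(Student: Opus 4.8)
The plan is to read this off Theorem~\ref{theo-error-monte-carlo} by taking expectations, since that theorem already hands us the stochastic bound $error(\textbf{u})<\frac{\sigma^2}{m_s}z$ with $z\sim\chi^2(|\textbf{u}|)$. Consequently the whole argument reduces to evaluating the expectation of the right-hand side and then rewriting the constant $\sigma^2$ in terms of the observable quantity $max(\textbf{u})$. No new probabilistic machinery is needed; the work is in the bookkeeping of that final substitution.

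First I would apply expectation to both sides of the inequality of Theorem~\ref{theo-error-monte-carlo}. Since that inequality holds surely and $\sigma^2/m_s$ is a constant, monotonicity and linearity of expectation give $\operatorname{E}(error(\textbf{u}))<\frac{\sigma^2}{m_s}\operatorname{E}(z)$. The mean of a chi-square variable equals its degrees of freedom, so $\operatorname{E}(z)=|\textbf{u}|$, producing the intermediate bound $\operatorname{E}(error(\textbf{u}))<\frac{\sigma^2|\textbf{u}|}{m_s}$. This is the routine half of the proof.

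The remaining step identifies $\sigma^2$. From the proof of Theorem~\ref{theo-error-monte-carlo} we have $\sigma^2=max_i\,\sigma_i^2$, where $\sigma_i^2=p_i(1-p_i)$ is the per-sample Bernoulli variance of the indicator whose average defines $\textbf{u}_i$, with $p_i=\operatorname{E}(\textbf{u}_i)$. Because $g(p)=p(1-p)$ is increasing on $[0,1/2]$, under the (in practice always satisfied) condition $max(\textbf{u})\le 1/2$ the maximum $max_i\,p_i(1-p_i)$ is attained at the largest $p_i$, i.e.\ it equals $p^*(1-p^*)$ with $p^*=max_i\,p_i$. Replacing the unknown $p^*$ by the empirical $max(\textbf{u})$ through the unbiased sample estimate of a Bernoulli variance, $\widehat{p^*(1-p^*)}=\frac{m_s}{m_s-1}max(\textbf{u})(1-max(\textbf{u}))$, converts the factor $1/m_s$ into $1/(m_s-1)$ and yields exactly $\operatorname{E}(error(\textbf{u}))<\frac{|\textbf{u}|}{m_s-1}max(\textbf{u})(1-max(\textbf{u}))$.

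The main obstacle is precisely this last substitution. Everything up to $\frac{\sigma^2|\textbf{u}|}{m_s}$ is mechanical, but turning the theoretical constant $\sigma^2=max_i\,p_i(1-p_i)$ into the reported empirical expression requires two care points: the monotonicity argument, which needs $max(\textbf{u})\le 1/2$ so that $g$ is evaluated only on its increasing branch, and the bias correction of the plug-in Bernoulli variance, which is what accounts for $m_s-1$ rather than $m_s$ in the denominator. I would state the $max(\textbf{u})\le 1/2$ assumption explicitly, noting that it holds comfortably for a probability mass vector spread over many bins, and justify the correction factor from the identity $\frac{1}{m_s-1}\sum_j(B_{ij}-\textbf{u}_i)^2=\frac{m_s}{m_s-1}\textbf{u}_i(1-\textbf{u}_i)$ for Bernoulli samples $B_{ij}$, whose expectation is $p_i(1-p_i)$.
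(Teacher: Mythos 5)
Your proposal is correct and follows essentially the same route as the paper's own proof: take expectations in the bound of Theorem~\ref{theo-error-monte-carlo}, use $\operatorname{E}(z)=|\textbf{u}|$ for $z\sim\chi^2(|\textbf{u}|)$, and identify $\sigma^2$ with the Bernoulli sample variance $\frac{m_s}{m_s-1}max(\textbf{u})(1-max(\textbf{u}))$. The only difference is that you make explicit two points the paper silently elides --- the condition $max(\textbf{u})\leq 1/2$ needed so that $max_i\,\textbf{u}_i(1-\textbf{u}_i)$ is attained at the largest $\textbf{u}_i$, and the plug-in replacement of the true per-sample variance by its empirical estimate --- so your write-up is, if anything, slightly more careful than the original.
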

\begin{proof}
Recall that each $\textbf{u}_i$ is the representation of
$Y_i=b(\textbf{Ay}+\frac{|\textbf{u}|}{2}-i < \theta \leq \textbf{Ay}+\frac{|\textbf{u}|}{2}-i+1)$. Thus the variance of $\textbf{u}_i$ should be calculated as
$\sigma_i^2=\frac{1}{m_s-1}\sum_{j=1}^{m_s}(Y_{ij}-\operatorname{E}(Y_i))^2$. Therefore,
\begin{multline*}
\sigma^2=max(\sigma_i^2)=max[\frac{1}{m_s-1}\sum_{j=1}^{m_s}(Y_{ij}-\textbf{u}_i)^2]\\
=max[\frac{1}{m_s-1}(\textbf{u}_im_s(1-\textbf{u}_i)^2+(1-\textbf{u}_i)m_s\textbf{u}_i^2)]\\
=max[\frac{m_s}{m_s-1}\textbf{u}_i(1-\textbf{u}_i)]\\
=\frac{m_s}{m_s-1}max(\textbf{u})(1-max(\textbf{u}))
\end{multline*}
Because the expectation of $\chi^2(|\textbf{u}|)$ is $\operatorname{E}(z)=|\textbf{u}|$, by Theorem \ref{theo-error-monte-carlo}, $\operatorname{E}(error(\textbf{u}))<\frac{|\textbf{u}|}{m_s-1}max(\textbf{u})(1-max(\textbf{u}))$.
\end{proof}

Because $\textbf{u}_i\in [0,\ 1]$, $max(\textbf{u})(1-max(\textbf{u}))\in [0,\ 0.25]$. The larger $|\textbf{u}|$, the smaller $m_s$, the bigger error of $\textbf{u}$, vice versa.
However, we cannot control $|\textbf{u}|$ and $max(\textbf{u})$, which are determined by $\textbf{A}$, $\textbf{S}$, $\boldsymbol\alpha$ and $m$. Thus the only way to
reduce the error of $\textbf{u}$ is to enlarge the sample size $m_s$.
But the computational complexity also rises with $m_s$, which is analyzed as follows.

\subsubsection{Complexity Analysis}
\begin{theorem}
\label{theo-complexity-1}
Algorithm \ref{alg-MC} takes time $O(m\cdot m_s)$.
\end{theorem}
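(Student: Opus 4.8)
The plan is to bound the running time step by step and show that the dominant contribution comes from the main loop. First I would observe that the initialization in step~1, which sets the $m_s \times 1$ vector $\textbf{Y}$ to zero, costs $O(m_s)$ time.

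The heart of the analysis is the FOR loop, which executes $\Theta(m)$ iterations (the loop runs for $k = 1, \dots, m-1$). Within each iteration exactly three vector operations are performed on objects of length $m_s$: drawing $m_s$ i.i.d.\ samples from the Laplace law $Lap(\boldsymbol\alpha_k/\textbf{S}_k)$, scaling the sample vector $\textbf{X}_k$ by the scalar $\textbf{A}_k$, and accumulating the result into $\textbf{Y}$. Treating each Laplace draw as a constant-time inverse-CDF sample, each of these touches exactly $m_s$ entries and therefore costs $O(m_s)$. Hence one iteration is $O(m_s)$ and the whole loop is $O((m-1)\cdot m_s) = O(m\cdot m_s)$, which I expect to be the leading term.

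Then I would dispatch the remaining steps. Rounding $\textbf{Y}$ to the nearest integers is a single pass over $m_s$ entries, i.e.\ $O(m_s)$. The histogram step scans the $m_s$ entries of $\textbf{Y}$ once, incrementing the appropriate bin counter, plus an $O(|\textbf{u}|)$ cost to allocate and normalize the bin array. Summing all contributions yields $O(m_s) + O(m\cdot m_s) + O(m_s) + O(m_s + |\textbf{u}|)$, and provided $|\textbf{u}| = O(m\cdot m_s)$ the total collapses to $O(m\cdot m_s)$, proving the claim.

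The main obstacle is controlling the histogram width $|\textbf{u}| = 2\,max(abs(\textbf{Y})) + 1$. Since the Laplace distribution has unbounded support, $max(abs(\textbf{Y}))$ is in principle unbounded, so no worst-case deterministic bound exists and a careless analysis could let the $O(|\textbf{u}|)$ allocation dominate. I would resolve this with a concentration argument: for $m_s$ draws from a Laplace law of scale $b$, the maximum magnitude is $O(b\log m_s)$ with high probability, so after summing the $m$ scaled contributions $max(abs(\textbf{Y})) = O(m\log m_s)$ with high probability, giving $|\textbf{u}| = O(m\log m_s)$, which is dominated by $m\cdot m_s$. Alternatively, under the mild standard assumption that the realized sample range grows at most linearly in $m_s$, one simply has $|\textbf{u}| = O(m_s)$, keeping the bound clean. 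Either route makes the histogram cost subdominant and leaves $O(m\cdot m_s)$ as the final complexity.
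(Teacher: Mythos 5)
Your proof is correct and follows essentially the same route as the paper's: count the work step by step and observe that the sampling loop, at $O(m_s)$ work per iteration over $m$ iterations, dominates everything else. The paper's own proof is much coarser --- it simply asserts that constructing the $m\times m_s$ random variables takes $m\cdot m_s$ steps and that the sum, round and hist operations take $m\cdot m_s$ each --- and it silently skips the one point you isolate: the histogram width $|\textbf{u}|=2\max(abs(\textbf{Y}))+1$ is a \emph{random}, a priori unbounded quantity (Laplace noise has unbounded support), so the $O(|\textbf{u}|)$ cost of allocating and normalizing the bin vector admits no worst-case deterministic bound, and the theorem as literally stated is only honest once that term is controlled. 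Your concentration argument --- the maximum of $m_s$ Laplace draws of scale $b$ is $O(b\log m_s)$ with high probability, so $\max(abs(\textbf{Y}))$ and hence $|\textbf{u}|$ is $O(m\log m_s)$ with high probability (with constants absorbing the fixed scales $|\textbf{A}_k|\textbf{S}_k/\boldsymbol\alpha_k$), which is dominated by $m\cdot m_s$ --- is exactly the patch needed, at the modest price of reading the bound as a with-high-probability statement rather than a worst-case one. So: same decomposition as the paper, but your version closes a real, if minor, gap that the paper's two-line proof glosses over.
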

\begin{proof}
To construct $m\times m_s$ random variables, it takes  $m\times m_s$ steps. The sum, round and hist operation also take
$m\times m_s$ each. So the computational time is $O(m\cdot m_s)$.
\end{proof}
Note that to guarantee a small error, a large $m_s$ is expected at the beginning.
However, the computational complexity is polynomial to $m$. The running time does not increase too fast when $m$ rises.

\subsection{Discrete Probability Calculation}
\label{sec-combination-PDF-2}

Because we know the noise $\tilde{\textbf{N}}_k$ has sensitivity $\textbf{S}_k$ and privacy budget $\boldsymbol\alpha_k$,
the cumulative distribution function
\begin{align*}
F_{\textbf{A}_{k}\tilde{\textbf{N}}_k}(z)=\frac{1}{2}exp(\frac{\boldsymbol\alpha_kz}{\textbf{S}_k\textbf{A}_{k}})
\end{align*}
Then we can calculate all the probability mass vectors for $\textbf{A}_{k}\tilde{\textbf{N}}_k$.

Next we define a function $conv(\textbf{u},\textbf{v})$ that convolves vectors $\textbf{u}$ and $\textbf{v}$. It returns a vector $\textbf{w}$ so that
\begin{align}\textbf{w}_k=\sum_j\textbf{u}_j\textbf{v}_{k-j+1}\\
\label{eqn-property-convolution}
|\textbf{w}|=|\textbf{u}|+|\textbf{v}|-1
\end{align}

Obviously the $conv()$ function corresponds to the convolution formula of probability function.
Therefore, $conv(\textbf{u},\textbf{v})$ returns the probability mass vector of $\textbf{u}+\textbf{v}$.

Recall the following equation, we can calculate the convolution result for all $\textbf{A}\tilde{\textbf{N}}$.
\begin{align}
\label{fml-TiN}
\textbf{A}\tilde{\textbf{N}}=\sum_{k=1}^{m} \textbf{A}_{k}\tilde{\textbf{N}}_k
\end{align}
Algorithm \ref{alg-combination-PDF} describes the process of discrete probability calculation.

\begin{algorithm}
\caption{Discrete probability calculation}
\begin{algorithmic}
\label{alg-combination-PDF}
\REQUIRE{$\textbf{A}$, $\tilde{\textbf{N}}$, $\textbf{S}$, $\boldsymbol\alpha$, $\gamma$}
\STATE
1. Construct all probability mass vectors of $\textbf{A}_{k}\tilde{\textbf{N}}_k$, $k=1,\cdots, m$:\\
\FOR{$k=1;k<m;k++$}
\STATE Choose length $|\textbf{v}|_k=ceil(\frac{|\textbf{A}_{k}|\textbf{S}_kln(m/\gamma)}{\boldsymbol\alpha_k})+1$
where function $ceil()$ rounds the input to the nearest integer greater than or equal to the input.\\
construct all the probability mass vectors $\textbf{v}_k$ for $\textbf{A}_{k}\tilde{\textbf{N}}_k$;\\
\ENDFOR\\
2. Convolve all $\textbf{v}$:
$\textbf{u}=conv(\textbf{v}_1,\cdots,\textbf{v}_{m})$;\\
\RETURN{probability mass vector $\textbf{u}$ as the discretized probability function of $\textbf{A}\tilde{\textbf{N}}$;}
\end{algorithmic}
\end{algorithm}

\begin{itemize}
\item
{\bf Example.}
Let $\gamma=0.01$. We can calculate the length of $\textbf{v}_1$ is $|\textbf{v}|_1=2*0.48*1*log(8/0.01)/0.05=128$. Then let $|\textbf{v}|_1=129$, meaning we only need to
construct the probability mass between $-64$ and $64$. We know the cumulative distribution function of Laplace distribution,
$Pr(z\leq\textbf{A}_{1}\tilde{\textbf{N}}_1<z+1)=\frac{1}{2}exp(\frac{\boldsymbol\alpha_1(z+1)}{\textbf{S}_1\textbf{A}_{1}})-\frac{1}{2}exp(\frac{\boldsymbol\alpha_1z}{\textbf{S}_1\textbf{A}_{1}})$.
Similarly, we can construct all the probability mass vectors for all $\textbf{A}_{k}\tilde{\textbf{N}}_k$. Finally we convolve them together.
The result $\textbf{u}$ represents the probability mass vector of $\theta$.
Figure \ref{Fig-probability-mass-vectors-PC} shows the the probability mass function of $\theta$.
\end{itemize}

\begin{figure}[h!]
\begin{minipage}{0.45\textwidth}
\centering
\includegraphics[width=8cm]{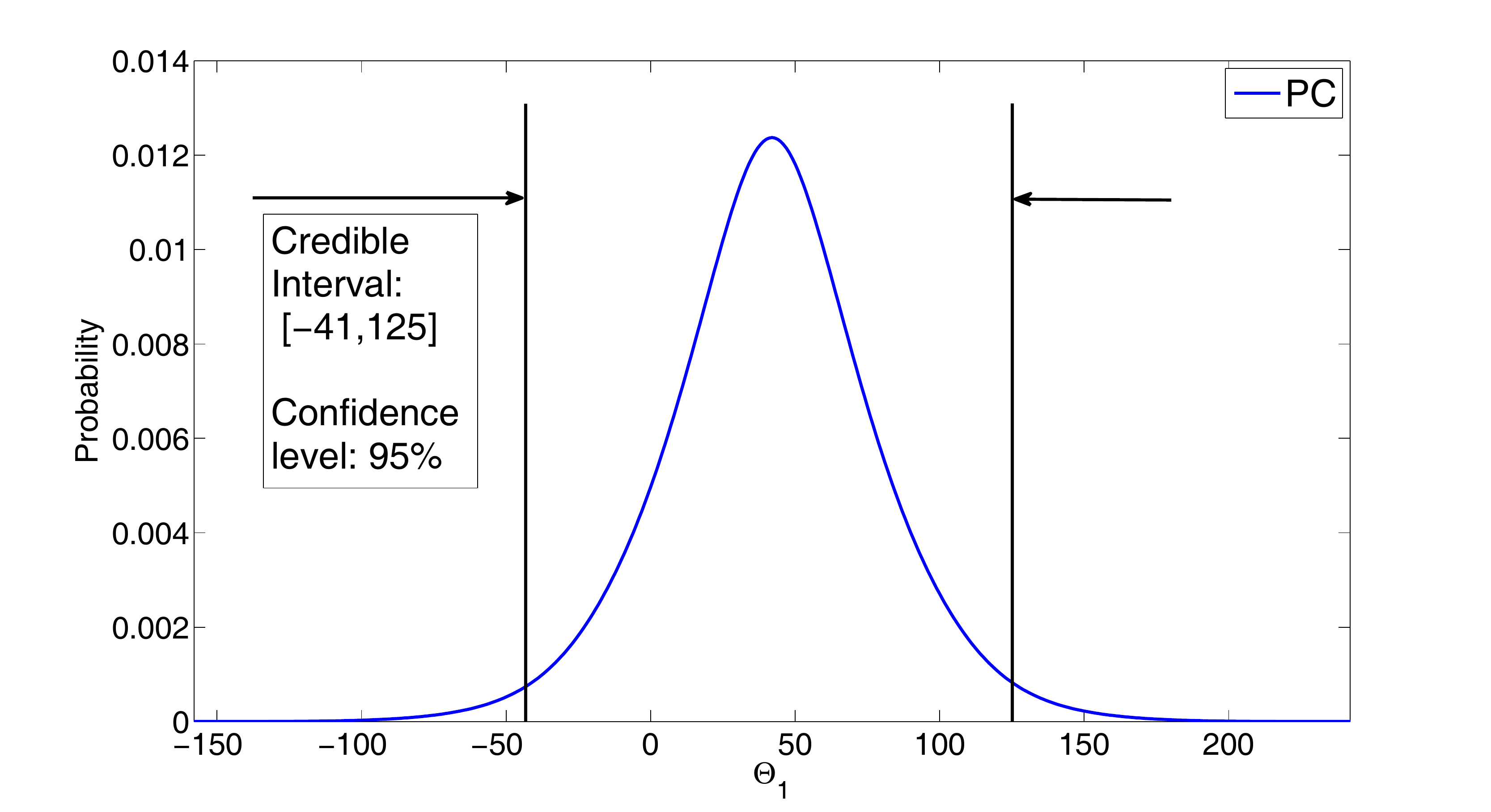}
\end{minipage}
\caption{Probability mass function of $\theta$.}
\label{Fig-probability-mass-vectors-PC}
\end{figure}

\subsubsection{Error Analysis}
\begin{lemma}
If probability loss $\mathcal{L}$ is defined to measure the lost probability mass of a probability mass vector $\textbf{v}_{\textbf{A}_{k}\tilde{\textbf{N}}_k}$, then
\begin{align}
\mathcal{L}(\textbf{v}_k)=exp(-\frac{\boldsymbol\alpha_k|\textbf{v}|_k}{2|\textbf{A}_{k}|\textbf{S}_k})=\gamma/m
\end{align}
\end{lemma}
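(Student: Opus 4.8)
The plan is to treat $\mathcal{L}(\textbf{v}_k)$ as the total tail mass of the distribution of $\textbf{A}_{k}\tilde{\textbf{N}}_k$ that falls outside the symmetric truncation window of width $|\textbf{v}|_k$ used to build the probability mass vector. First I would pin down the distribution of $\textbf{A}_{k}\tilde{\textbf{N}}_k$. Since $\tilde{\textbf{N}}_k$ is $\mathrm{Lap}(\boldsymbol\alpha_k/\textbf{S}_k)$, multiplying by $\textbf{A}_{k}$ rescales it into a Laplace variable with rate $\boldsymbol\alpha_k/(\textbf{S}_k|\textbf{A}_{k}|)$; this is exactly consistent with the cumulative distribution function $F_{\textbf{A}_{k}\tilde{\textbf{N}}_k}(z)=\frac{1}{2}\exp(\frac{\boldsymbol\alpha_k z}{\textbf{S}_k\textbf{A}_{k}})$ already recorded in the excerpt, valid for $z\le 0$.

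Next I would compute the lost mass directly from this CDF. The vector $\textbf{v}_k$ retains probability only on the centered window $[-|\textbf{v}|_k/2,\,|\textbf{v}|_k/2]$, so $\mathcal{L}(\textbf{v}_k)$ is precisely the mass in the two tails beyond these endpoints. By the symmetry of the Laplace density about $0$, the two tails are equal, so it suffices to evaluate the left tail and double it. Substituting the left endpoint $z=-|\textbf{v}|_k/2$ into $F_{\textbf{A}_{k}\tilde{\textbf{N}}_k}$ gives a left-tail mass of $\frac{1}{2}\exp(-\frac{\boldsymbol\alpha_k|\textbf{v}|_k}{2\textbf{S}_k|\textbf{A}_{k}|})$, and doubling yields $\mathcal{L}(\textbf{v}_k)=\exp(-\frac{\boldsymbol\alpha_k|\textbf{v}|_k}{2\textbf{S}_k|\textbf{A}_{k}|})$, which is the first equality of the lemma.

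Finally I would substitute the window length chosen in Algorithm~\ref{alg-combination-PDF}. Using the length exercised in the running example, $|\textbf{v}|_k=\frac{2|\textbf{A}_{k}|\textbf{S}_k\ln(m/\gamma)}{\boldsymbol\alpha_k}$, the exponent telescopes: the factors $\boldsymbol\alpha_k$, $\textbf{S}_k$, $|\textbf{A}_{k}|$ and the leading $2$ all cancel, leaving $\exp(-\ln(m/\gamma))=\gamma/m$, which is the second equality and completes the argument.

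The main obstacle I anticipate is bookkeeping at the truncation boundary rather than any deep estimate. In particular, the length expression displayed inside the algorithm, $\mathrm{ceil}(|\textbf{A}_{k}|\textbf{S}_k\ln(m/\gamma)/\boldsymbol\alpha_k)+1$, omits the factor of $2$ that the example actually uses; only with that factor does the exponent reduce to $\gamma/m$ rather than to $\sqrt{\gamma/m}$. I would therefore state clearly which length is intended, and note that the $\mathrm{ceil}(\cdot)$ and the $+1$ can only enlarge $|\textbf{v}|_k$, so strictly they turn the claimed equality into the upper bound $\mathcal{L}(\textbf{v}_k)\le\gamma/m$, with exact equality holding for the un-rounded length.
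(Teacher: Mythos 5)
Your proof is correct and is essentially the paper's own argument: both compute $\mathcal{L}(\textbf{v}_k)$ as the two Laplace tails of $\textbf{A}_{k}\tilde{\textbf{N}}_k$ outside the symmetric window of length $|\textbf{v}|_k$, obtaining $exp(-\frac{\boldsymbol\alpha_k|\textbf{v}|_k}{2|\textbf{A}_{k}|\textbf{S}_k})$ (the paper sums the two tail integrals where you evaluate one tail via the CDF and double by symmetry, a purely cosmetic difference), and the second equality follows by substituting $|\textbf{v}|_k=\frac{2|\textbf{A}_{k}|\textbf{S}_k\ln(m/\gamma)}{\boldsymbol\alpha_k}$, which the paper states only in the remark following the next lemma. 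Your closing observation is also accurate: the factor of $2$ is missing from the length displayed inside Algorithm \ref{alg-combination-PDF} (though it appears in the example and in the paper's remark), and the $ceil(\cdot)+1$ rounding strictly yields $\mathcal{L}(\textbf{v}_k)\leq\gamma/m$ rather than equality, an inconsistency the paper's proof glosses over.
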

\begin{proof}
\begin{align*}
\mathcal{L}(\textbf{v}_k)=\int_{-\infty}^{-|\textbf{v}|_k/2}\frac{\boldsymbol\alpha_k}{2|\textbf{A}_{k}|\textbf{S}_k}exp(\frac{\boldsymbol\alpha_kz}{|\textbf{A}_{k}|\textbf{S}_k})dz\\
+\int_{|\textbf{v}|_k/2}^{\infty}\frac{\boldsymbol\alpha_k}{2|\textbf{A}_{k}|\textbf{S}_k}exp(\frac{\boldsymbol\alpha_kz}{|\textbf{A}_{k}|\textbf{S}_k})dz\\
=exp(-\frac{\boldsymbol\alpha_k|\textbf{v}|_k}{2|\textbf{A}_{k}|\textbf{S}_k})
\end{align*}
\end{proof}

\begin{lemma}
The probability loss of Algorithm \ref{alg-combination-PDF} is
\begin{align}
\mathcal{L}(\textbf{u})=\sum_{k=1}^m exp(-\frac{\boldsymbol\alpha_k|\textbf{v}|_k}{2|\textbf{A}_{k}|\textbf{S}_k})=\gamma
\end{align}
\end{lemma}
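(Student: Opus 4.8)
The plan is to reduce the claim to the per-vector loss already computed in the preceding lemma and then track how that truncation error accumulates through the convolution performed in Step 2 of Algorithm \ref{alg-combination-PDF}. First I would recall that each probability mass vector $\textbf{v}_k$ is obtained by discretizing and truncating the density of $\textbf{A}_{k}\tilde{\textbf{N}}_k$ to the finite window of length $|\textbf{v}|_k$, and that the preceding lemma established $\mathcal{L}(\textbf{v}_k)=\exp(-\frac{\boldsymbol\alpha_k|\textbf{v}|_k}{2|\textbf{A}_{k}|\textbf{S}_k})=\gamma/m$. Hence each truncated vector retains total mass $1-\gamma/m$, which is the quantity I will propagate.

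The key step is to compute the total surviving mass through the convolution $\textbf{u}=conv(\textbf{v}_1,\cdots,\textbf{v}_m)$. Summing the entries of a convolution factors as the product of the entry-sums of its arguments: from $\textbf{w}_k=\sum_j\textbf{u}_j\textbf{v}_{k-j+1}$ one gets $\sum_k\textbf{w}_k=(\sum_j\textbf{u}_j)(\sum_l\textbf{v}_l)$. Applying this inductively over the $m$ factors, the total mass of $\textbf{u}$ is $\prod_{k=1}^m(1-\gamma/m)=(1-\gamma/m)^m$, so the lost mass is $\mathcal{L}(\textbf{u})=1-(1-\gamma/m)^m$.

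To close the argument I would bound this exact loss by the sum of the individual losses using a union-bound, equivalently Bernoulli's inequality $(1-x)^m\ge 1-mx$ for $x\in[0,1]$: $\mathcal{L}(\textbf{u})=1-(1-\gamma/m)^m\le m\cdot\gamma/m=\sum_{k=1}^m\mathcal{L}(\textbf{v}_k)=\gamma$. This reproduces exactly the middle expression $\sum_{k=1}^m\exp(-\frac{\boldsymbol\alpha_k|\textbf{v}|_k}{2|\textbf{A}_{k}|\textbf{S}_k})$, each summand equalling $\gamma/m$, for the stated total of $\gamma$.

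The main obstacle is interpretive rather than computational: the lemma is written as an equality $\mathcal{L}(\textbf{u})=\gamma$, whereas the exact loss $1-(1-\gamma/m)^m$ is strictly smaller. I would therefore be careful to present the additive formula as a union-bound upper guarantee on the true loss, and to justify the subadditivity direction — namely that the mass discarded when truncating the summands can only over-count the mass missing from the convolved result, since no tail removed from one factor is restored by the others. This makes $\sum_k\mathcal{L}(\textbf{v}_k)=\gamma$ a valid bound that is tight to first order in $\gamma$, which is the sense in which the equality should be read.
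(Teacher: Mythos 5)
Your proposal is correct, and it is in fact more careful than the paper itself: the paper states this lemma with no proof at all, implicitly treating the truncation losses of the $m$ vectors as exactly additive, which is how the stated equality $\mathcal{L}(\textbf{u})=\sum_{k=1}^m\mathcal{L}(\textbf{v}_k)=\gamma$ arises. Your route makes the accounting rigorous: you use the fact that the entry-sum of a convolution is the product of the entry-sums of its factors, so the mass surviving in $\textbf{u}$ is exactly $(1-\gamma/m)^m$ and the true loss is
\begin{align*}
\mathcal{L}(\textbf{u}) \;=\; 1-\left(1-\frac{\gamma}{m}\right)^{m} \;\le\; m\cdot\frac{\gamma}{m} \;=\;\gamma,
\end{align*}
by Bernoulli's inequality. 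This buys two things the paper's bare assertion does not: (i) an exact expression for the loss, and (ii) the observation that the lemma's equality is literally false for $m\ge 2$ and should be read as an upper bound that is tight to first order in $\gamma$ — a correction, not merely a proof. The trade-off is that the paper's (implicit) additive accounting is what actually propagates into the later error bound (Theorem on $error(\textbf{u})<\gamma^2$), and since your bound shows the true loss is \emph{smaller} than $\gamma$, that downstream theorem remains valid under your reading; it would be worth saying this explicitly so the reader sees the rest of the section is unaffected. One small point to tighten: your closing "union-bound" justification (tails removed from one factor are not restored by others) is intuition layered on top of an argument that is already complete — the product identity plus Bernoulli is the whole proof, and the union-bound gloss is better cast as the probabilistic interpretation of why the inequality goes in the right direction.
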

Note $|\textbf{v}|_k$ is a predefined parameter, $|\textbf{v}|_k=-\frac{2\textbf{A}_{k}\textbf{S}_kln(\gamma/m)}{\boldsymbol\alpha_k}$,
but $|\textbf{u}|$ is the length of the returned $\textbf{u}$.

\begin{theorem}
\label{theo-error-linear}
The error of $\textbf{u}$ of in Algorithm \ref{alg-combination-PDF} satisfies
\begin{align*}
error(\textbf{u})< \gamma^2
\end{align*}
\end{theorem}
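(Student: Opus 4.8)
The plan is to exploit the fact that the only source of error in Algorithm~\ref{alg-combination-PDF} is the truncation of each component vector $\textbf{v}_k$, whose total cost has already been pinned down by the preceding lemma as $\mathcal{L}(\textbf{u})=\gamma$. Writing $p_i=\operatorname{E}(\textbf{u}_i)$ for the exact (untruncated) probability mass at position $i$ and $\textbf{u}_i$ for the value actually computed, I would first argue that truncation can only \emph{remove} mass, never add it, so that $d_i := p_i-\textbf{u}_i\ge 0$ for every $i$. With this one-sided structure in hand, $error(\textbf{u})=\sum_i d_i^2$ becomes a sum of squares of non-negative numbers whose linear sum is controlled by the total probability loss.

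Concretely, the key steps in order would be: (i) establish the pointwise domination $\textbf{u}_i\le p_i$; (ii) sum the per-position deficits, $\sum_i d_i\le \mathcal{L}(\textbf{u})=\gamma$, invoking the preceding lemma; and (iii) apply the elementary inequality $\sum_i d_i^2\le\bigl(\sum_i d_i\bigr)^2$, valid precisely because every $d_i\ge 0$ (expanding the right-hand side leaves only the non-negative cross terms $2\sum_{i<j}d_id_j$). Chaining these gives $error(\textbf{u})=\sum_i d_i^2\le\bigl(\sum_i d_i\bigr)^2\le\gamma^2$, and the inequality is strict as soon as the lost mass is spread over more than one bin, which is always the case here.

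For step (i) I would use that each truncated vector satisfies $\tilde{\textbf{v}}_{k,j}\le\textbf{v}_{k,j}$ entrywise (the tails are zeroed out, the kept entries unchanged) together with the fact that the $conv(\cdot,\cdot)$ operation is monotone on non-negative vectors: a fixed output bin $i$ is a sum of products $\prod_k\tilde{\textbf{v}}_{k,j_k}$ over index tuples with a fixed total offset, each of which is dominated term-by-term by the corresponding untruncated product, hence $\textbf{u}_i\le p_i$. The bound $\sum_i d_i\le\gamma$ (rather than an equality) already suffices, since bins of the true distribution lying outside the support of $\textbf{u}$ only add to the total loss without contributing to $\sum_i d_i^2$.

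The main obstacle is really the sign guarantee in step (i): the inequality $\sum_i d_i^2\le\bigl(\sum_i d_i\bigr)^2$ is false in general for signed $d_i$, so the whole argument hinges on showing that truncation produces errors of a single sign. Once the convolution-monotonicity argument secures $d_i\ge 0$, the rest is the one-line sum-of-squares estimate, and everything else — the value $\mathcal{L}(\textbf{u})=\gamma$ and the entrywise effect of truncation — is already supplied by the preceding lemmas and by the definition of $conv(\cdot,\cdot)$.
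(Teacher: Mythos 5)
Your proposal is correct and takes essentially the same route as the paper, whose entire proof is the chain $error(\textbf{u})=\sum_i\bigl(\textbf{u}_i-\operatorname{E}(\textbf{u}_i)\bigr)^2<\bigl[\sum_i\bigl(\textbf{u}_i-\operatorname{E}(\textbf{u}_i)\bigr)\bigr]^2=[\mathcal{L}(\textbf{u})]^2=\gamma^2$. Your write-up is in fact more careful than the paper's: the paper silently relies on the one-sided sign of the truncation deficits (your step (i), proved via monotonicity of $conv(\cdot,\cdot)$ on non-negative vectors) and on identifying the summed deficits with the probability loss, both of which are needed for the square-of-sum inequality to hold and are left implicit in the paper's one-line argument.
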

\begin{proof}
\begin{align*}
error(\textbf{u})=\sum_{\textbf{u}_i\in \textbf{u}} (\textbf{u}_i-\operatorname{E}(\textbf{u}_i))^2 < [\sum_{\textbf{u}_i\in \textbf{u}}(\textbf{u}_i-\operatorname{E}(\textbf{u}_i))]^2\\
=[\mathcal{L}(\textbf{u})]^2=\gamma^2
\end{align*}
\end{proof}
The trick to guarantee $error(\textbf{u})<\gamma^2$ is to choose $|\textbf{v}|_k$. However, the growth rate of $|\textbf{v}|_k$ with $m$ and $\gamma$ is very slow.
For example, even if $m=10^6$ and $\gamma=10^{-20}$, $-ln(\gamma/m)=60$. So it is important to conclude that
the error of Algorithm \ref{alg-combination-PDF} can be ignored.
The error in numerical calculation, which is inevitable for computer algorithm, would be even more noteworthy than $error(\textbf{u})$.
Although the error is not a problem,  the computational complexity increases fast.

\subsubsection{Complexity Analysis}
\begin{theorem}
\label{theo-complexity-2}
Let $\beta=sum^2(|\textbf{A}|diag(\textbf{S}./\boldsymbol\alpha))$,
Algorithm \ref{alg-combination-PDF} takes time
$O(log^2(m/\gamma)\beta)$.
\end{theorem}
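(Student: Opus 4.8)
The plan is to split the running time of Algorithm \ref{alg-combination-PDF} into its two phases---the construction of the $m$ probability mass vectors $\textbf{v}_k$ in Step 1 and the iterated convolution $\textbf{u}=conv(\textbf{v}_1,\cdots,\textbf{v}_m)$ in Step 2---and to show that the convolution dominates and yields the stated bound. First I would record the length of each vector: by construction $|\textbf{v}|_k=ceil(\frac{|\textbf{A}_{k}|\textbf{S}_k\ln(m/\gamma)}{\boldsymbol\alpha_k})+1$, so up to the additive rounding terms $|\textbf{v}|_k=\Theta(\frac{|\textbf{A}_{k}|\textbf{S}_k\ln(m/\gamma)}{\boldsymbol\alpha_k})$. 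Summing over $k$ and factoring out the common $\ln(m/\gamma)$ gives
\begin{align*}
\sum_{k=1}^m|\textbf{v}|_k=\Theta\Big(\ln(m/\gamma)\sum_{k=1}^m\frac{|\textbf{A}_{k}|\textbf{S}_k}{\boldsymbol\alpha_k}\Big)=\Theta\big(\ln(m/\gamma)\,sum(|\textbf{A}|diag(\textbf{S}./\boldsymbol\alpha))\big),
\end{align*}
which is exactly $\Theta(\ln(m/\gamma)\sqrt{\beta})$ by the definition of $\beta$. Since Step 1 fills each $\textbf{v}_k$ entry by entry from the closed-form cumulative distribution function at cost $O(|\textbf{v}|_k)$, the whole of Step 1 runs in $O(\sum_k|\textbf{v}|_k)=O(\ln(m/\gamma)\sqrt{\beta})$ time.

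The heart of the argument is the cost of the iterated convolution. I would convolve sequentially, setting $\textbf{w}_2=conv(\textbf{v}_1,\textbf{v}_2)$ and $\textbf{w}_j=conv(\textbf{w}_{j-1},\textbf{v}_j)$. Naive convolution of vectors of lengths $a$ and $b$ takes $O(ab)$ time, and by the length identity (\ref{eqn-property-convolution}) the partial product $\textbf{w}_{j-1}$ has length $\sum_{i=1}^{j-1}|\textbf{v}|_i-(j-2)\le\sum_{i=1}^{j-1}|\textbf{v}|_i$. Hence the $j$-th convolution costs $O\big((\sum_{i=1}^{j-1}|\textbf{v}|_i)|\textbf{v}|_j\big)$, and summing over $j$ collapses into the pairwise sum
\begin{align*}
\sum_{j=2}^m\Big(\sum_{i=1}^{j-1}|\textbf{v}|_i\Big)|\textbf{v}|_j=\sum_{1\le i<j\le m}|\textbf{v}|_i|\textbf{v}|_j\le\frac{1}{2}\Big(\sum_{k=1}^m|\textbf{v}|_k\Big)^2.
\end{align*}
Substituting the sum computed above yields a convolution cost of $O((\sum_k|\textbf{v}|_k)^2)=O(\ln^2(m/\gamma)\beta)=O(\log^2(m/\gamma)\beta)$, which dominates the $O(\ln(m/\gamma)\sqrt{\beta})$ cost of Step 1 and establishes the claim.

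The step I expect to be the main obstacle is controlling the additive rounding terms so that replacing $|\textbf{v}|_k$ by its leading term $\frac{|\textbf{A}_{k}|\textbf{S}_k\ln(m/\gamma)}{\boldsymbol\alpha_k}$ is legitimate. Because $ceil(x)\le x+1$, the exact sum is $\sum_k|\textbf{v}|_k\le\ln(m/\gamma)\sqrt{\beta}+2m$, and squaring this introduces cross and $m^2$ terms; absorbing them into $O(\ln^2(m/\gamma)\beta)$ requires $m=O(\ln(m/\gamma)\sqrt{\beta})$. I would discharge this under the natural assumption that each per-query ratio $\frac{|\textbf{A}_{k}|\textbf{S}_k}{\boldsymbol\alpha_k}$ is bounded below by a constant---which holds since sensitivities satisfy $\textbf{S}_k\ge 1$ and the privacy budget $\boldsymbol\alpha_k$ is bounded---so that $\sqrt{\beta}=\Omega(m)$ and the lower-order terms vanish. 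The only other subtlety is order-independence of the convolution bound: the estimate $\sum_{i<j}|\textbf{v}|_i|\textbf{v}|_j\le\frac12(\sum_k|\textbf{v}|_k)^2$ is valid for any convolution schedule, so the $O(\log^2(m/\gamma)\beta)$ ceiling is robust, and a smarter schedule or an FFT-based $conv$ could only improve the constant or exponent without violating the stated upper bound.
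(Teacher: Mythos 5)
Your proposal follows essentially the same route as the paper's own proof: Step 1 costs $O(\sum_k|\textbf{v}|_k)$, the sequential convolutions collapse into the pairwise sum $\sum_{i<j}|\textbf{v}|_i|\textbf{v}|_j\le\frac{1}{2}\left(\sum_k|\textbf{v}|_k\right)^2=O(\log^2(m/\gamma)\beta)$, which dominates. You are in fact more careful than the paper, which silently drops the ceiling and the additive $+1$ in $|\textbf{v}|_k$; your observation that absorbing those terms requires $m=O(\log(m/\gamma)\sqrt{\beta})$ (discharged via a lower bound on $|\textbf{A}_k|\textbf{S}_k/\boldsymbol\alpha_k$) identifies a degenerate-case gap the paper never addresses.
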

\begin{proof}
To construct a probability mass vector, it needs $|\textbf{v}|_k$ steps where $|\textbf{v}|_k=-\frac{2|\textbf{A}_{k}|\textbf{S}_kln(\gamma/m)}{\boldsymbol\alpha_k}$.
Then
it takes $ln(m/\gamma)\sum_{k=1}^m{2|\textbf{A}_{k}|\textbf{S}_k}/\boldsymbol\alpha_k$ steps to construct all the probability mass vectors.
For any convolution $conv(\textbf{v}_i,\textbf{v}_j)$, without loss of generality, we assume $|\textbf{v}|_i\geq |\textbf{v}|_j$. Then it takes
$(1+|\textbf{v}|_j)|\textbf{v}|_j+(|\textbf{v}|_i-|\textbf{v}|_j)|\textbf{v}|_j=|\textbf{v}|_i|\textbf{v}|_j+|\textbf{v}|_j^2/2+|\textbf{v}|_j/2$.
Because we know the length of $|\textbf{v}|$ in all the convolution steps according to the property of convolution in equation (\ref{eqn-property-convolution}),
the first, $\cdots$ to the last convolution takes $|\textbf{v}|_2|\textbf{v}|_1+|\textbf{v}|_1^2/2+|\textbf{v}|_1/2$, $\cdots$, $(|\textbf{v}|_1+|\textbf{v}|_2+\cdots+|\textbf{v}|_{m-1})|\textbf{v}|_m+|\textbf{v}|_m^2/2+|\textbf{v}|_m/2$.
For the all convolutions, it takes $4(ln^2(m/\gamma)\sum_{j=1,k=1}^m\frac{|\textbf{A}_{j}\textbf{A}_{k}|\textbf{S}_j\textbf{S}_k}{\boldsymbol\alpha_j\boldsymbol\alpha_k})$.
$\sum_{j=1,k=1}^m\frac{|\textbf{A}_{j}\textbf{A}_{k}|\textbf{S}_j\textbf{S}_k}{\boldsymbol\alpha_j\boldsymbol\alpha_k}$ can be written as
$\frac{1}{2}(\sum_{k=1}^m|\textbf{A}_{j}\textbf{S}_j/\boldsymbol\alpha_j|)^2-\frac{1}{2}\sum_{k=1}^m|\textbf{A}_{j}\textbf{S}_j/\boldsymbol\alpha_j|^2$, then the complexity can be
written as $O(log^2(m/\gamma)sum^2(|\textbf{A}|diag(\textbf{S}./\boldsymbol\alpha)))$.
Because by Theorem \ref{theo-hat-theta-BLUE} $\textbf{A}$ is determined by $\textbf{H}$, $\textbf{Q}$, $\boldsymbol\alpha$ and $\textbf{S}$, the computational
complexity is also determined by $\textbf{H}$, $\textbf{Q}_i$, $\boldsymbol\alpha$ and $\textbf{S}$.
\end{proof}

An important conclusion which can be drawn from the above analysis is to convolve all $\textbf{v}$ from the shortest to the longest. Because the running time
is subject to the length of vectors, this convolution sequence takes the least time.

\subsection{Summary}
To acquire a small error $\gamma^2$, the big sample size of Monte Carlo method would always be expected.
But the error requirement can be easily satisfied by probability calculation method without significant cost\footnote{We ignore the error of numerical calculation error in computer algorithms}.
However, the running time of Monte Carlo method increases slower than probability calculation method.
Therefore, when $log^2(m/\gamma)\beta$ is small,
probability calculation method would be better. When $log^2(m/\gamma)\beta$ becomes larger, Monte Carlo method would be more preferable.

\section{Application}
The proposed inference mechanism can be used in many scenarios such as parameter estimation, hypothesis testing and statistical inference control.
For example, to control the inference result of an adversary who assumes a claim $\mathcal{C}$ that the $\theta \in [L_0, U_0]$,
one may want to obtain the probability $Pr(\mathcal{C})$ using the inference method.
Once $Pr(\mathcal{C})$ becomes very high, it means the attacker is confidence about the claim $\mathcal{C}$ to be true. Then further queries
should be declined or at least be carefully answered(only returning the inference result of a new-coming query is one option).
In this way, the
statistical inference result can be quantified and controlled.
In this paper, we show how to apply the mechanism into a utility-driven query answering system.
Following assumption(which holds for all differentially private system) are made at the first place.

We assume the cells of the data are independent or only have negative correlations
\cite{Rastogi-adversarial-privacy}.
Generally speaking, this means adversaries cannot infer the answer of a cell from any other cells. specifically, cells are independent if they
do not have any correlation at all; cells are negative correlated\footnote{For two records $\textbf{v}_1$ and $\textbf{v}_2$, it's negative correlated if $Pr(\textbf{v}_1=t)\geq Pr(\textbf{v}_1=t|\textbf{v}_2=t$).}
 if the inference result is within the previous knowledge of the adversary. So the
inference from correlation does not help.
In this case,
the composability of privacy cost\cite{McSherry-PINQ} holds as follows.

\begin{theorem}[Sequential Composition \cite{McSherry-PINQ}]
\label{lemma-composition}
Let $M_i$ each provide $\boldsymbol\alpha_i$-differential privacy.
The sequence of $M_i$ provides $(\sum_i{\boldsymbol\alpha_i})$-differential privacy.
\end{theorem}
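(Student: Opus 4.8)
The plan is to proceed directly from Definition~\ref{def-DP}, bounding the likelihood ratio of the combined mechanism by the product of the individual likelihood ratios. First I would fix any pair of neighboring databases $D_1$ and $D_2$, and consider the composite mechanism $M$ that releases the tuple of outcomes $(r_1,\dots,r_k)$, where $r_i$ is the answer returned by $M_i$. Since each $M_i$ draws its own independent randomness, the joint probability factorizes as $Pr(M(D)=(r_1,\dots,r_k)) = \prod_i Pr(M_i(D)=r_i)$, and the same factorization holds for both $D_1$ and $D_2$.

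The core step is then to write the composite likelihood ratio as a product and bound each factor. Concretely, I would argue
\begin{align*}
\frac{Pr(M(D_1)=(r_1,\dots,r_k))}{Pr(M(D_2)=(r_1,\dots,r_k))}
= \prod_{i} \frac{Pr(M_i(D_1)=r_i)}{Pr(M_i(D_2)=r_i)}
\leq \prod_{i} e^{\boldsymbol\alpha_i}
= e^{\sum_i \boldsymbol\alpha_i},
\end{align*}
where the inequality applies the $\boldsymbol\alpha_i$-differential privacy guarantee of each $M_i$ factor by factor. Taking the supremum over all output tuples $r\subseteq Range(M)$ preserves the bound, which is exactly the statement that $M$ satisfies $(\sum_i \boldsymbol\alpha_i)$-differential privacy.

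The main obstacle I anticipate is the \emph{adaptive} case, in which mechanism $M_i$ is allowed to depend on the realized outputs $r_1,\dots,r_{i-1}$ of the earlier mechanisms; then the factorization above is a product of \emph{conditional} probabilities rather than independent ones. To handle this I would condition on the prefix $(r_1,\dots,r_{i-1})$ and observe that, for any fixed prefix, the mechanism $M_i(\cdot \mid r_1,\dots,r_{i-1})$ is itself $\boldsymbol\alpha_i$-differentially private by hypothesis, so each conditional factor is still bounded by $e^{\boldsymbol\alpha_i}$ and the telescoping product argument goes through unchanged. A secondary technical point, which I would note but not belabor, is the measure-theoretic treatment of the ratio when $Range(M)$ is continuous (as with the Laplace noise used throughout the paper): there one works with the densities of the joint distribution, and the same pointwise bound on the ratio of densities yields the result. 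Neither issue changes the final constant $\sum_i \boldsymbol\alpha_i$, so the simple product bound is the heart of the proof.
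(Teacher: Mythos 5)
Your proof is correct, and it is essentially the canonical argument: the paper itself does not prove this theorem but imports it from \cite{McSherry-PINQ}, where the proof is exactly your factorization of the joint likelihood into per-mechanism ratios, each bounded by $e^{\boldsymbol\alpha_i}$, giving $e^{\sum_i\boldsymbol\alpha_i}$ overall. Your two caveats (conditioning on the prefix for adaptive composition, and working with densities so the pointwise bound integrates to a bound over arbitrary output events) are the right technical points and are handled just as you describe.
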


\begin{theorem}[Parallel Composition \cite{McSherry-PINQ}]
\label{lemma-parallel}
If $D_i$ are disjoint subsets of the original database and $M_i$ provides $\boldsymbol\alpha_i$-differential privacy for each $D_i$,
then the sequence of $M_i$ provides $max(\boldsymbol\alpha)$-differential privacy.
\end{theorem}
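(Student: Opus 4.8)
The plan is to reduce the joint privacy guarantee to a single factor in a product of per-mechanism likelihood ratios, exploiting disjointness. First I would fix two neighboring databases $D$ and $D'$ that differ in exactly one record. Because the subsets $D_i$ are disjoint, the record in which $D$ and $D'$ differ lies in at most one block, say $D_{i_0}$; consequently $D_i$ and $D'_i$ coincide for every $i \neq i_0$, while $D_{i_0}$ and $D'_{i_0}$ are themselves neighboring databases.

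Next I would use the independence of the mechanisms $M_i$ (each run separately on its own block) to factor the probability of any joint output $(r_1,\dots,r_k)$ as $\prod_i Pr(M_i = r_i \mid D_i)$. Forming the ratio of this joint probability computed under $D$ against the one under $D'$ yields a product of per-mechanism ratios $\prod_i \frac{Pr(M_i = r_i \mid D_i)}{Pr(M_i = r_i \mid D'_i)}$. For every $i \neq i_0$ the two inputs agree, so that factor equals exactly $1$; the only surviving factor, corresponding to $i_0$, is at most $e^{\boldsymbol\alpha_{i_0}}$ by the $\boldsymbol\alpha_{i_0}$-differential privacy of $M_{i_0}$. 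Taking the supremum over the output range and over the possible location $i_0$ of the differing record gives the overall bound $e^{\max(\boldsymbol\alpha)}$, establishing $\max(\boldsymbol\alpha)$-differential privacy.

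The step that deserves the most care is the factorization, which relies on the $M_i$ being applied independently to their disjoint blocks; this is exactly what the independence (or negative-correlation) assumption stated just before the theorem supplies. The genuinely new ingredient compared with Sequential Composition (Theorem \ref{lemma-composition}) is the combinatorial observation that a single-record change is confined to one block: there every mechanism sees the altered record and the budgets add, whereas here disjointness annihilates all but one factor and replaces the sum by a maximum. I expect no analytic obstacle beyond stating this factorization cleanly and observing that the worst case over $i_0$ is governed by $\max(\boldsymbol\alpha)$.
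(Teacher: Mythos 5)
Your proof is correct and is the standard argument for parallel composition: disjointness confines a single-record change to at most one block, the joint output probability factors across mechanisms, and every ratio factor collapses to $1$ except the one indexed by the affected block, which is bounded by $e^{\boldsymbol\alpha_{i_0}} \leq e^{\max(\boldsymbol\alpha)}$. Note that the paper itself offers no proof of this statement — it imports the theorem verbatim from McSherry's PINQ paper — so there is nothing in-paper to diverge from; your argument is essentially the one in the cited source. One small correction: the factorization you rely on is justified by the mechanisms being run with independent random coins (implicit in how each $M_i$ adds its own Laplace noise), not by the paper's preceding assumption that the data cells are independent or negatively correlated; that assumption concerns an adversary's prior over the data and is needed for the semantic interpretation of the privacy guarantee, not for the factorization of the mechanisms' output distribution.
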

%
%
%

\subsection{Framework of Query-answering System}
We allow a user require the utility of an issued query $Q$ in the form of $1-\delta$ credible interval with length less than $2\epsilon$.
If the derived answer satisfies user's requirement, then no budget needs to be allocated because the estimation can be returned.
Only when the estimation can not meet the utility requirement, the query mechanism is invoked for a differentially private answer.
In this way, not only the utility is guaranteed for users, but also the privacy budget can be saved so that the lifetime of the system can be extended.

We embed a budget allocation method in the query mechanism to calculate the minimum necessary budget for the query.
Thus the privacy budget can be conserved.
Meanwhile, to make sure the query mechanism can be used to answer the query,
the overall privacy cost of the system is also.
Only when it does not violate the privacy bound to answer the query, the query mechanism can be invoked.

The framework of our model is illustrated in Figure \ref{Fig_Frame_online}.

\begin{figure}[h!]
\hspace{-0.8cm}
\includegraphics[width=10cm]{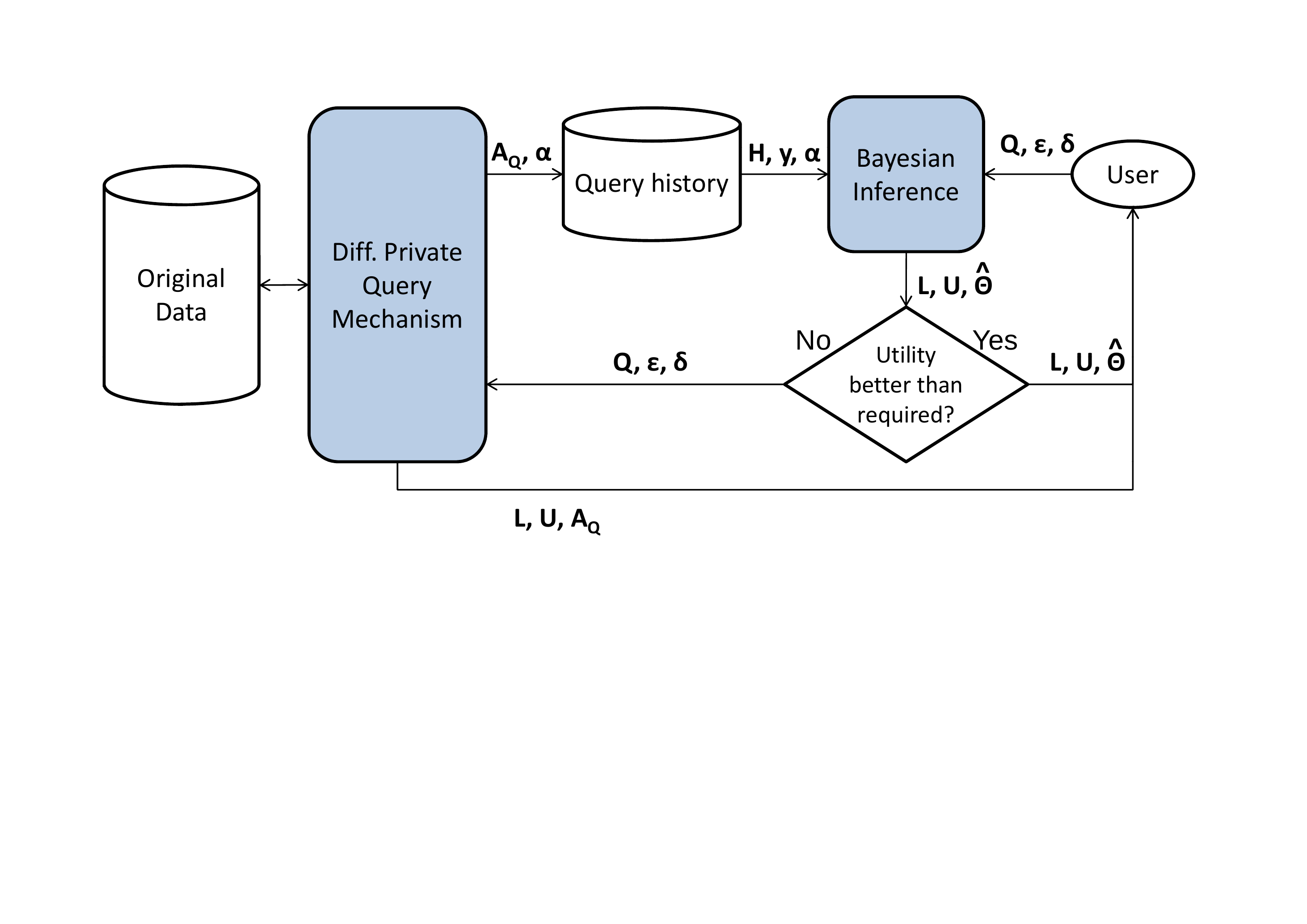}
\vspace{-3.6cm}
\caption{Framework of query-answering system}
\label{Fig_Frame_online}
\end{figure}

\subsection{Confidence Level and Credible Interval}
\label{cha3:sec-error}

If we have a certain interval $[L,\ U]$, then the confidence level can be derived by the probability function of Bayesian inference.
The approach is to cumulate all the probability mass in this interval. Then the summation of probability mass would be the confidence level.

In the query-answering system, we let users demand $1-\delta$ credible interval.
This problem can be formulated as following:
given the probability function and confidence level $1-\delta$, how to find the narrowest interval $[L,\ U]$ containing probability $1-\delta$?
First, it's easy to prove that the probability mass vector is symmetric, which means the probability function of $\theta$ would always be an even function.
Because each Laplace noise $\tilde{\textbf{N}}_i$ has symmetric probability distribution, the summation $\textbf{A}\tilde{\textbf{N}}$ is also symmetric.
The midpoint of probability mass vector always has the highest probability. Then the approach is to cumulate the probability mass from the midpoint to the left or right side until it reaches $\frac{1-\delta}{2}$.
Algorithm \ref{alg-usefulness} summarizes the process.

\begin{algorithm}
\caption{Find the credible interval}
\begin{algorithmic}
\label{alg-usefulness}
\REQUIRE{probability mass vector $v$, $\delta$, $\textbf{A}$, $\textbf{y}$, i}\\
\STATE 1. $L\leftarrow \textbf{Ay}$; $U \leftarrow \textbf{Ay}$; $C_m\leftarrow \textbf{u}[|\textbf{u}|/2+\frac{1}{2}]$;\\
2. \WHILE{$C_m<1-\delta$}
\STATE
$L\leftarrow L-1$; $U\leftarrow U+1$\\
$C_m\leftarrow C_m+2\textbf{u}[|\textbf{u}|/2+\frac{1}{2}+L-\textbf{Ay}]$;\\
\ENDWHILE
\RETURN{$L,\ U$}
\end{algorithmic}
\end{algorithm}
\begin{itemize}
\item
{\bf Example.}
Assume a user is interested in two questions: 1. what is the $95\%$ credible interval of $\theta$;
2. What is the probability that $\theta>0$?

For question 1, we use Algorithm \ref{alg-usefulness} to derive the credible interval as $[-41,\ 125]$. For question 2,
we calculate $sum(\textbf{u}(|\textbf{u}|/2-\frac{1}{2}-round(\theta)+0),\cdots, \textbf{u}(|\textbf{u}|))$, so $Pr(\theta>0)=0.88$.
Figure \ref{Fig-probability-mass-vectors-PC} shows the $95\%$ credible interval.
\end{itemize}

\subsection{Budget Allocation Method}
Only when the utility of estimate does not meet the requirement, we need to invoke the differentially private query mechanism,
return $\mathcal{A}(D)$ and interval $[\mathcal{A}(D)-\epsilon, \mathcal{A}(D)+\epsilon]$ to the user,
add the query $Q$ to query history $\textbf{H}$, noisy answer $\mathcal{A}(D)$ to $\textbf{y}$ and allocated budget to $\boldsymbol\alpha$.

Theorem \ref{theo-budget-allocation} shows the budget allocation method which spends
$S_Q \frac{-ln\delta}{\epsilon}$ privacy cost.
\begin{theorem}
\label{theo-budget-allocation}
Given the utility requirement $1-\delta$ credible interval with length less than $2\epsilon$,
it's enough to allocate $\alpha=S_Q \frac{-ln\delta}{\epsilon}$.
\end{theorem}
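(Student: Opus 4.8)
The plan is to observe that when the query mechanism is actually invoked, it returns a single fresh Laplace answer $\mathcal{A}(D)=\theta+\tilde{N}(\alpha/S_Q)$ together with the interval $[\mathcal{A}(D)-\epsilon,\ \mathcal{A}(D)+\epsilon]$, whose length is exactly $2\epsilon$ and therefore automatically satisfies the length constraint $U-L\leq 2\epsilon$. So the only thing left to establish is the confidence-level requirement, namely that this interval is a genuine $1-\delta$ credible interval for $\theta$ in the sense of Definition \ref{definition-usefulness}.

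First I would reduce the credible-interval condition to a tail bound on the single injected Laplace noise. Under the uninformative prior, observing $\mathcal{A}(D)$ makes the posterior belief about $\theta$ equal to $f_{\tilde{N}}(\mathcal{A}(D)-\theta)$ by Lemma \ref{lemma-pdf-theta-i} (specialized to the fresh single query, i.e. $\textbf{A}=1$), which is a Laplace law centered at $\mathcal{A}(D)$ with scale $S_Q/\alpha$. Hence the event $\theta\in[\mathcal{A}(D)-\epsilon,\ \mathcal{A}(D)+\epsilon]$ coincides with the event $|\tilde{N}(\alpha/S_Q)|\leq\epsilon$, and the requirement $Pr(L\leq\theta\leq U)\geq 1-\delta$ becomes the equivalent tail condition $Pr(|\tilde{N}(\alpha/S_Q)|>\epsilon)\leq\delta$.

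Second I would evaluate this tail exactly. Using the Laplace density of equation (\ref{fml-Laplace-distribution}) with $\alpha$ replaced by $\alpha/S_Q$ and integrating the two symmetric tails,
\begin{align*}
Pr(|\tilde{N}(\alpha/S_Q)|>\epsilon)=2\int_{\epsilon}^{\infty}\frac{\alpha}{2S_Q}\exp\Big(-\frac{\alpha}{S_Q}x\Big)dx=\exp\Big(-\frac{\alpha\epsilon}{S_Q}\Big).
\end{align*}
Imposing $\exp(-\alpha\epsilon/S_Q)\leq\delta$ and taking logarithms (both sides positive, and $\delta\in(0,1)$ so $\ln\delta<0$) gives $\alpha\epsilon/S_Q\geq-\ln\delta$, i.e. $\alpha\geq S_Q\frac{-\ln\delta}{\epsilon}$. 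Choosing $\alpha=S_Q\frac{-\ln\delta}{\epsilon}$ attains the bound with equality, so this budget is exactly sufficient to meet the utility requirement, which is the claim.

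There is essentially no hard computational step; the part demanding care is the modeling reduction rather than the algebra. I must justify that the interval the system returns here is centered at the \emph{single} fresh noisy answer and not recomputed from the whole history $\textbf{H}$ through the full Bayesian inference of Section \ref{sec-Bayesian-inference}, and that under the uninformative prior its coverage therefore depends only on the one newly added Laplace noise rather than on the correlated history. Once this reduction is pinned down, the tail evaluation and the logarithmic inversion are routine, and the fact that equality holds at $\alpha=S_Q\frac{-\ln\delta}{\epsilon}$ shows the allocation is not merely sufficient but tight, so no smaller budget would guarantee the requested credible interval.
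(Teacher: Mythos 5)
Your proof is correct and follows essentially the same route as the paper's: both center the returned interval at the fresh Laplace answer $\mathcal{A}(D)$, reduce the $1-\delta$ coverage requirement to a two-sided tail bound on the single injected noise $\tilde{N}(\alpha/S_Q)$, and invert the exponential to obtain $\alpha \geq S_Q\frac{-\ln\delta}{\epsilon}$. One remark: your tail evaluation $Pr(|\tilde{N}(\alpha/S_Q)|>\epsilon)=\exp(-\alpha\epsilon/S_Q)$ is the correct computation, whereas the paper's own intermediate line reads $1-\exp\bigl(-\tfrac{\epsilon\alpha/S_Q}{2}\bigr)$, carrying a spurious factor of $2$ in the exponent that is inconsistent with the final bound it then states, so your version actually repairs that algebraic slip.
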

\begin{proof}
Let $\mathcal{A}_Q(D)=Q(D)+\tilde{N}(\alpha/S_Q)$.
Because probability function of laplace distribution is symmetric, in definition \ref{definition-usefulness}, $[\mathcal{A}(D)-\epsilon, \mathcal{A}(D)+\epsilon]$ gives the most probability mass.
\begin{multline}
\label{fml-budget-allocation-1}
Pr(-\epsilon\leq \mathcal{A}_Q(D)-Q(D)\leq \epsilon)\\
=Pr(-\epsilon\leq \tilde{N} \leq \epsilon)\\
=\int_{-\epsilon}^{\epsilon}\frac{\alpha/S_Q}{2}exp(-{\alpha/S_Q |x|})dx\\
=2\int_{-\epsilon}^0 \frac{\alpha/S_Q}{2}exp({\alpha/S_Q x})dx\\
=1-exp(-\frac{\epsilon \alpha/S_Q}{2})\geq 1-\delta \quad
\end{multline}
So we have $\alpha\geq S_Q \frac{-ln\delta}{\epsilon}$.
\end{proof}

\subsection{Privacy Cost Evaluation}

Note that the system should also be a bounded differentially private system. Therefore, we cannot answer unlimited queries.
Given a privacy budget as the total bound, we should also compute the current privacy cost of the system to test whether the privacy cost
is within privacy budget if answering the query $Q$.

The system privacy cost can be derived by Theorem \ref{theo-privacy-cost}.
\begin{theorem}
\label{theo-privacy-cost} Given current differentially private
system with query history , denoted by $\textbf{H}$ and privacy cost
of all query , denoted by $\boldsymbol\alpha$, the system privacy cost,
denoted by $\bar{\alpha}$, can be derived from equation
(\ref{fml-system-privacy}, \ref{fml-B}).
\begin{align}
\label{fml-system-privacy} \bar{\alpha}=max(\textbf{B})
\\
\label{fml-B}
\textbf{B}=(\boldsymbol\alpha./\textbf{S})^Tabs(\textbf{H})
\end{align}
where $\textbf{B}$ be vector of used privacy budget of each cell,
function $diag()$ transforms a vector to a matrix with each of
element in diagonal, $Sum_{row}$ means the sum of rows.
\end{theorem}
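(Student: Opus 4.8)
The plan is to decompose the overall system privacy cost into a per-cell accounting and then reassemble the per-cell costs using the two composition theorems stated above. The guiding intuition is that each row $\textbf{H}_k$ of the history, answered by Laplace mechanism with noise $\tilde{N}(\boldsymbol\alpha_k/\textbf{S}_k)$, spends privacy on each cell $j$ in proportion to the coefficient $|\textbf{H}_{kj}|$ it places on that cell, and that the cells, being disjoint subsets of the database, may be accounted in parallel rather than summed.

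First I would fix a cell $j$ and analyze the privacy cost that a single query $\textbf{H}_k$ incurs on it. In the unbounded model, two neighboring databases differ by one record; a record confined to cell $j$ changes $\textbf{x}_j$ by one and hence changes the true answer $\textbf{H}_k\textbf{x}$ by exactly $|\textbf{H}_{kj}|$. Since the added noise is $\tilde{N}(\boldsymbol\alpha_k/\textbf{S}_k)$, i.e.\ Laplace with scale $\textbf{S}_k/\boldsymbol\alpha_k$, the same computation used to prove Theorem \ref{theo-Lap-DP} shows that, with respect to changes confined to cell $j$, query $\textbf{H}_k$ satisfies $\frac{\boldsymbol\alpha_k|\textbf{H}_{kj}|}{\textbf{S}_k}$-differential privacy. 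This per-cell cost is at most $\boldsymbol\alpha_k$ because $\textbf{S}_k=max(abs(\textbf{H}_k))$ by Lemma \ref{lemma-GS}, a consistency check that the decomposition never overspends relative to the whole-query guarantee.

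Next I would sum over the $m$ queries. All queries are answered on the same database, so for a fixed cell $j$ they compose sequentially; by Theorem \ref{lemma-composition} the accumulated privacy cost on cell $j$ is
\[
\textbf{B}_j=\sum_{k=1}^m \frac{\boldsymbol\alpha_k}{\textbf{S}_k}\,|\textbf{H}_{kj}|,
\]
which is precisely the $j$-th entry of $(\boldsymbol\alpha./\textbf{S})^T abs(\textbf{H})$, establishing equation (\ref{fml-B}). Finally, because the data cube cells partition the records into disjoint subsets and, under the independence/negative-correlation assumption made earlier, inference across cells yields no additional leakage, Theorem \ref{lemma-parallel} applies across cells: the system privacy cost is the maximum of the per-cell costs, $\bar{\alpha}=max(\textbf{B})$, which is equation (\ref{fml-system-privacy}).

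The step I expect to be the main obstacle is the first one, rigorously justifying the per-cell sensitivity decomposition. One must argue that restricting attention to neighboring databases differing only within cell $j$ legitimately reduces the effective sensitivity of $\textbf{H}_k$ from $\textbf{S}_k$ to $|\textbf{H}_{kj}|$, and that this localized guarantee is exactly what parallel composition over the disjoint cells consumes. This hinges on the disjointness of the cells together with the independence/negative-correlation assumption, so that the localized per-cell guarantees can be knitted back together via Theorem \ref{lemma-parallel} without double counting the change of a single record across several cells.
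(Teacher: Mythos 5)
Your proposal is correct and follows essentially the same route as the paper's own proof: the per-cell local cost $|\textbf{H}_{kj}|\boldsymbol\alpha_k/\textbf{S}_k$ obtained by fixing all records outside cell $j$ (reducing the query's effective sensitivity to $|\textbf{H}_{kj}|$ while the noise scale stays $\textbf{S}_k/\boldsymbol\alpha_k$), then sequential composition (Theorem \ref{lemma-composition}) across queries to get $\textbf{B}$, and parallel composition (Theorem \ref{lemma-parallel}) under the independence/negative-correlation assumption to take the maximum across cells. Your added consistency check that each per-cell cost is bounded by $\boldsymbol\alpha_k$ is a nice touch not present in the paper, but the argument is otherwise the same.
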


\begin{proof}
First, we prove for each query $H_i$ in $\textbf{H}$, the privacy
budget cost of each cell $x_j$ is $\boldsymbol\alpha_i/\textbf{S}_i*abs(H_i)$; then
add up each cell and each row, we get the result.

For any query $Q$, the differentially private answer is returned as
$\mathcal{A}_Q$. Assume an adversary knows all the records of the
data set except one record in cell $x_j$, we define the local
privacy cost for cell $x_j$ is $\boldsymbol\alpha_j$
 if $\mathcal{A}_Q$ is $\boldsymbol\alpha_j$-differentially private where
\begin{align}
\label{fml-budget-cost} exp(\boldsymbol\alpha_j)=\operatorname{sup}_{r\subseteq
Range(Q)}\frac{Pr[\mathcal{A}_Q(D)=r | D=D_1]}{Pr[\mathcal{A}_Q(D)=r
| D=D_2]}
\end{align}
where $D_1$ and $D_2$ are neighboring databases the same as
definition \ref{def-DP}. Next we compute the local privacy cost of
each cell for each query.

For each query $\textbf{H}_i$, the Laplace noise is
$\tilde{N}(\boldsymbol\alpha_i/S_{H_i})$ according to equation
(\ref{fml-lap-mec}). By lemma \ref{lemma-GS} and equation
(\ref{fml-SH}), it is $\tilde{N}(\boldsymbol\alpha_i/\textbf{S}_i)$. For
the cell $x_j$, assume an adversary knows all but one record in
$x_j$, then the count numbers of other cells become constant in this
query. Thus this query becomes
$\textbf{y}_i=\textbf{H}_{ij}\textbf{x}_j+\tilde{N}(\boldsymbol\alpha_i/\textbf{S}_i)+constant$.
The $S$ of this query is $H_{ij}$ and the noise of this query is
$\tilde{N}(\boldsymbol\alpha_i/\textbf{S}_i)$. By Theorem
\ref{theo-Lap-DP}, it guarantees $H_{ij}
\boldsymbol\alpha_i/\textbf{S}_i$-differential privacy. Then the budget cost of
$x_j$ is $H_{ij} \boldsymbol\alpha_i/\textbf{S}_i$.

By theorem \ref{lemma-composition}, the local privacy cost composes
linearly for all queries. Then the local privacy costs can add up by
each query.

Because we assume cells are independent or negatively correlated in
this paper,
 by Theorem \ref{lemma-parallel}, the privacy costs of all cells do not affect each other.
Therefore, we prove equation (\ref{fml-B}).

The highest privacy cost indicates the current system privacy cost,
denoted by $\bar{\alpha}$. So we prove equation
(\ref{fml-system-privacy}).
\end{proof}

\begin{itemize}
\item
{\bf Example.}
In equation (\ref{fml-H}), $\textbf{H}_6=[2\ 1\ 0\ 0]$, $\textbf{x}_1=10$, $\textbf{x}_2=20$ by equation
(\ref{eqn_x}), $\boldsymbol\alpha_6=0.05$ by equation (\ref{fml-A}).
According to equation (\ref{fml-y}), $\textbf{y}_6=\textbf{H}_6
\textbf{x}+\tilde{N}(\boldsymbol\alpha_6/2)=2\textbf{x}_1+1\textbf{x}_2+\tilde{N}(0.05/2)$.
For the cell $\textbf{x}_1$ with coefficient 2, assume an adversary
knows all but one record in $\textbf{x}_1$, then the count numbers
of other cells become constant in this query. So
$\textbf{y}_3=20+2\textbf{x}_1+\tilde{N}(0.05/2)=2\textbf{x}_1+\tilde{N}(0.05/2)$.
The $S$ of this query is 2 and the noise of this query is
$\tilde{N}(0.05/2)$. According to Theorem \ref{theo-Lap-DP}, it
guarantees $0.05$-differential privacy. Then the local privacy cost
of $\textbf{x}_1$ is $0.05$. Similarly for $\textbf{x}_2$ with
coefficient 1, $\textbf{y}_3=20+1\textbf{x}_2+\tilde{N}(0.05/2)$, so
the local privacy cost of $\textbf{x}_2$ is $0.025$; for
$\textbf{x}_3$ with coefficient 0,
$\textbf{y}_3=40+0*\textbf{x}_3+\tilde{N}(0.05/4)$, so the local
privacy cost of $\textbf{x}_3$ is 0.

We obtain $\textbf{B}=[0.1, 0.275,     0.25,     0.375]$, thus $\bar{\alpha}=0.375$. 
\end{itemize}


\section{Experimental study}
\label{sec-exp}
\subsection{Experiment Setup}
\textbf{Environment}
All the algorithms are written in MATLAB. All the functions in this paper are consistent with MATLAB functions.
All experiments were run on a
computer with Intel P8600(2 * 2.4 GHz) CPU and 2GB memory.

\begin{table}
\label{tbl-denotation}
\centering
\begin{tabular}{cc}
\hline
\multicolumn{2}{c}{{\bf Denotation}}\\
\hline
m& $\#$ of records in history, $\textbf{H}\in \mathbb{R}^{m\times n}$\\
n& $\#$ of cells, $\textbf{H}\in \mathbb{R}^{m\times n}$\\
$m_s$ & sample size for Monte Carlo method\\
$\beta$ & $\beta=sum^2(|\textbf{A}|diag(\textbf{S}./\boldsymbol\alpha))$\\
$\bar{\alpha}$ & system privacy cost\\
$R_a$&answering ratio defined in Definition \ref{def-answer-ratio}\\
$R_i$&ratio of satisfied answers in Definition \ref{def-ratio-of-satisfaction}\\
$E$& relative error in Definition \ref{def-relative-error}\\
\hline
\end{tabular}
\end{table}

\textbf{Denotation} Denotations are summarized in Table \ref{tbl-denotation}. The series a:b:c means the series from a to c with space b.
For example, 100:100:500 means 100, 200, 300, 400, 500.

\textbf{Settings} The efficiency of  BLUE is shown in section \ref{sec-exp-point}. Next the accuracy and efficiency of Bayesian inference
is shown in section \ref{sec-exp-interval}. Finally, we measure the overall performance of a utility-driven online query answering system in section \ref{sec-exp-system}.

\subsection{BLUE}
\label{sec-exp-point}
Because the linear solution $\hat{\theta}=\textbf{Ay}$ is used in following steps, running time of BLUE is measured.
\subsubsection{Running Time vs. $m$}
Let
$m=100:100:2000$, $n=100$, $\boldsymbol\alpha$ be a vector of random variables uniformly distributed in $[0,\ 1]$, $max(\textbf{S})=10$.
The impact of $m$ on the running time is shown in Figure \ref{Fig_MSE_BLUE_time}(left). Running time increases slowly with $m$.
\begin{figure}[h!]
\begin{minipage}{0.23\textwidth}
\centering
\includegraphics[width=4cm]{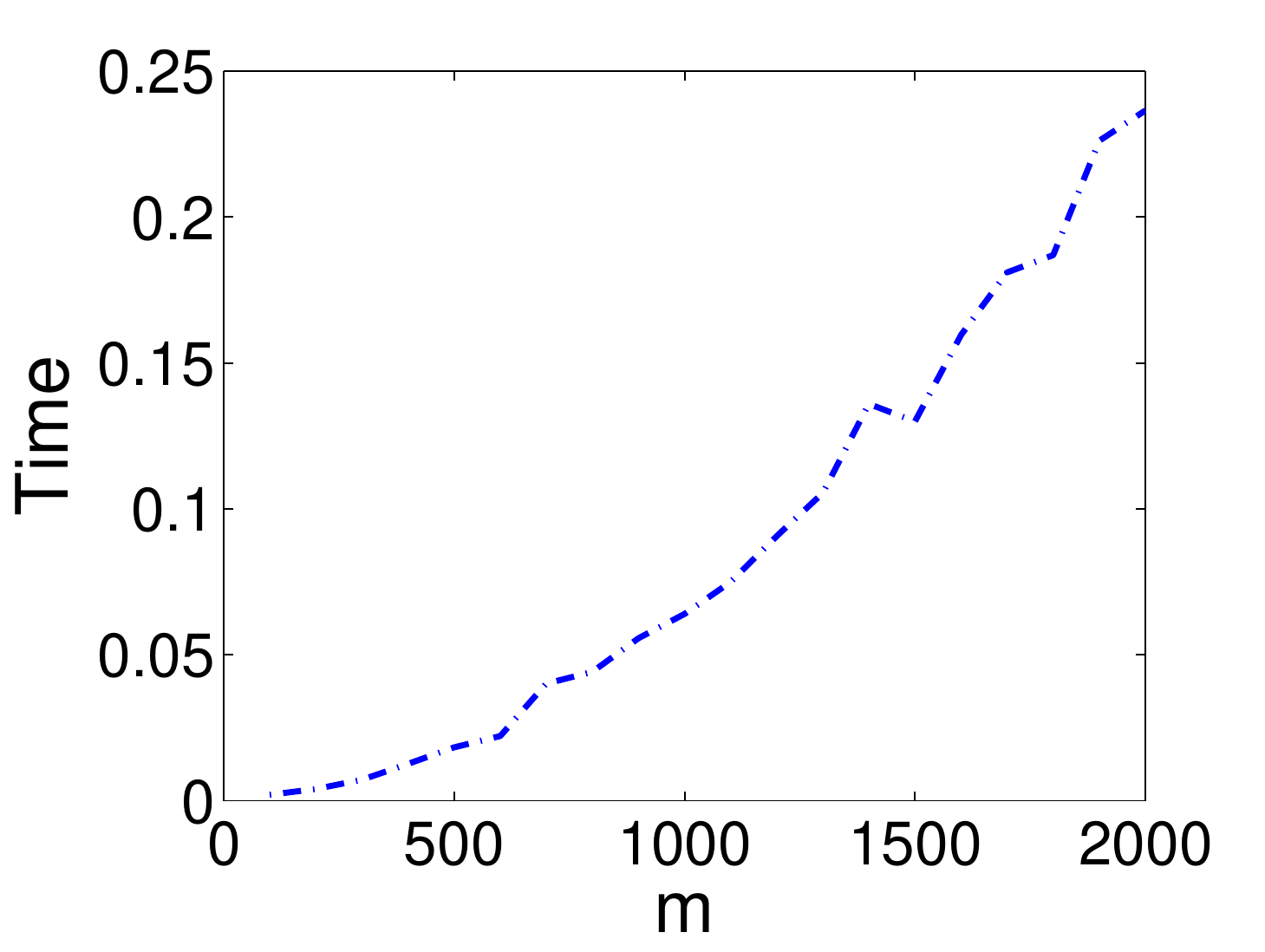}
\end{minipage}
\begin{minipage}{0.23\textwidth}
\includegraphics[width=4cm]{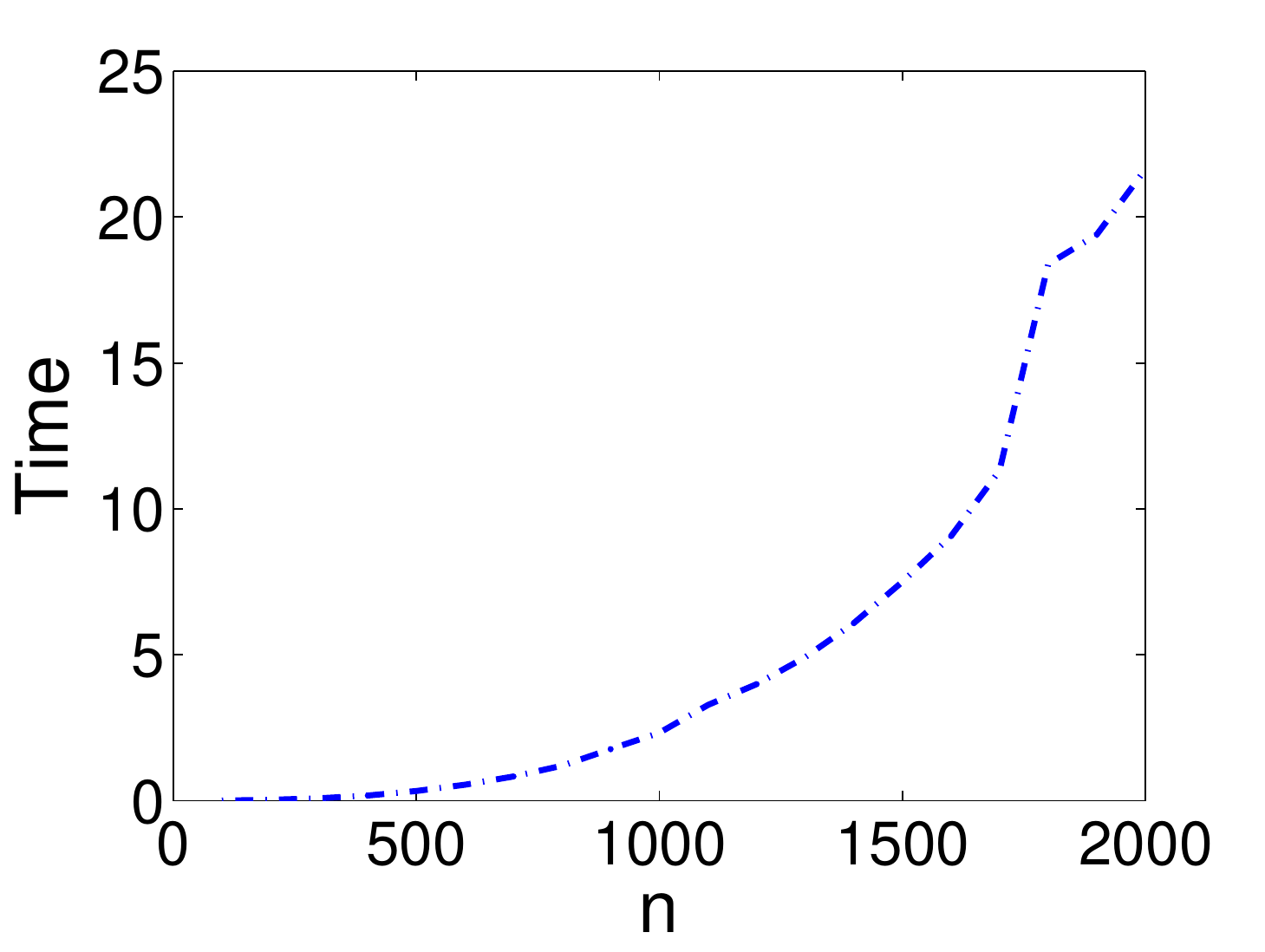}
\end{minipage}
\caption{Running time vs. $m$(left) and $n$(right)}
\label{Fig_MSE_BLUE_time}
\end{figure}
\subsubsection{Running Time vs. $n$}
Let
$n=100:100:2000$, $\boldsymbol\alpha$ be a vector of random variables uniformly distributed in $[0,\ 1]$, $max(\textbf{S})=10$.
Because the linear system should be overdetermined, which means $m>n$, we set  $m=2n$. From Figure \ref{Fig_MSE_BLUE_time}(right) we know that
running time rise significantly with $n$, which actually has verified the motivation of \cite{Improving-utility-PCA} that proposed the PCA(Principle Component Analysis) and maximum entropy methods
to boost the speed of this step.

\subsection{Bayesian Inference}
\label{sec-exp-interval}
By Corollary \ref{corollary-error-MC}, we can derive a bound for $m_s$,
 $m_s>\frac{|u|}{\gamma^2}max(u)(1-max(u))+1$. However, this is not a sufficient bound because to implement Monte Carlo method a large number of sample size is needed in the first place.
Therefore, we set the minimal sample size to be $10^4$. When the derived bound is larger than $10^4$, let $m_s$ be it. So $m_s=10^4<(\frac{|u|}{\gamma^2}max(u)(1-max(u))+1)?(\frac{|u|}{\gamma^2}max(u)(1-max(u))+1):10^4$.
\subsubsection{Running Time vs. $m$}
Let
$\gamma=0.01$,  $max(S)=5$, $n=100$, $m=200:100:1000$. Figure \ref{Fig_time_MC_PC_m}(left) shows that running time of MC rises with $m$, which complies with Theorem \ref{theo-complexity-1}.
But $m$ does not impact the running time of PC significantly.
\begin{figure}[h!]
\begin{minipage}{0.23\textwidth}
\centering
\includegraphics[width=4cm]{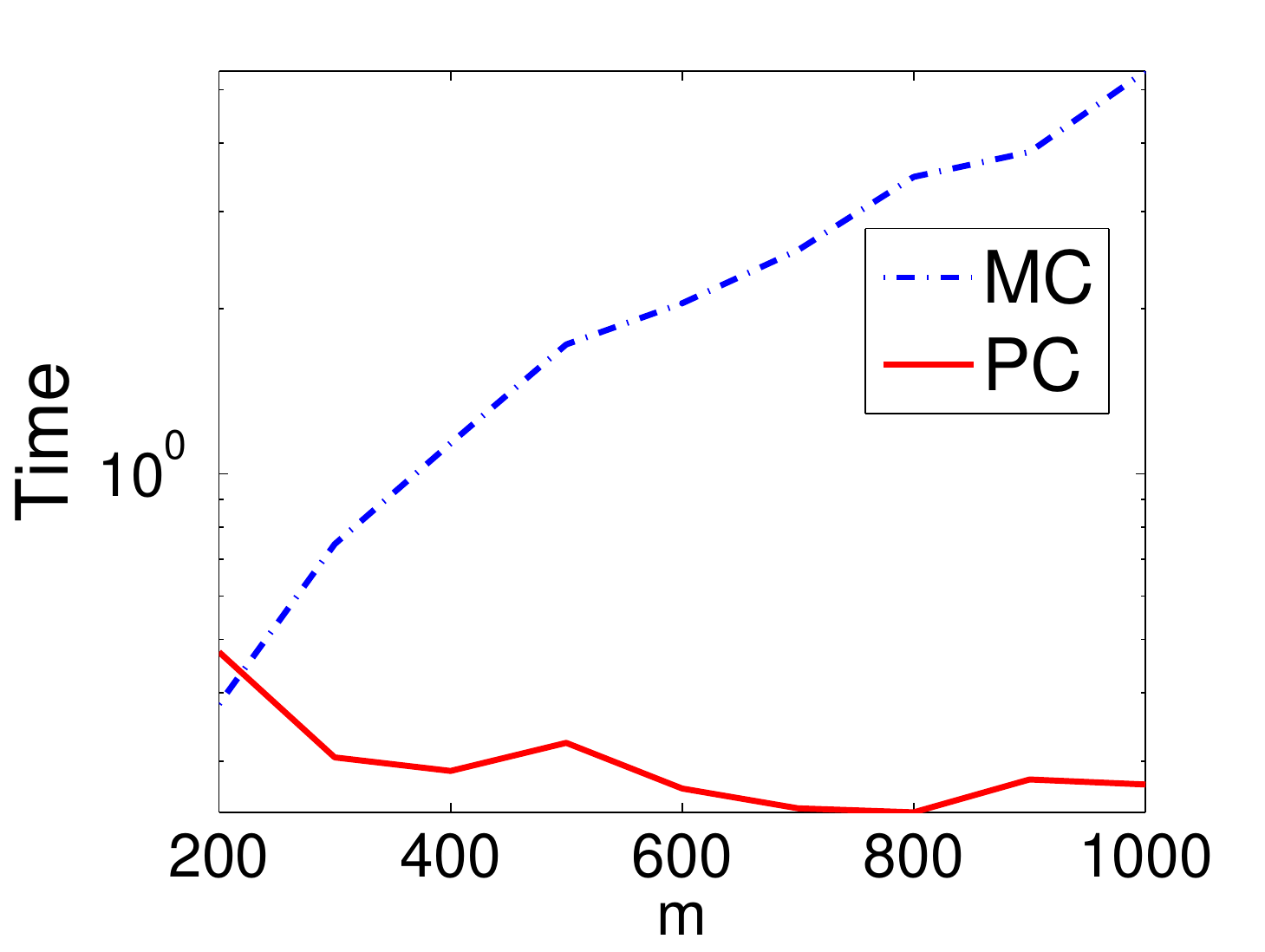}
\end{minipage}
\begin{minipage}{0.23\textwidth}
\includegraphics[width=4cm]{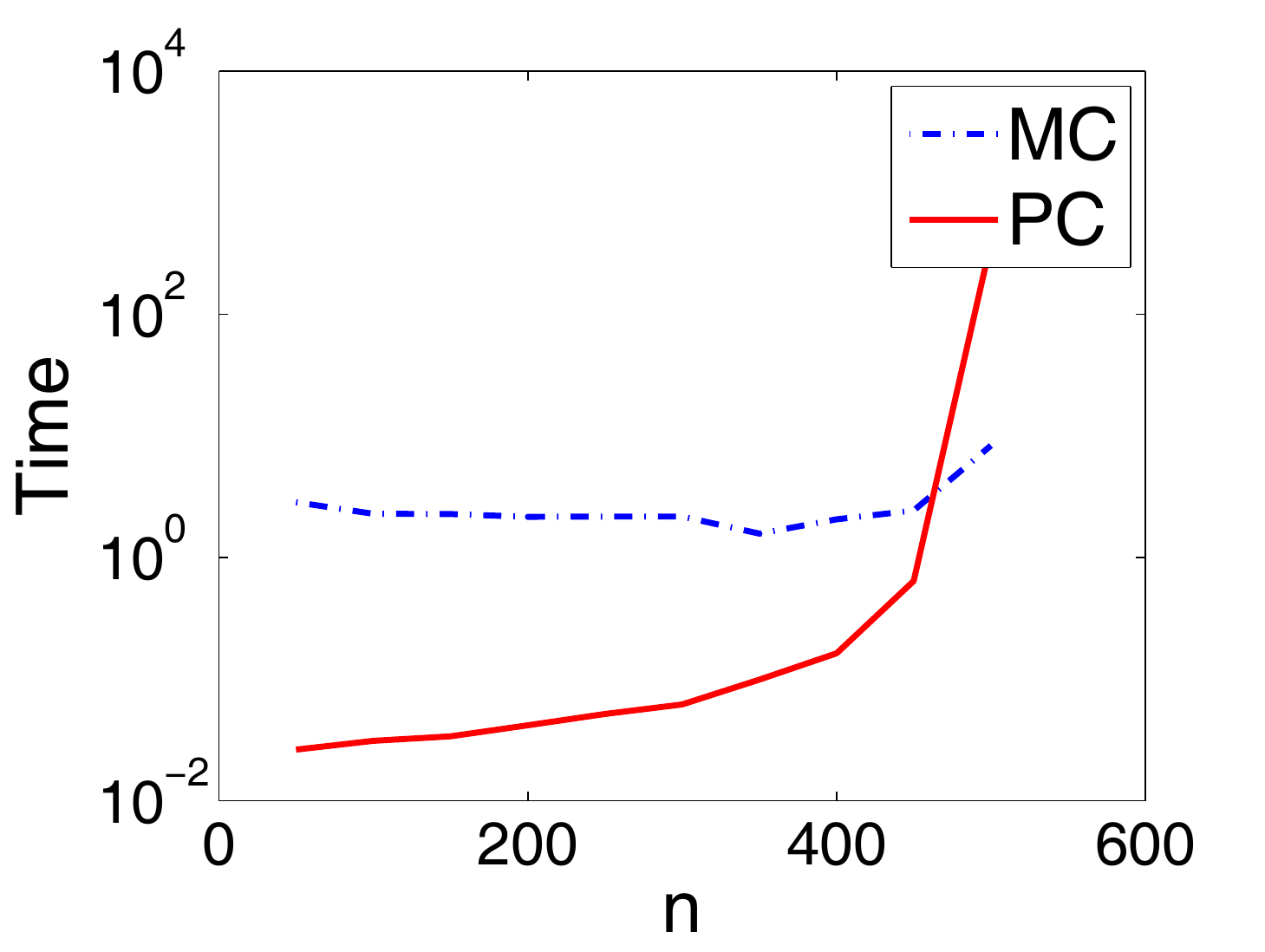}
\end{minipage}
\caption{Running time vs. $m$(left) and $n$(right)}
\label{Fig_time_MC_PC_m}
\end{figure}
\subsubsection{Running Time vs. $n$}
Let $\gamma=0.01$, $max(S)=5$, $m=500$, $n=50:50:100$. Impact of $n$ is shown in Figure \ref{Fig_time_MC_PC_m}(right). Running time PC increases fast with $n$.
In contrast, parameter $n$ does not affect running time of MC.

\subsubsection{Time and Error of Method PC}
Let $\gamma=0.01$, $max(\textbf{S})=5$, $m=1000$, $n=100$, $\beta=sum^2(|\textbf{A}_i|diag(\textbf{S}./\boldsymbol\alpha))$.
Theorem \ref{theo-complexity-2} implies that running time of PC is affected by $\beta$.
Experimental evaluation in Figure \ref{Fig_time_MC_PC_S_alpha}(left) confirms the result.
The error defined in Definition \ref{def-error-u} is also measured in Figure \ref{Fig_time_MC_PC_S_alpha}(right).
Because we can set $\gamma$ in the first place, the error of PC does not increase with $\beta$ and any other parameters.
\begin{figure}[h!]
\begin{minipage}{0.23\textwidth}
\centering
\includegraphics[width=4cm]{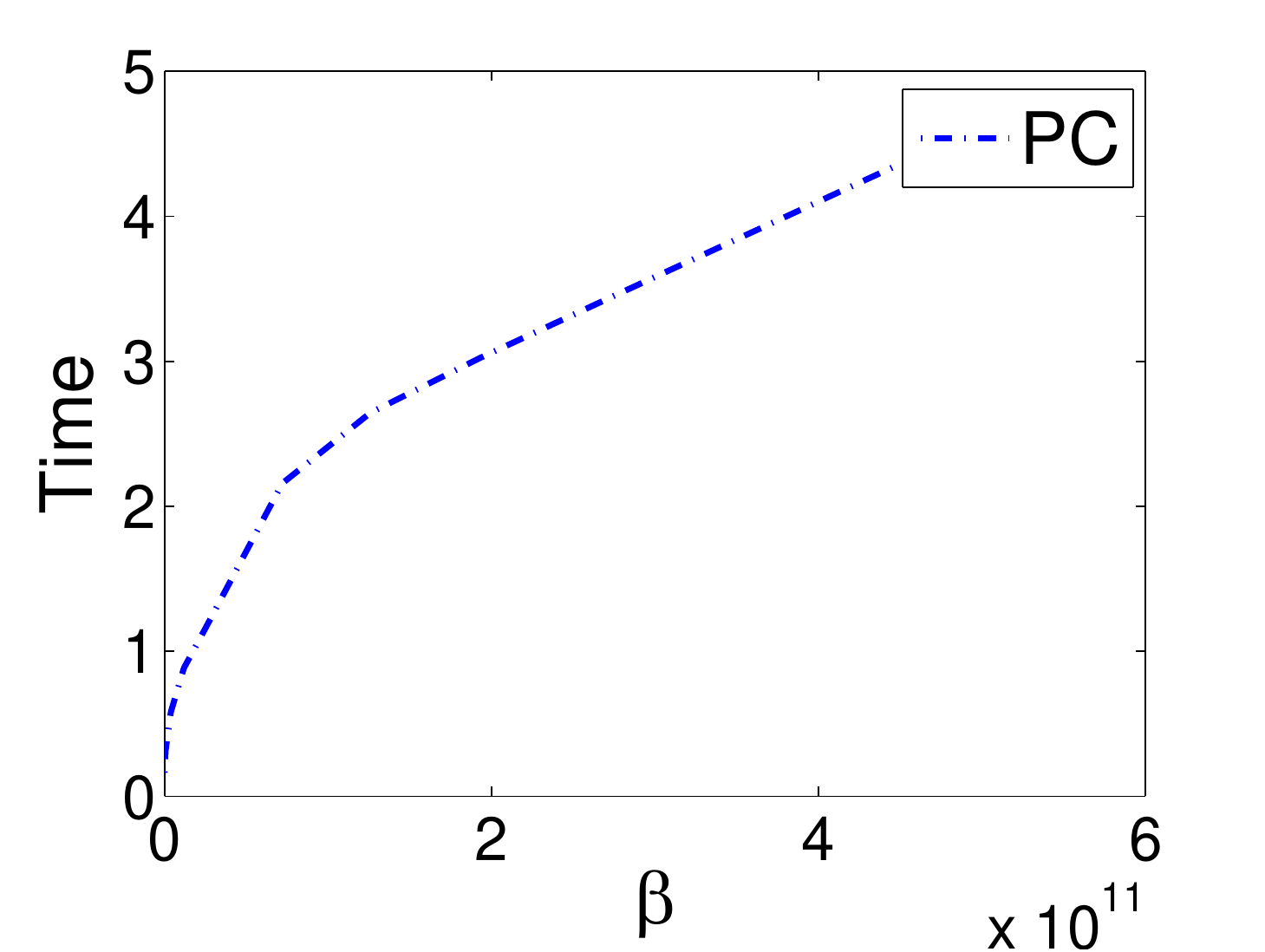}
\end{minipage}
\begin{minipage}{0.23\textwidth}
\centering
\includegraphics[width=4cm]{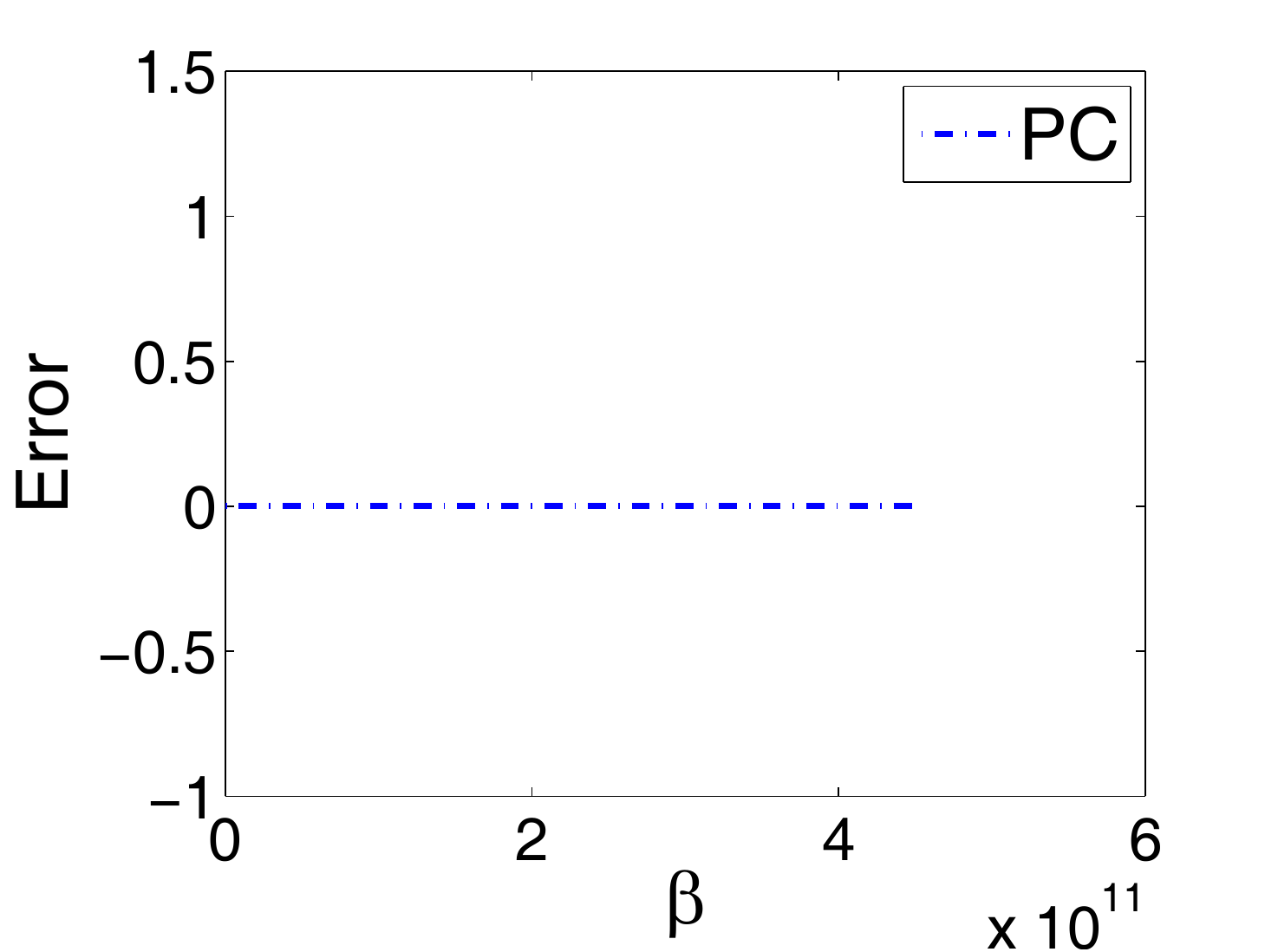}
\end{minipage}
\caption{Time(left) and Error(right) of PC vs. $\beta$}
\label{Fig_time_MC_PC_S_alpha}
\end{figure}
\subsubsection{Time and Error of Method MC}
Let $n=100$. The sample size $m_s$ impacts both time and error of MC method. Let $m_s=10^5:10^5:10^6$. The time and error
is evaluated in Figure \ref{Fig_time_MC_sample_size} on left and right respectively. When $m_s$ increases,
the running time rises and the error declines, which complies with Theorem \ref{theo-complexity-1} and Corollary\ref{corollary-error-MC}.
\begin{figure}[h!]
\begin{minipage}{0.23\textwidth}
\centering
\includegraphics[width=4cm]{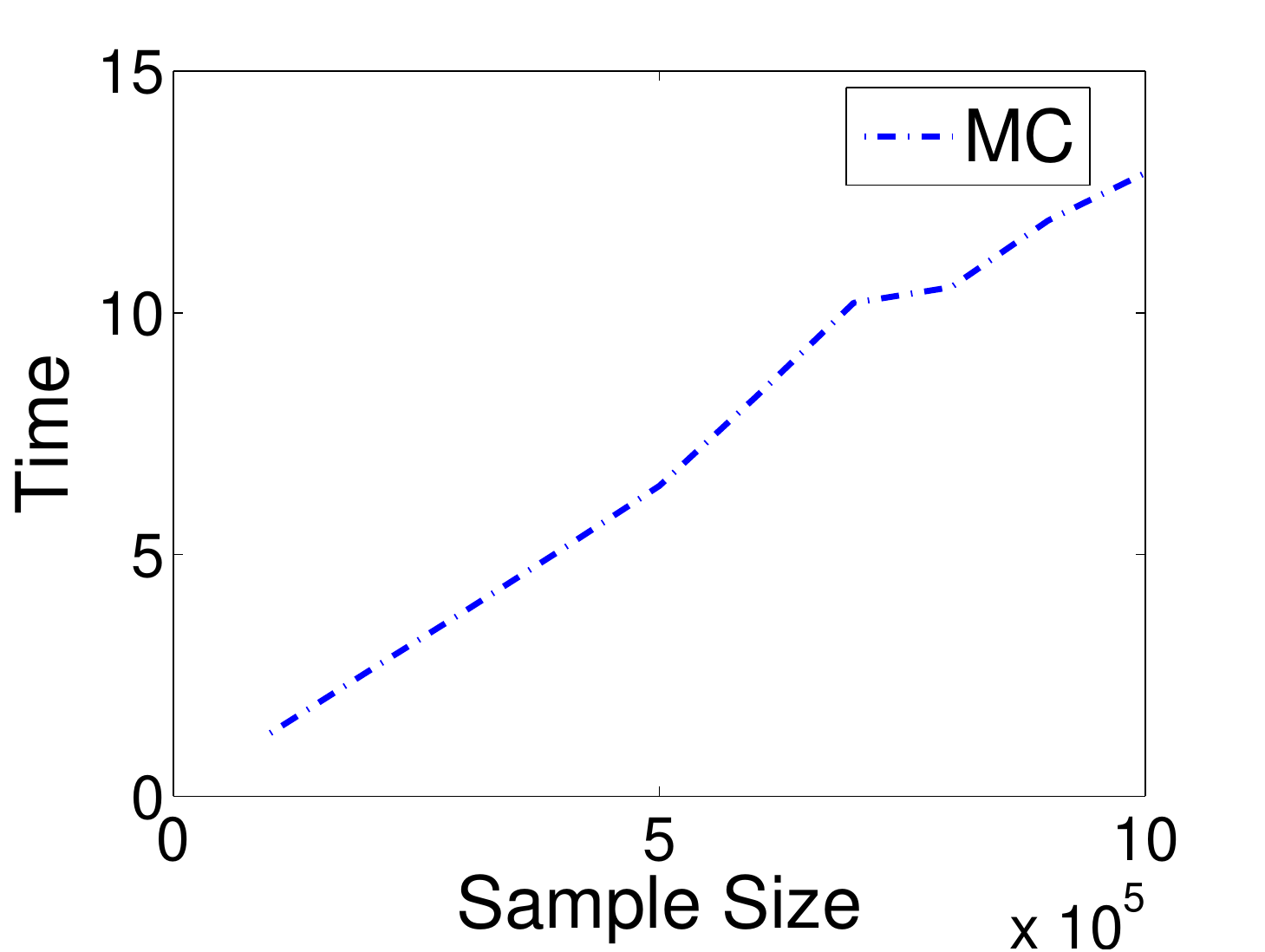}
\end{minipage}
\begin{minipage}{0.23\textwidth}
\centering
\includegraphics[width=4cm]{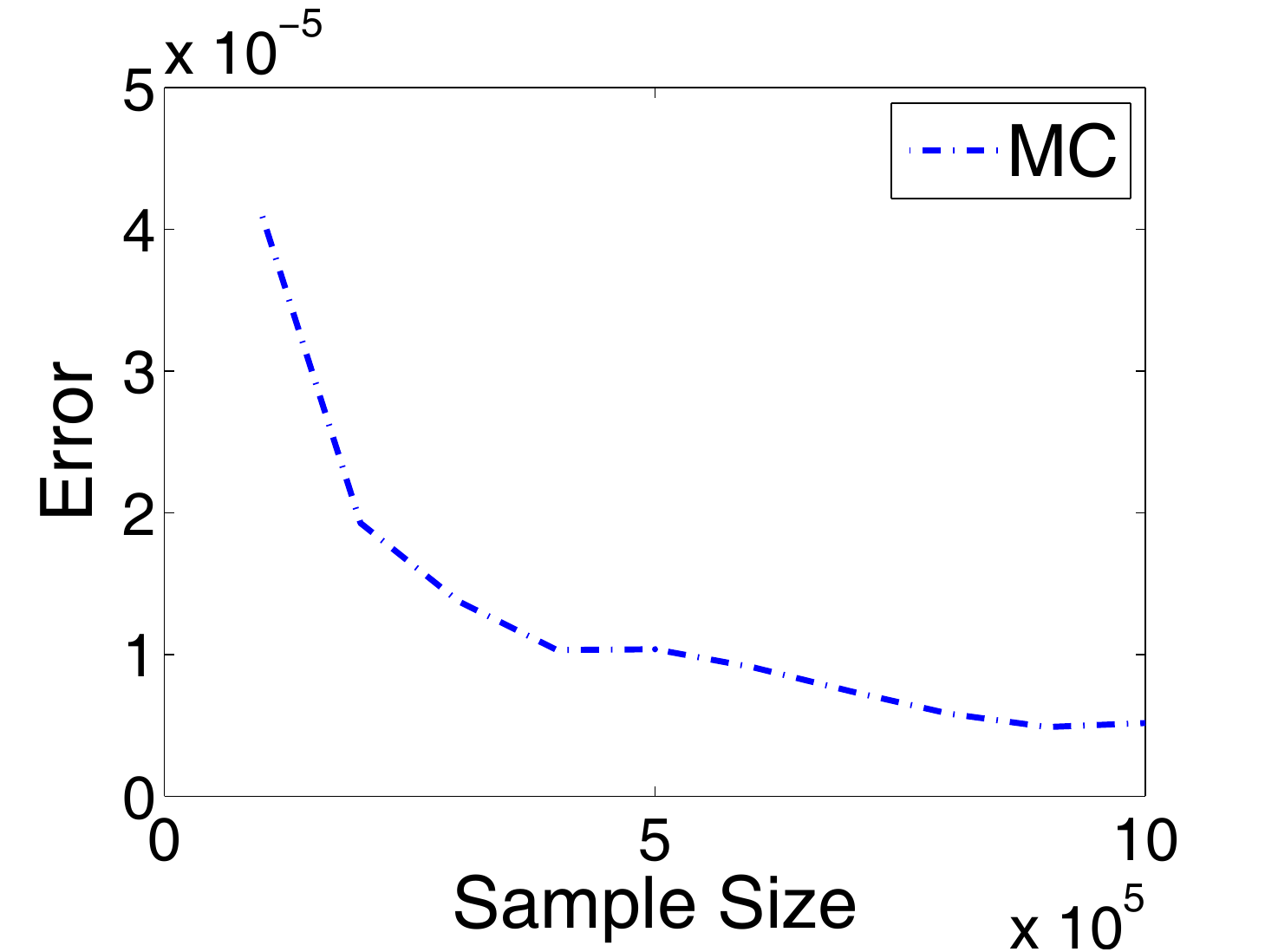}
\end{minipage}
\caption{Time(left) and Error(right) of MC vs. sample Size}
\label{Fig_time_MC_sample_size}
\end{figure}

\subsection{Utility-driven Query Answering System}
\label{sec-exp-system}
\textbf{Data.} We use the three data sets: net-trace, social network and search logs, which are the same data sets with \cite{boost-accuracy}.
Net-trace is an IP-level network trace collected at
a major university; Social network is a graph of
friendship relations in an online social network site; Search
logs is a set of search query logs over time from Jan.1, 2004 to 2011.
The results of different data are similar, so we only show net-trace and search logs results for lack of space.

\textbf{Query.} User queries may not necessarily obey a certain distribution. So the question arises how to generate the testing queries.
In our setting, queries are assumed to be multinomial distributed,
which means the probability for the query to involve cell $x_j$ is $P_j$, where $\sum_{j=1}^n P_j=1$.
As shown in Figure \ref{Fig_MSE_BLUE_time}(right), the running time rises dramatically with $n$.
For efficiency and practicality, we reduce the dimensionality by assuming some regions of interest, where the cells are asked more frequently than others.
Sparse distributed queries can also be estimated with the linear solution in \cite{Improving-utility-PCA}.
Equation (\ref{fml-Query-P}) shows the probability of each cell.
\begin{align}
\label{fml-Query-P}
P_j=0.9 * 10^{-floor(\frac{j-1}{10})}
\end{align}
where the function $floor()$ rounds the input to the nearest integers less than it.

To generate a query, we first generate a random number $n_t$ in $1\sim 10$, which is the $\#$ of independent trails of multinomial distribution.
Then we generate the query from the multinomial distribution, equation (\ref{fml-Q-Pr}) shows the probability of $Q$.
\begin{multline}
\label{fml-Q-Pr}
Pr(Q_1=q_1, Q_2=q_2,\cdots, Q_n=q_n)=\\\frac{n_t!}{q_1! q_2! \cdots q_n!} P_1^{q_1} P_2^{q_2}  \cdots P_n^{q_n}
\end{multline}
where $\sum_j q_j=n_t$.

Finally, we generate 1000 queries for each setting.

\textbf{Utility Requirement}
We assume each query has different utility requirement ($\epsilon$,$\delta$).
Note that $\epsilon$ and $\delta$ determines $\boldsymbol\alpha$ and $\boldsymbol\alpha$ determines $\beta$. Therefore the running time of
PC method is related to $\epsilon$ and $\delta$. Thus we assume $\epsilon$ has a upper bound $10^3$ and is uniformly distributed.

\textbf{Metrics}
Besides system privacy cost $\bar{\alpha}$
in theorem \ref{theo-privacy-cost}, following metrics are used.

For a bounded system, Once the system privacy cost $\bar{\alpha}$ reaches the overall privacy budget, then the system cannot answer further queries.
To measure this, we define the ratio of answered queries as below. In another word, it indicates the capacity(or life span) of a system.
\begin{definition}
\label{def-answer-ratio}
Under the bound of overall privacy budget, given a set of queries \textbf{Q} with utility requirements, the answering ratio $R_a$ is
\begin{align}
R_a=\frac{(\#\ of\ answers:\ Pr(L\leq \theta \leq U)\geq 1-\delta)}{(\#\ of\ all\ queries)}
\end{align}
\end{definition}

Among all the {\bf answered queries}, we want to know the accuracy of the returned answer.
We use following two metrics.
1. $R_i$ is defined to show whether the returned interval really contain the original answer;
2. $E$ is the distance between returned answer and true answer.
It is easy to prove that if $\#$ of answered queries approaches to infinity, $R_i$ converges to confidence level.

\begin{definition}
\label{def-ratio-of-satisfaction}
For a set of queries with utility requirements, the credible intervals $[L,U]$ are returned.
Ratio of reliability $R_i$ is defined as follows:
\begin{align}
R_i=\frac{\sum_{i \in answered\ queries}(\#\ of\ answers:\ L\leq \theta \leq L+2\epsilon_i)}{\#\ of\ all\ answered\ queries}
\end{align}
\end{definition}
Note that in the experiment we know the true value of $\theta$. Therefore, $R_i$ can be derived.

\begin{definition}
\label{def-relative-error}
For each query with ($\epsilon,\delta$) requirement, a returned answer $\hat{\theta}$ is provided by the query answering system.
If the original answer is $\theta$,
the relative error is defined to reflect the accuracy of $\hat{\theta}$:
\begin{align}
\label{eqn-relative-error}
E=\frac{\sum_{i \in answered\ queries}(|\hat{\theta}_i-\theta_i|/(2 \epsilon_i))}{\#\ of\ all\ answered\ queries}
\end{align}
\end{definition}

In summary, $\bar{\alpha}$ indicates the system privacy cost; $R_a$ indicates the capacity(or life span) of a system;
$R_i$ indicates the confidence level of returned credible interval $[L,U]$; $E$ indicates the accuracy of the point estimation $\hat{\theta}$.

\textbf{Settings}
We use PC inference method for efficiency.
The above four metrics are evaluated in two settings. In the first setting, the overall privacy budget is unbounded.
Thus $R_a=1$ because all queries can be answered. Also a set of history queries,
known as $\textbf{H}$, is also given in advance. Although it is not a realistic setting, we can measure how the system can save privacy budget
by our method. In the second setting, the privacy budget is bounded. Thus $R_a\leq 1$.

\subsubsection{Unbounded Setting}
A hierarchical partitioning tree\cite{boost-accuracy} is given using 0.3-differential privacy
as query history to help infer queries.
Let $2\epsilon$ be uniformly distributed in $[50,10^3]$, $\delta=0.2$.

We improve the baseline interactive query-answering system with our budget allocation method(Theorem \ref{theo-budget-allocation}), then it can save privacy budget and achieve better utility.
The improved baseline systems are compared with or without the Bayesian inference technique.
Because we don't allocate privacy budget if the estimate can satisfy utility requirement, privacy budget can be saved furthermore.
In Figure \ref{fig-impact-inference}, $R_i$ and $E$ also become better using inference technique.


%

\vspace{-0.2cm}
\begin{figure}
\begin{minipage}{0.015\textwidth}\vspace{-3cm}$\bar{\alpha}$\end{minipage}
\includegraphics[height=3cm]{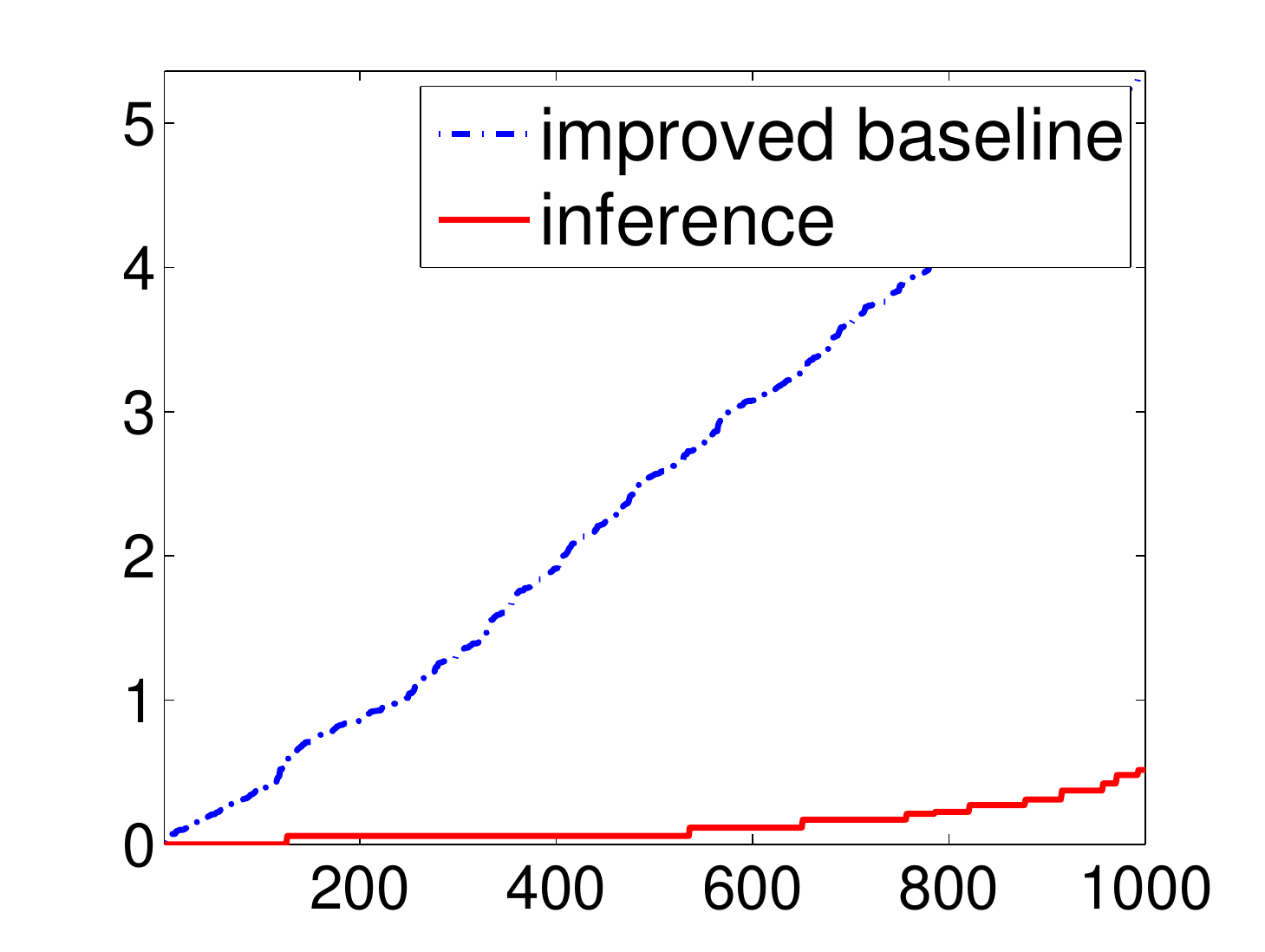}
\includegraphics[height=3cm]{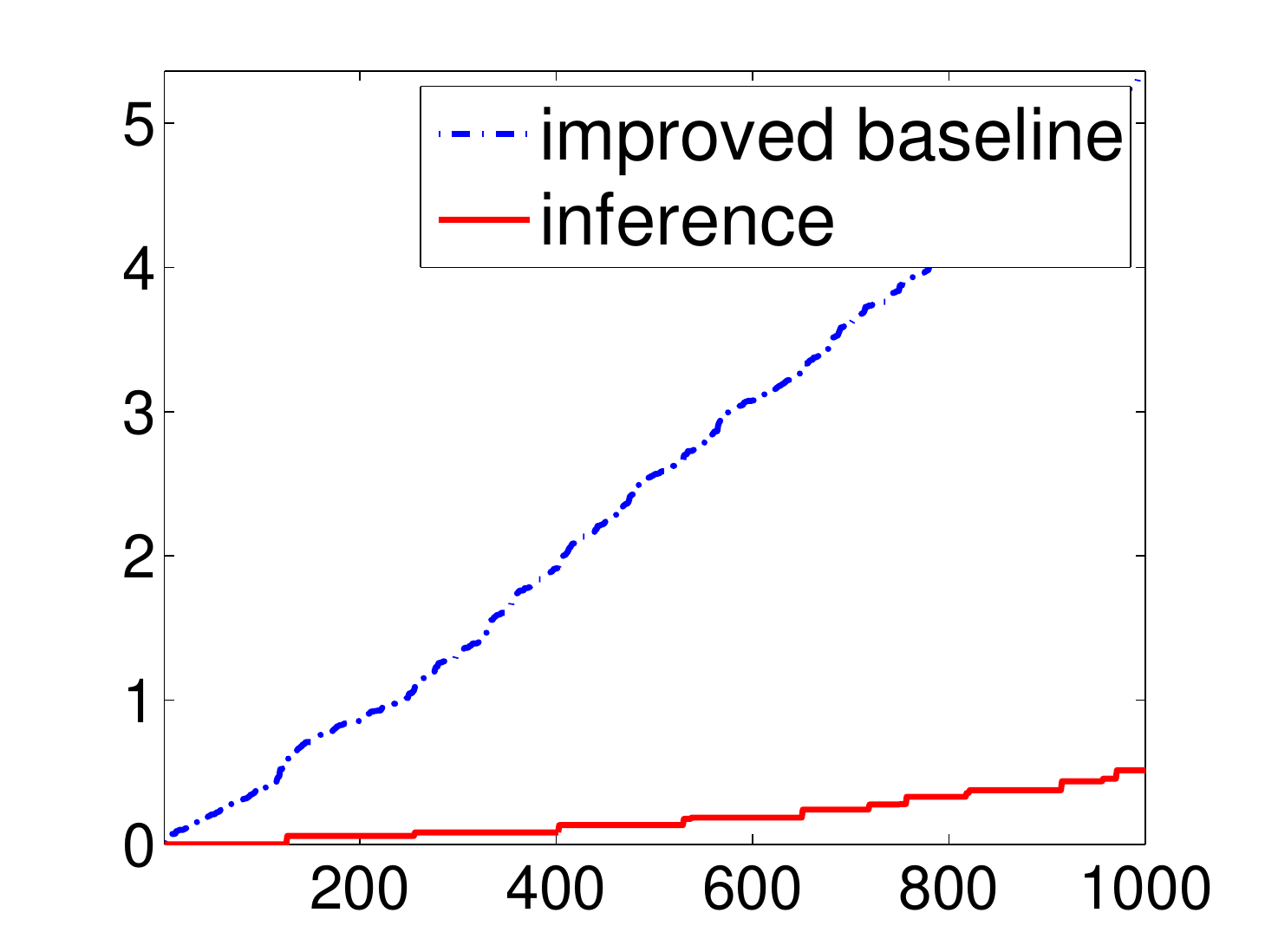}
\begin{minipage}{0.015\textwidth}\vspace{-3cm}$E$\end{minipage}
\includegraphics[height=3cm]{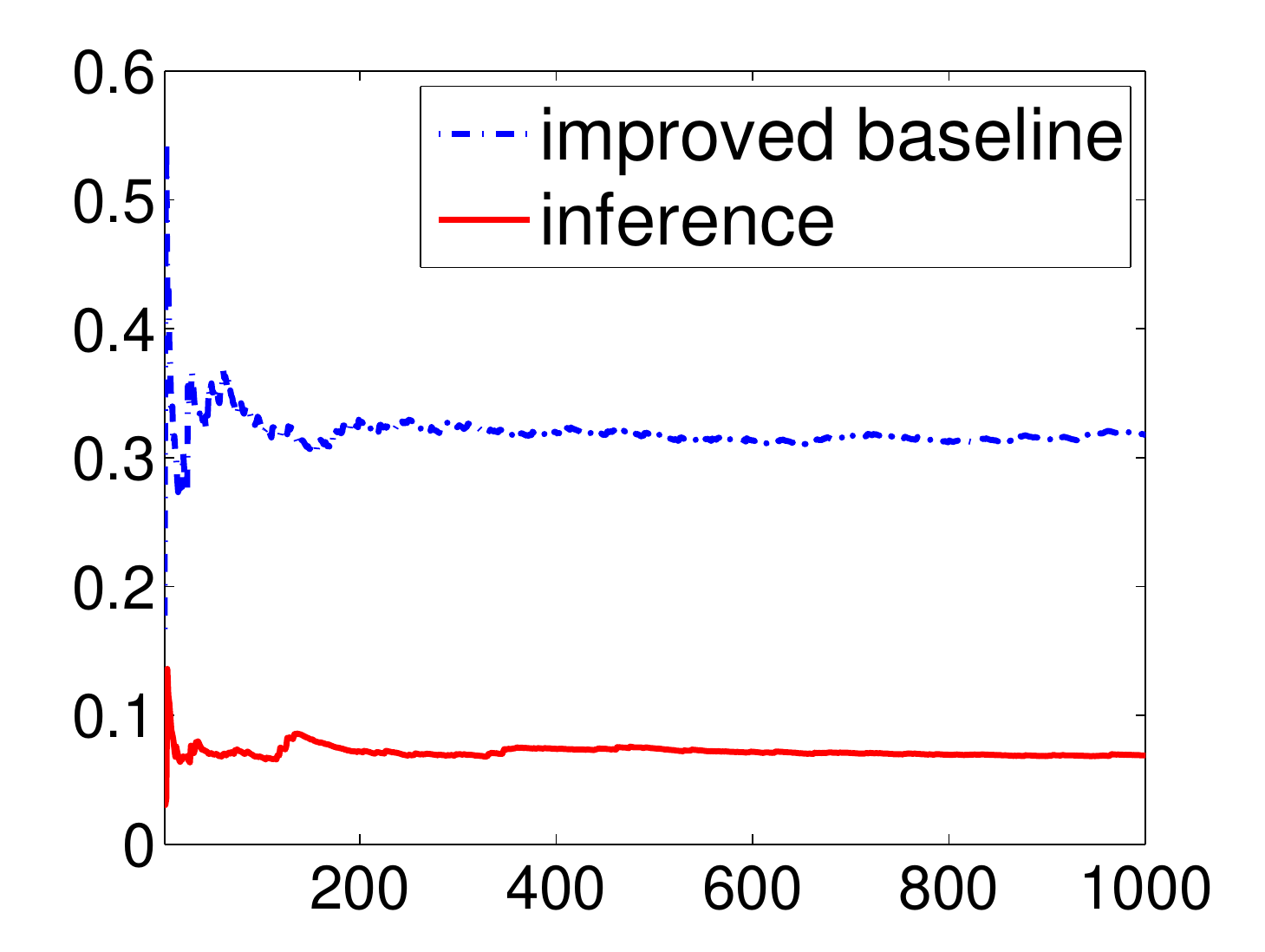}
\includegraphics[height=3cm]{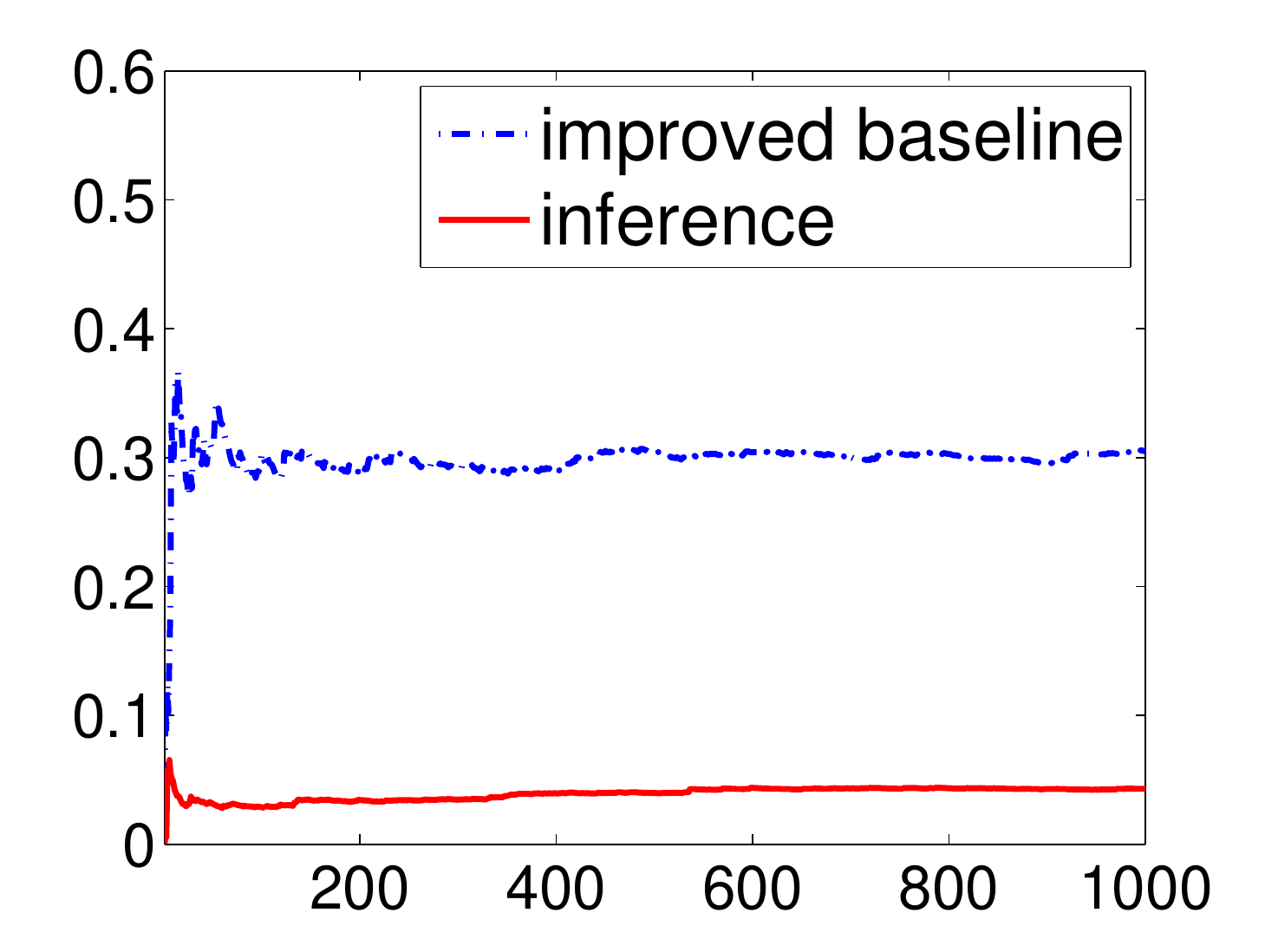}
\begin{minipage}{0.015\textwidth}\vspace{-3cm}$R_i$\end{minipage}
\includegraphics[height=3cm]{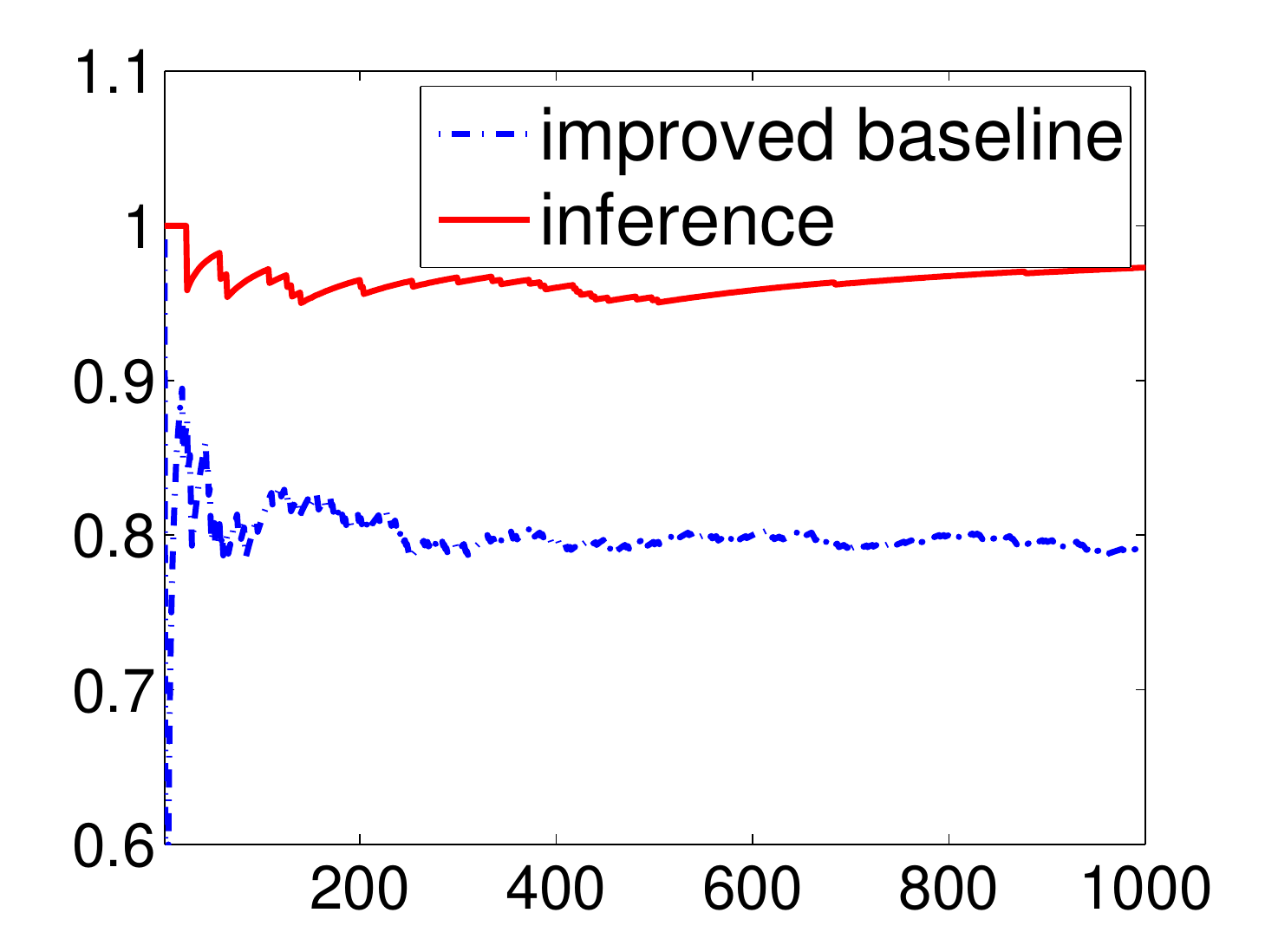}
\includegraphics[height=3cm]{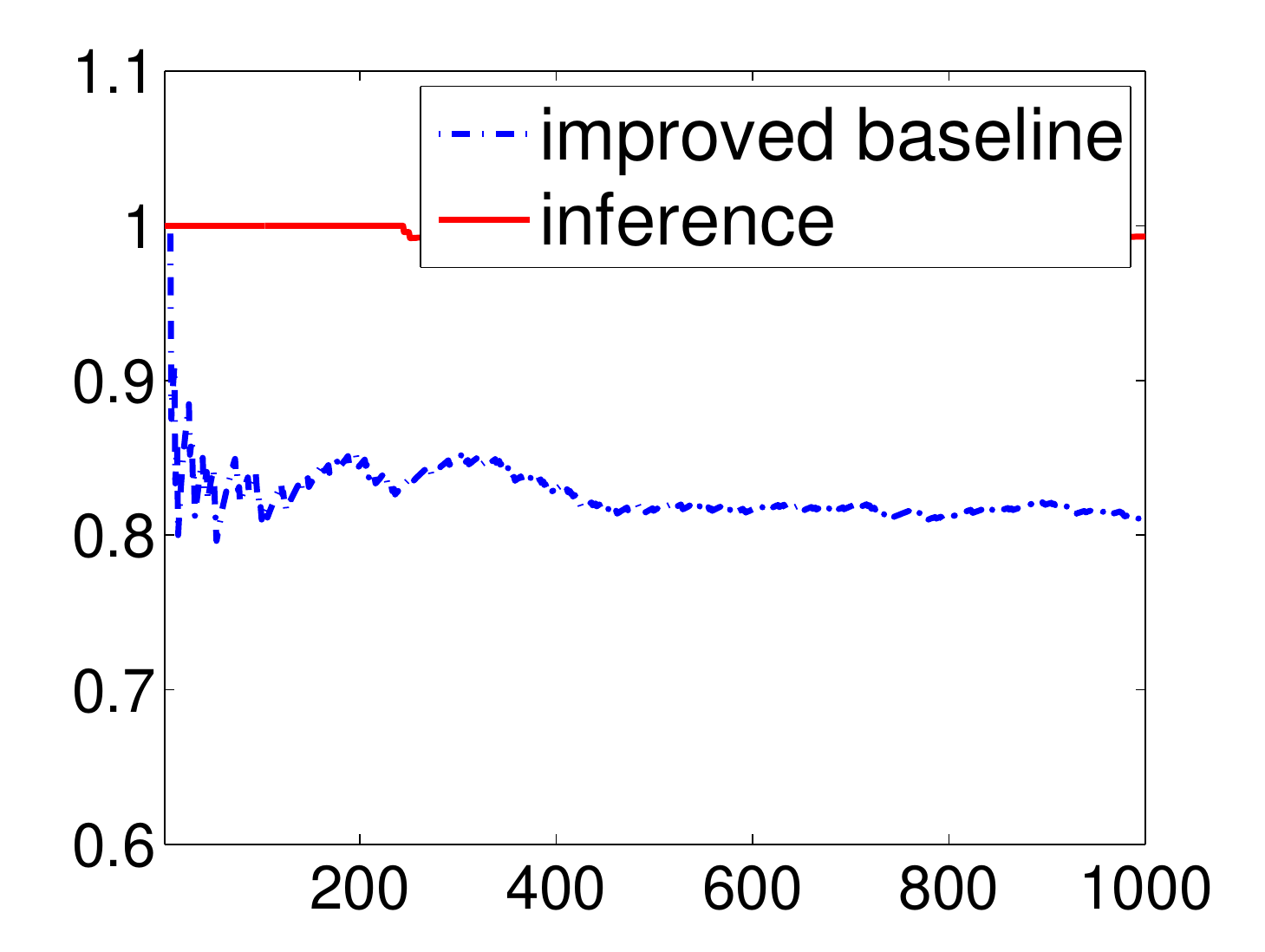}
\vspace{-0.7cm}
\caption{Evaluation of unbounded setting for net-trace(left) and search logs(right). All x-axes represent $\#$ of queries.
First row shows $\bar{\alpha}$; second row shows $E$; third row shows $R_i$. Improved baseline means baseline system with dynamic budget allocation.}
\label{fig-impact-inference}
\end{figure}
\vspace{-0.2cm}
\begin{figure}
\begin{minipage}{0.015\textwidth}\vspace{-3cm}$R_a$\end{minipage}
\includegraphics[height=3cm]{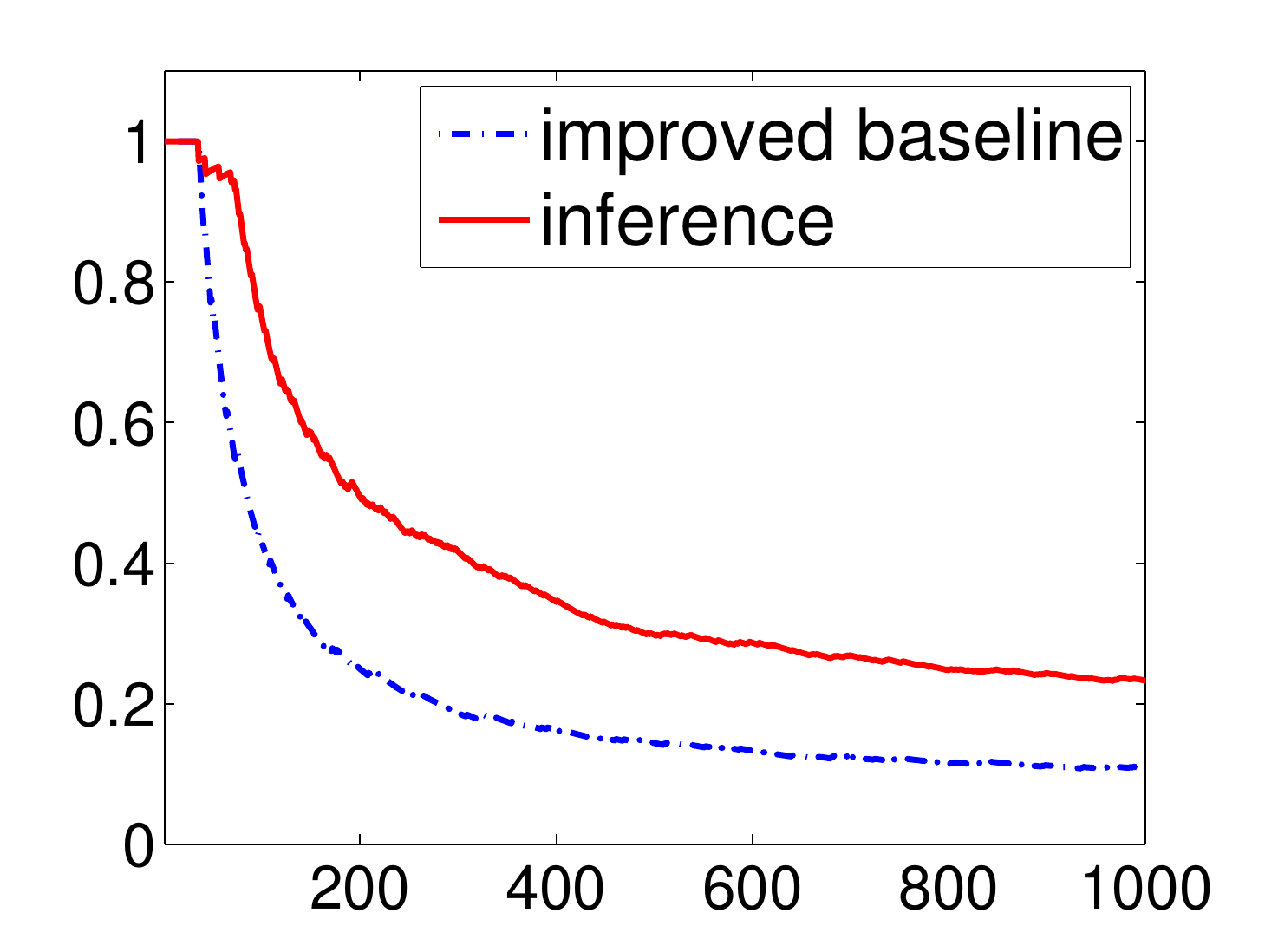}
\includegraphics[height=3cm]{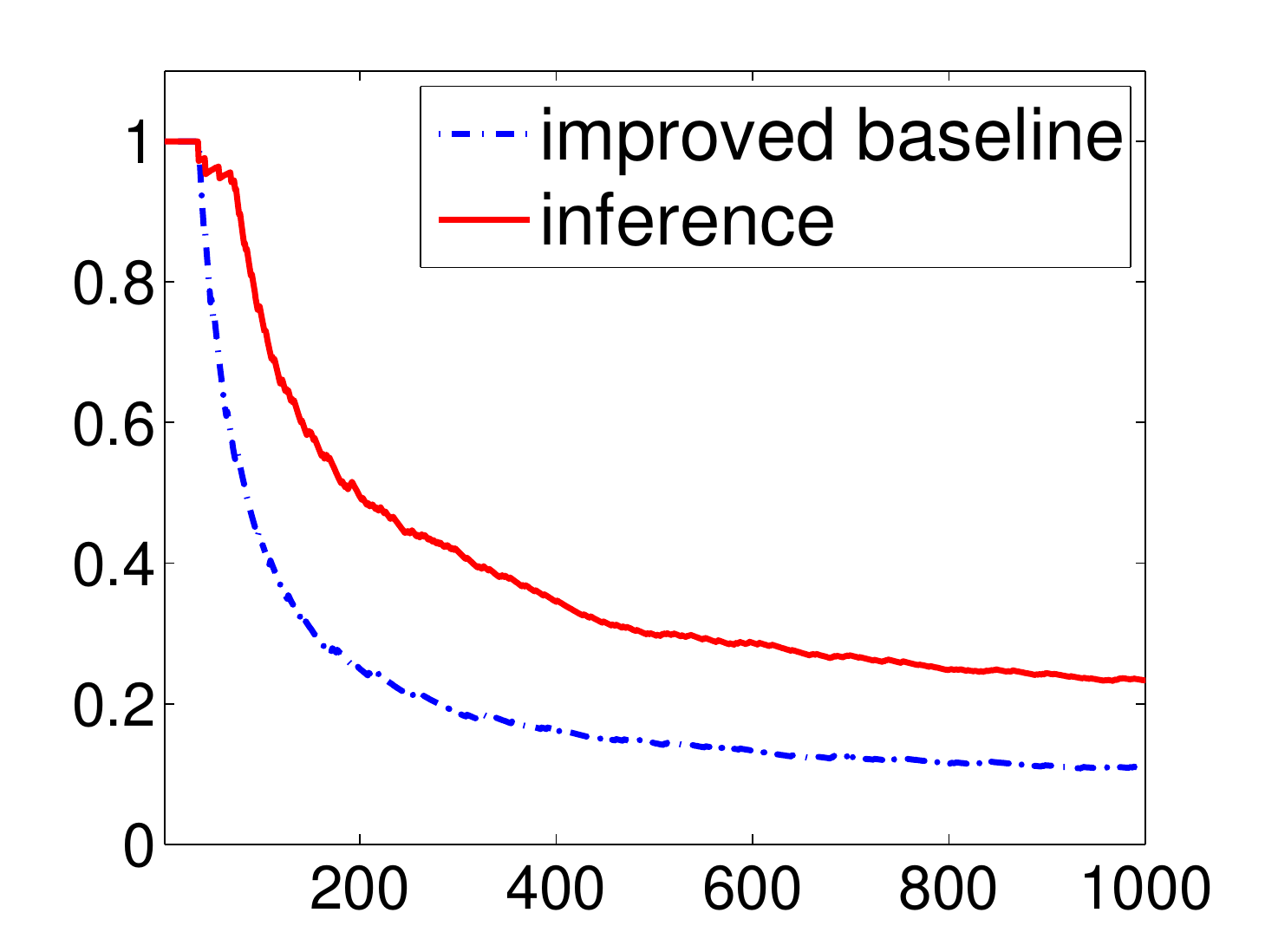}
\begin{minipage}{0.015\textwidth}\vspace{-3cm}$E$\end{minipage}
\includegraphics[height=3cm]{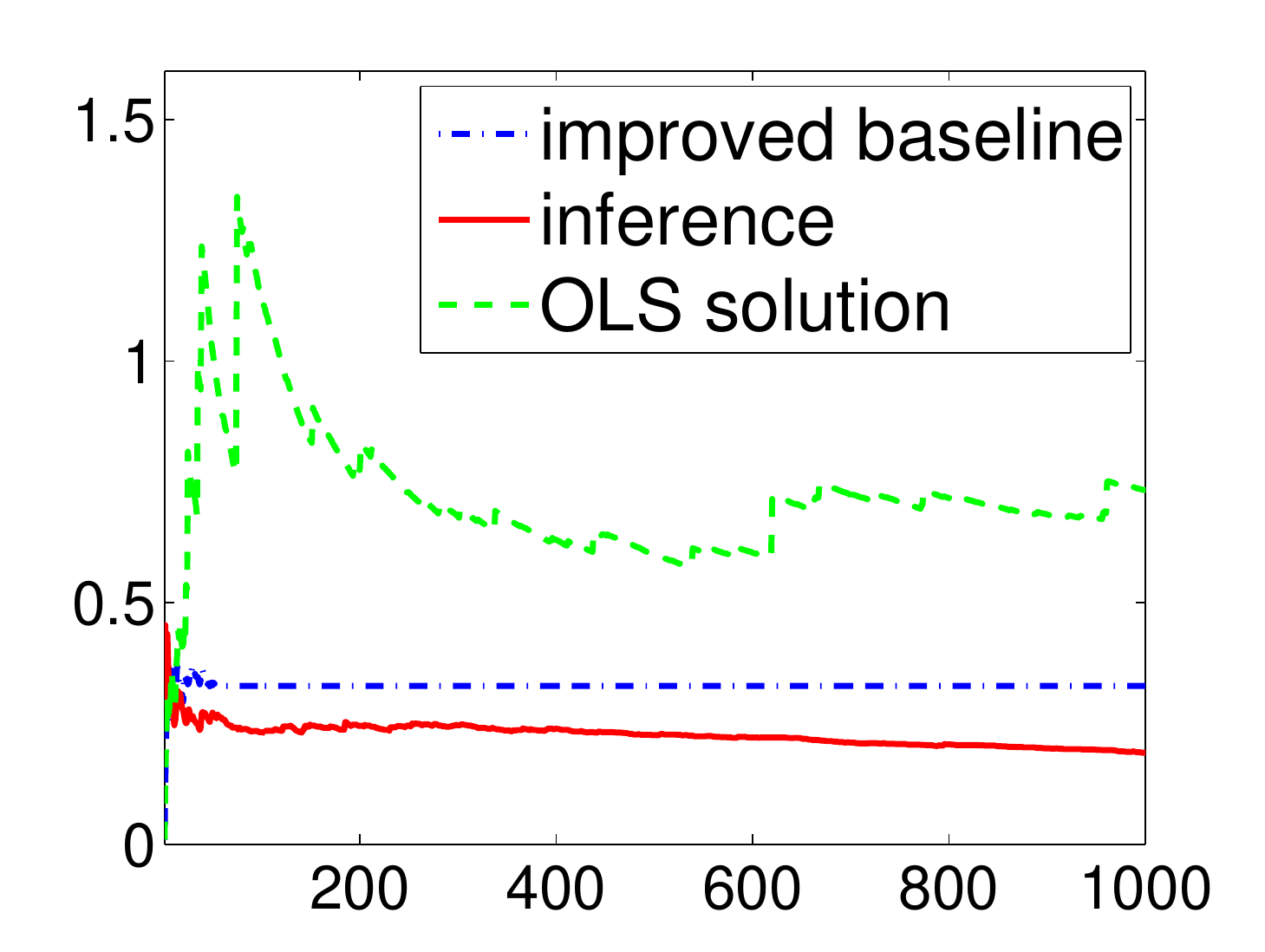}
\includegraphics[height=3cm]{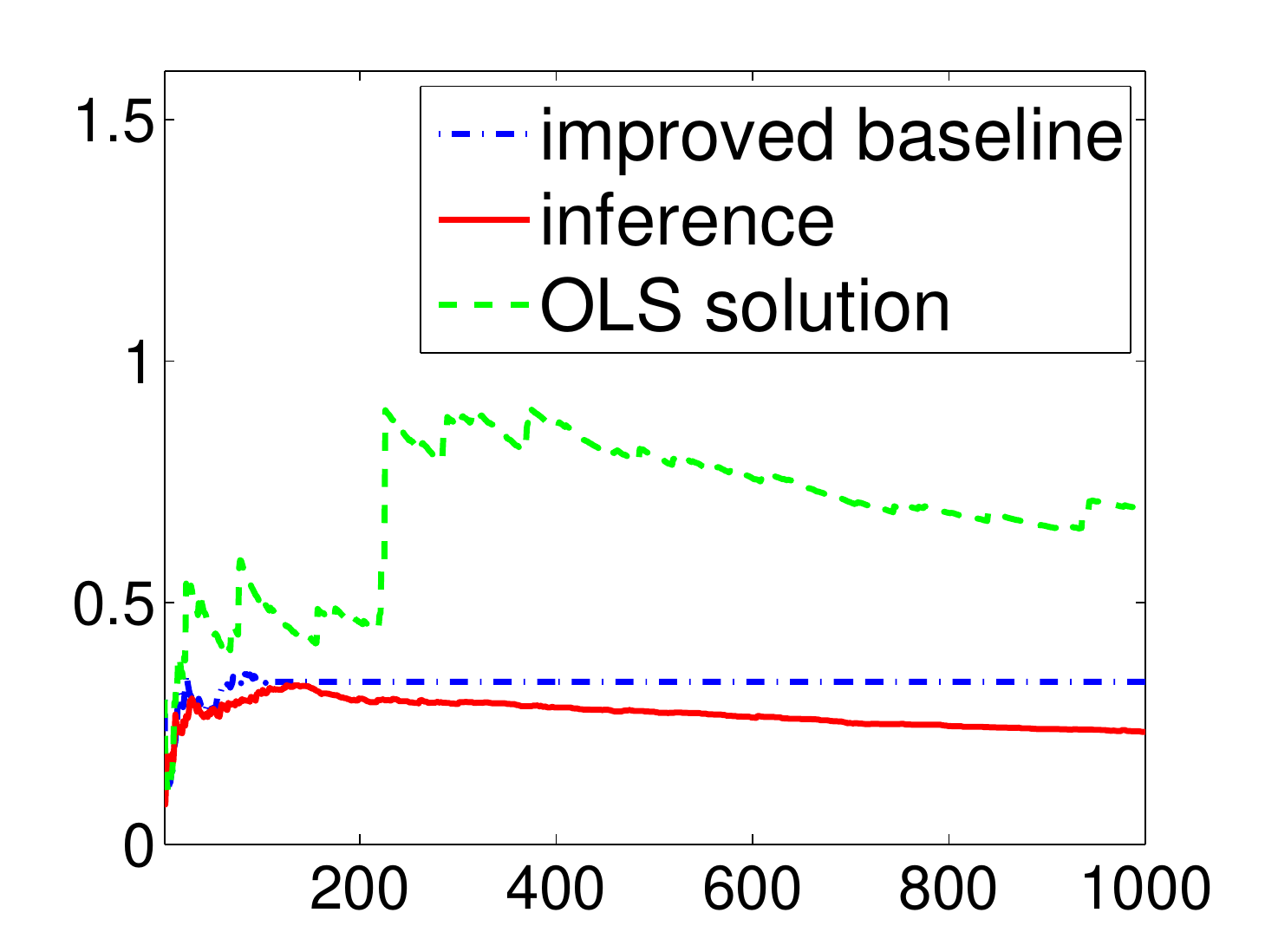}
\begin{minipage}{0.015\textwidth}\vspace{-3cm}$R_i$\end{minipage}
\includegraphics[height=3cm]{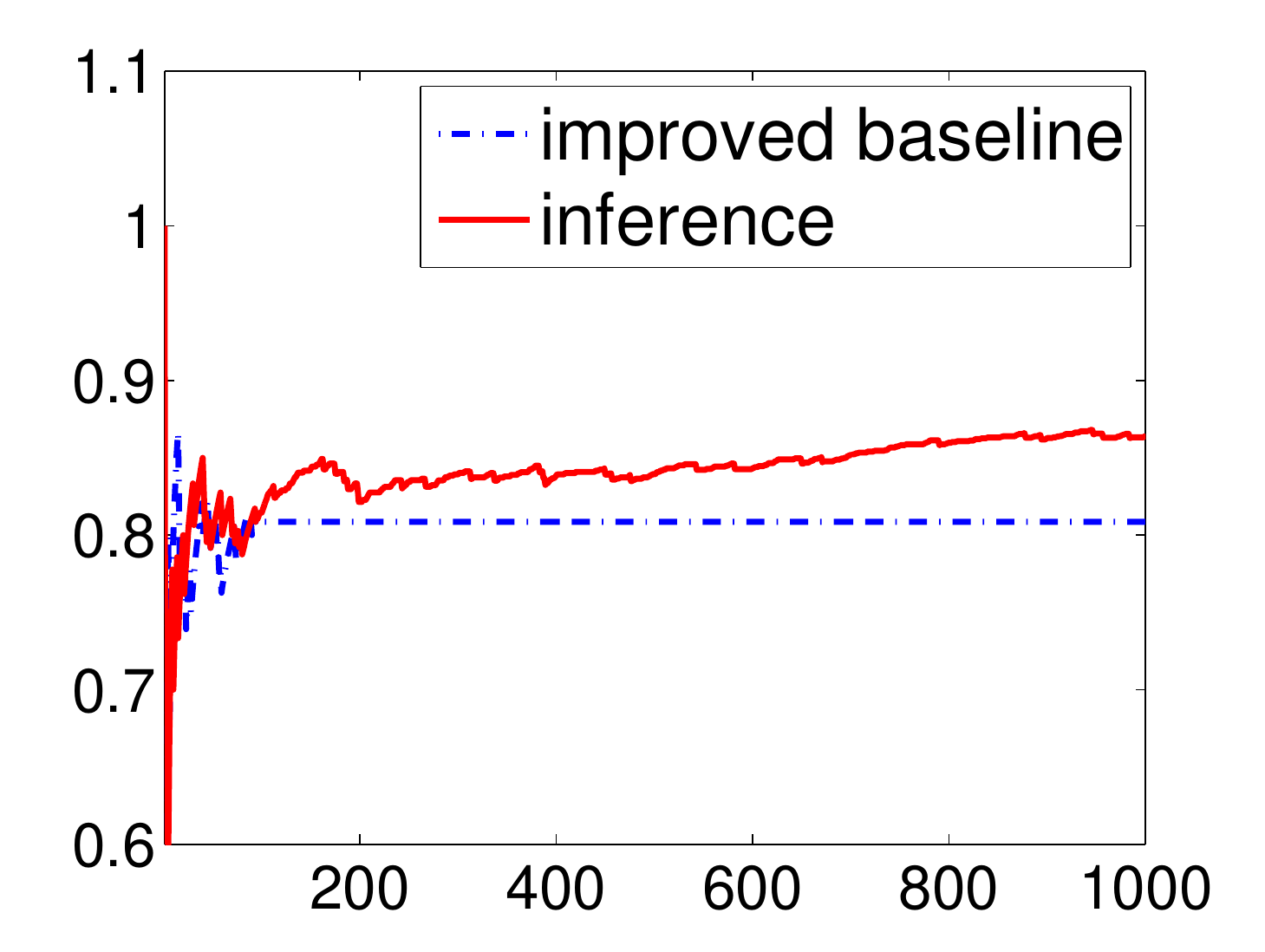}
\includegraphics[height=3cm]{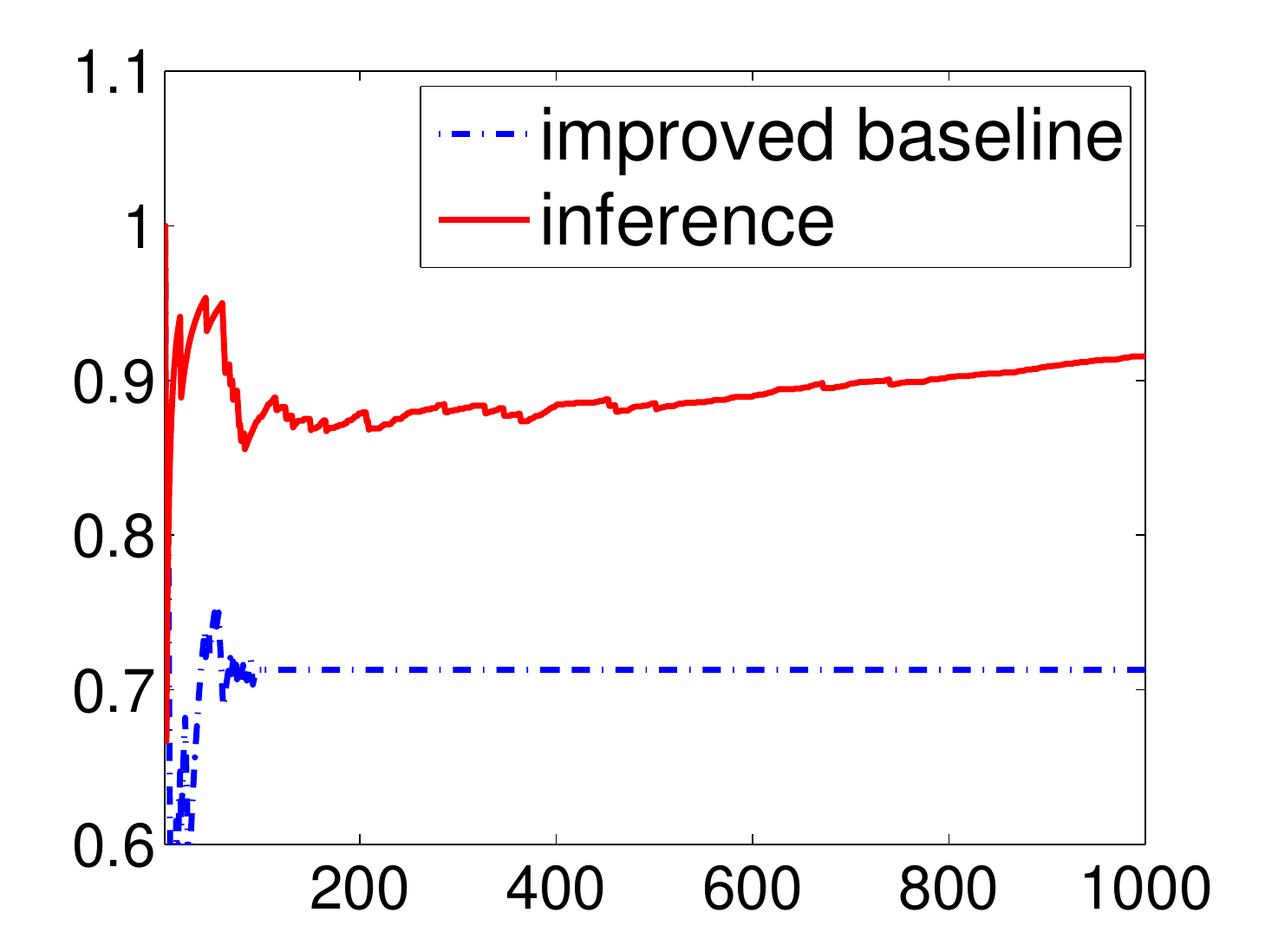}
\vspace{-0.7cm}
\caption{Evaluation of bounded setting for net-trace(left) and search logs(right). All x-axes represent $\#$ of queries.
First row shows $R_a$; second row shows $E$; third row shows $R_i$. Improved baseline and
OLS solution are improved by our method of dynamic budget allocation.
}
\label{fig-real_sys}
\end{figure}

\subsubsection{Bounded Setting}
\label{cha:exp-evaluation-real-system}
For a bounded system, we use both budget allocation and Bayesian inference to improve the performance.
Let the overall privacy budget be $1$;
let $2\epsilon$ be uniformly distributed in $[1,10^3]$, $\delta=0.2$. Figure \ref{fig-real_sys} shows the performance.
%
%

Because $\bar{\alpha}$ reaches the overall privacy budget for about $100$ queries, further queries cannot be answered by improved baseline system. But for our inference system,
they may also be answered by making inference of history queries. This can increase $R_a$. Also $E$ and $R_i$ is better than improved baseline system.
We also compared $E$ of OLS solution\cite{Optimizing-PODS} with budget allocation method.
In our setting, the result is better than OLS solution because OLS is not sensitive to
utility requirement and privacy cost $\boldsymbol\alpha$. For example, if a query requires high utility, say $2\epsilon=5$, then a big $\alpha$ is
allocated and a small noise is added to the answer. But OLS solution treats all noisy answers as points without any probability properties.
Therefore, this answer with small noise contributes not so much to the OLS estimate. Then according to Equation (\ref{eqn-relative-error}),
$E$ becomes lager than the inference result.

Note that $E$ and $R_i$ is not as good as Figure \ref{fig-impact-inference}. There are two reasons. First, we don't have any
query history at the beginning. So the system must build a query history to make inference for new queries. Second,
when $\bar{\alpha}$ reaches the overall privacy budget, the query history we have is actually the previous answered queries,
which are generated randomly. Compared with the hierarchical partitioning tree\cite{boost-accuracy}, it may not work so well.
We believe delicately designed query history\cite{SDMpaper,Optimizing-PODS,boost-accuracy} or auxiliary queries\cite{iReduct}
can improve our system even better, which can be investigated in further works.

\begin{small}
\bibliographystyle{abbrv}
\bibliography{ref/exportlist,ref/privacy}
\end{small}
\section{Appendix}
\label{app-notations}
\begin{table}[htbp]
\label{tbl-notations}
\begin{tabular}{p{45pt}p{170pt}}
\hline
{\textbf{H}} &matrix of query history;\\
$\textbf{x}$ & original count of each cell;\\
$\textbf{y}$ & noisy answer of $\textbf{H}$, $\textbf{y}=\textbf{Hx}+\tilde{\textbf{N}}$;\\
$\alpha$ & privacy budget for one query;\\
$\bar{\alpha}$ & privacy cost for the whole system;\\
$L$ and $U$ & upper and lower bound of credible interval;\\
$\epsilon$ & $\epsilon=(U-L)/2$;\\
$\delta$ & $1-\delta$ is the confidence level;\\
$\boldsymbol\alpha$& privacy cost of each row in $\textbf{H}$;\\
$\textbf{S}$& the sensitivity of each row in $\textbf{H}$;\\
$\theta$ or Q(D)& original answer of query $Q$;\\
$\tilde{N}(\alpha)$& Laplace noise with parameter $\alpha$;\\
$\mathcal{A}_Q(D)$& returned answer of Laplace mechanism: $\tilde{\theta}=\theta+\tilde{N}(\alpha/S)$;\\
$\tilde{\theta}$ & observation of a query: $\tilde{\theta}=\theta+N$;\\
$\hat{\theta}$ & estimated answer of $Q(D)$;\\
$\tilde{\textbf{N}}$ & noise vector of $\textbf{y}$;\\
\hline
\end{tabular}
\end{table}

%
%
%

\end{document}